\theoremstyle{definition}
\newtheorem{theorem}{Theorem}
\newtheorem{problem}[theorem]{Problem}
\newtheorem{definition}[theorem]{Definition}
\newtheorem{remark}[theorem]{Remark}
\newtheorem{lemma}[theorem]{Lemma}
\newcommand{\FF}{\mathbb{F}}
\newcommand{\MG}{\mathcal{G}}
\newcommand{\MH}{\mathcal{H}}
\newcommand{\MU}{\mathcal{U}}
\newcommand{\ML}{\mathcal{L}}
\newcommand{\MS}{\mathcal{S}}
\newcommand{\MR}{\mathcal{R}}
\newcommand{\mc}[1]{\mathcal{#1}}
\newcommand{\DG}{\mc{D}^{\mc{G}}}
\newcommand{\Danc}{\mc{D}^{\text{anc}}}
\DeclareMathOperator{\init}{Init}
\DeclareMathOperator{\supp}{supp}
\DeclareMathOperator{\rank}{rank}
\DeclareMathOperator{\rn}{null}
\DeclareMathOperator{\im}{im}
\DeclareMathOperator*{\argmin}{arg\,min}
\title{Improved QLDPC Surgery: Logical Measurements and Bridging Codes}
\author[1]{Andrew W.~Cross}
\author[2]{Zhiyang He (Sunny)}
\author[3]{Patrick J.~Rall}  
\author[1]{Theodore J.~Yoder}
\affil[1]{IBM Quantum, IBM T.J. Watson Research Center}
\affil[2]{Massachusetts Institute of Technology}
\affil[3]{IBM Quantum, IBM Research Cambridge}
\begin{document}

\maketitle

\begin{abstract} 
In this paper, we introduce the gauge-fixed QLDPC surgery scheme, an improved logical measurement scheme based on the construction of Cohen et al.~(Sci.~Adv.~8, eabn1717). 
Our scheme leverages expansion properties of the Tanner graph to substantially reduce the space overhead of QLDPC surgery.
In certain cases, we only require $\Theta(w)$ ancilla qubits to fault-tolerantly measure a weight $w$ logical operator.
We provide rigorous analysis for the code distance and fault distance of our schemes, and present a modular decoding algorithm that achieves maximal fault-distance. 
We further introduce a bridge system to facilitate fault-tolerant joint measurements of logical operators. 
Augmented by this bridge construction, our scheme can connect different families of QLDPC codes into one universal computing architecture.

Applying our toolbox, we show how to perform all logical Clifford gates on the [[144,12,12]] bivariate bicycle code. Our scheme adds 103 ancilla qubits into the connectivity graph, and one of the twelve logical qubits is used as an ancilla for gate synthesis. Logical measurements are combined with the automorphism gates studied by Bravyi et al.~(Nature 627, 778-782) to implement 288 Pauli product measurements. We demonstrate the practicality of our scheme through circuit-level noise simulations,
leveraging our proposed modular decoder that combines BPOSD with matching.
\end{abstract}
\tableofcontents

\section{Introduction}
Reliable quantum computing at scale requires encoding quantum information into error-correcting codes and acting on that information without spreading errors too quickly \cite{aharonov2008ft,kitaev1997imperfect,knill1998resilient,aliferis2005threshold,gottesman2009intro}. This necessitates a fault-tolerant architecture that both stores quantum information and also computes with it \cite{Campbell2017Roads}. The surface code provides the foundation for one of the leading fault-tolerant quantum computing architectures \cite{bravyi1998codes,dennis2002memory,fowler2012surface,Litinski_2019}. It has a high accuracy threshold, allowing it to tolerate realistic noise rates, and its low-weight, geometrically local check operators can be measured on hardware that implements a two-dimensional lattice \cite{krinner2022repeated,google2023suppressing,sundaresan2023multiround}.

The technique of lattice surgery is an essential element of the surface code architecture that enables computation at realistic noise rates within the same local connectivity of the code itself \cite{Horsman2012LatticeSurgery}. Lattice surgery provides a way to measure products of logical Pauli operators by measuring many low-weight geometrically local checks. Not all of these checks belong to the code when they are measured, so the code changes dynamically by a process called gauge fixing \cite{vuillot2019deformation}. Sequences of lattice surgery operations implement the subset of logical gates called Clifford gates. The surface code architecture is completed by incorporating magic state injection and distillation to construct a universal set of gates \cite{bravyi2005magic,li2015magic,Litinski_2019,Chamberland_2022}.

However, the surface code architecture comes with significant overhead. Resource estimates for scaled applications indicate that surface code architectures may require many millions of qubits \cite{gidney2021factoring,beverland2022assessing}. One of the reasons for this daunting space overhead is that codes that are constrained to have local checks in two dimensions, such as the surface code, have parameters that are governed by the Bravyi-Poulin-Terhal bound \cite{bravyi2010tradeoffs}. This bound limits how efficiently codes with geometrically local checks encode quantum information. An implication is that, because the surface code is the exemplar of all such geometrically local codes, it is difficult to find better codes under these constraints.

One way forward is to relax the requirement of geometric locality, while retaining the need for qubits and checks to connect to only a few neighbors. Such high-rate low-density parity check (LDPC) codes can have much better parameters, both in terms of their encoding rate and their ability to correct errors \cite{tillich2014hgp,panteleev2021degenerate,breuckmann2021ldpc,breuckmann2021balanced,panteleev2022good,leverrier2022tanner,dinur2023good}. Using these codes, fault-tolerant quantum computing becomes possible in principle with only constant space overhead in the asymptotic limit \cite{gottesman2014constant,fawzi2020constant}.

Recently, several protocols for quantum memories based on LDPC codes have been developed \cite{breuckmann2017hyperbolic,poulsen2017fault,higgott2023hyperbolic,bravyi2024highthreshold,xu2024constant}. Efficient memories could be used in concert with the surface code to reduce space overhead. In some instances, these memory protocols simultaneously have high thresholds and low overheads at practical code block sizes, making them attractive candidates for near-term demonstrations \cite{bravyi2024highthreshold}. A growing body of research aims to find improved codes \cite{scruby2024highthreshold,voss2024trivariate}, decoders \cite{gong2024decoding,wolanski2024ambiguity}, and implementations for these quantum memories \cite{viszlai2023bicycle,poole2024fast,berthusen2024local}.

Beyond storing information in high-rate quantum LDPC codes, computing directly on that information is an active area of research. There is the need for fast, parallel, and space-efficient gates, and also the challenge that logical qubits are encoded together in the same block, making them difficult to address individually. For stabilizer codes, there is a general solution that teleports qubits or gates and measures logical observables using specially prepared ancilla states \cite{gottesman1997thesis,gottesman1999teleport,steane2005cssgates,knill2005detected}. However, these large Steane- or Knill-type ancilla states are expensive to prepare, so new techniques for logical computation on LDPC codes have been developed \cite{breuckmann2017hyperbolic,lavasani2018hyperbolic,tjoc2019gates,krishna2021gates,cohen2022low,huang2023homomorphic,zhu2023gates}. 
One of these approaches generalizes the idea of lattice surgery to LDPC codes \cite{breuckmann2017hyperbolic,lavasani2018hyperbolic,krishna2021gates,cohen2022low}.

In this paper, we consider the problem of computing with LDPC codes using generalized lattice surgery. Our high level goal is to efficiently measure a set of logical Pauli operators of one or more LDPC code blocks. Using Pauli-based computation \cite{Bravyi_2016,Litinski_2019}, logical Pauli measurements can be used to implement Clifford gates and, provided additional magic states \cite{bravyi2005magic}, non-Clifford gates.
There are two simple, well-known ways to measure a weight $d$ logical operator: individually measure each qubit in the support of the logical operator and infer the logical qubit state, or entangle a single ancilla qubit to the operator using $d$ CNOT gates and measure the ancilla. It is easy to see that the first method can be highly disruptive to the remaining logical information, while the second method requires qubit degree $d$ and is not fault-tolerant, since an error on the ancilla can propagate to many other qubits. The problem of designing logical measurement schemes, therefore, is to obtain the measurement result fault-tolerantly while maintaining sparse connectivity and reasonable space-time overhead. 

We measure the quality of a logical measurement scheme in terms of four metrics: space-time overhead, parallelism and applicability, connectivity, and fault-tolerance. Let us now detail these metrics.
\begin{enumerate}
    \item Space-time overhead: A scheme makes use of a system of ancilla qubits to implement a set of measurements. The total number of qubits in this system is an important metric. A circuit acting on the original code and the ancilla system enacts the measurement, so the depth of this measurement circuit is another key metric.

    \item Parallelism and applicability: A scheme should be applicable to various settings, including single-logical qubit measurements, multi-logical qubit joint measurements, and joint measurements across multiple 
    code blocks. A scheme should also support parallel measurements of commuting logical operators.

    \item Connectivity: Extending the code with an ancilla system means adding more connections between qubits. Connections are needed between pairs of qubits that interact. In this paper, the connections correspond to edges in the Tanner graph of a quantum code. We measure this cost in terms of qubit degree, which is the maximum number of neighboring qubits connected to a single physical qubit.

    \item Fault-tolerance: There are several ways to measure the fault-tolerance of a scheme. In this paper, we study the following measures.
    \begin{enumerate}
        \item Code distance: During the measurement procedure, the base LDPC code changes into a merged code. The logical information is protected by a subsystem code with all the logical qubits from the original code (except the one being measured), and some gauge qubits which carry no logical information. Ideally, this subsystem code should have distance $d$, so that the unmeasured logical information remains well-protected.
        
        \item Fault distance: Besides the subsystem code distance, we also consider the phenomenological fault distance of logical measurement schemes, which is the minimum number of measurement or qubit errors needed to produce a logical error or an incorrect logical measurement outcome without triggering any detectors. A logical measurement scheme should have fault distance $d$. We further distinguish between the logical fault distance and the measurement fault distance in our analysis.
   
        \item Circuit distance: The procedure is implemented by a noisy quantum circuit whose gates and other circuit locations can fail. The circuit distance is the minimum number of faults needed to produce a logical error or incorrect logical measurement outcome without triggering any detectors. Ideally, a logical measurement circuit should have circuit distance $d$.
    \end{enumerate}
\end{enumerate}

In an earlier work~\cite{cohen2022low}, Cohen, Kim, Bartlett, and Brown introduced a scheme based on an ancilla system which is a hypergraph product code. {We henceforth refer to their scheme as the CKBB scheme} and explain its mechanism in Section~\ref{sec:CKBBschemes}. For a logical operator of weight $d$, the CKBB ancilla system has $2d-1$ layers, each layer isomorphic to the Tanner graph of the measured operator, totaling to a space overhead of $O(d^2)$.
This ancilla system can be naturally extended to support joint measurements.
The scheme then use $d$ rounds of QEC on the merged code to obtain the logical measurement result.  
If the original code has qubit degree $\delta$ and distance $d$, the merged code will have degree $\delta+1$ and distance $d$. 
While the CKBB scheme has many desirable properties, its space overhead is daunting: for typical LDPC codes where the distance scales as $d \sim \Omega(\sqrt{n})$, the ancilla system needed for measuring one logical operator may be larger than the code block itself. This poses a significant barrier to utilizing the CKBB scheme in practice.

In this paper, we propose \textit{gauge-fixed QLDPC surgery} as an improved version of the CKBB scheme with significantly lower space overhead. 
Our main theoretical contributions can be summarized as follows.
\begin{enumerate}
    \item We gauge-fix the CKBB ancilla system so that the merged code contains no gauge qubits.
The resulting \textit{gauge-fixed ancilla system} ensures that the merged code has distance at least $d$, while using much fewer than $2d-1$ layers if the Tanner graph induced by the logical operator is expanding (Theorem~\ref{thm:Xdistance}). In cases where the graph is sufficiently expanding, the space overhead for measuring a weight $w$ logical operator is just $\Theta(w)$.
\item As a result of reducing the number of layers, the constructions of joint measurement systems from \cite{cohen2022low} are no longer distance preserving. We propose new, distance preserving constructions of $Y$ and joint measurement systems, utilizing a set of \textit{bridge} qubits (Section~\ref{sec:jointmeasurement} and~\ref{sec:ysystem}). 

    \item We show that our gauge-fixed system with bridge can be used as an \textit{code adapter}, which can fault-tolerantly connect two logical operators from two arbitrary code blocks (Section~\ref{sec:bridge}). This paves a path towards large-scale architectures based on multiple families of QLDPC codes.
    
    \item We present a rigorous analysis of fault distance, and prove that $d$ rounds of syndrome extraction ensures fault distance $d$ for all of our proposed measurement schemes (Theorem~\ref{thm:logical_and_measurement_distance},\ref{thm:XX_faultdistance_sameblock},\ref{thm:Y_faultdistance}), including the code adapter (Theorem~\ref{thm:XX_faultdistance_twoblocks}).

    \item Utilizing such an ancilla system increases the decoding complexity of the merged code. 
    We propose a modular decoding algorithm on the $d$-round space-time decoding graph of the merged code (Section~\ref{sec:modular_decoder}) which decomposes the decoding graph into two parts, one supported on the original code, the other supported on the gauge-fixed ancilla system, and decodes them separately. 
    We prove that this modular decoder achieves fault distance $d$. 
\end{enumerate}
Our gauge-fixed surgery scheme comes with a noteworthy caveat: without additional assumptions on the measured operator or the code, we do not have upper bounds on the weight of the gauge-fixed operators. 
In other words, gauge-fixing these operators may break the LDPC property on the merged code. 
We discuss this caveat in Section~\ref{sec:checkweight}, where we show theoretical evidence (Lemma~\ref{lem:redundancy}) suggesting that in the average case, these gauge checks may not break the LDPC property.
Our bridge construction also lacks a worst case guarantee on the LDPC property, which we discuss in Section~\ref{sec:bridge}. 
We remark that after this work first appeared, these caveats are fully resolved in the schemes proposed by~\cite{williamson2024low} and~\cite{swaroop2024universal}. 
See section~\ref{sec:related_works} for a more detailed discussion on related works. 

Our theoretical analysis offers strong support for our practical optimizations and numerical simulations on the $[[144, 12, 12]]$ gross code, a 6-limited bivariate bicycle code \cite{bravyi2024highthreshold}. Our main practical contributions are as follows.
\begin{enumerate}
    \item We demonstrate that a mono-layer ancilla system with 103 additional qubits suffices to fault-tolerantly measure eight distinct logical Pauli operators, which significantly improves on the state of the art as summarized in Table~\ref{tab:summarygross}. We verified numerically with CPLEX that all eight measurement protocols have fault distance $12$.
    
    \item In the connectivity graph of the merged code there are exactly two qubits with degree 8, and the remaining qubits have degree 7 or less. The use of two degree-8 qubits is not a side effect of gauge-checks, but instead a result of using one ancilla system for multiple distinct Pauli measurements. For details, please see Figure~\ref{fig:gross_ancilla_counts}.

    \item We implemented a modular decoder using a combination of BPOSD and matching, and find that it gives comparable logical error rates to a fully BPOSD decoder with significantly faster decoding time (Figure~\ref{fig:matching}).

    \item Using the modular decoder, we numerically study the logical error rates of $\bar{X}$ and $\bar{Z}$ logical measurements under circuit-level noise and find they are within a factor of 10x and 5x larger than that of the base code, respectively (Figure~\ref{fig:numericalexperiments}). The measurement error rate is approximately 10x lower than the logical error rate, and we find that 7 repetitions of the merged code syndrome measurement optimizes the logical and measurement error rates at a physical error rate of 0.001.

    \item Finally, we combine permutation automorphisms of the gross code with logical measurements to implement a set of 288 native Pauli measurements on the 12 logical qubits. These measurements yield a further set of 95 Pauli rotations $\{\exp(i\frac{\pi}{4}P)\}$ that generate the eleven-qubit Clifford group on the gross code block. 
    Furthermore the twelve logical qubits can be individually measured. 
\end{enumerate}

\begin{table}[H]
    \centering
    \begin{tabular}{| c || c | c | c |}
    \hline
    & Ref~\cite{bravyi2024highthreshold} & Ref~\cite{cowtan2024ssip} & This work \\
    \hline 
    Measurement set & $X(f,0)$, & $X(f,0)$, & $\bar{X}:=X(p,q)$, $\bar{Z}:=Z(r,s)$, \\
    & $Z(h^\top ,g^\top )$ & $Z(h^\top ,g^\top )$ & $\bar{X}':=X(ws^\top ,wr^\top )$, $\bar{Z}':=Z(wq^\top ,wp^\top )$, \\
    & & & $\bar{Y}:=i\bar{X}\bar{Z}$, $\bar{Y}':=i\bar{X}'\bar{Z}'$, \\
    & & & $\bar{X}\bar{X}'$, $\bar{Z}\bar{Z}'$ \\
    \hline
    Ancilla system size  
    & 1380 & 180 & 103 \\
    \hline 
    Number of layers & 23 & 5 & 1 \\
    \hline 
    Maximum qubit degree & 7 & 7 & 8 \\
    \hline 
    \end{tabular}
    \caption{Parameters of logical measurement schemes for the $[[144,12,12]]$ gross code. Each scheme allows one measurement from the measurement set to be performed at a time. The Pauli operators in the measurement set are presented using the notation of \cite{bravyi2024highthreshold}. The number of layers is the maximum over all merged codes in the scheme. We note that \cite{cowtan2024ssip} presents different versions with varying amounts of parallelism and size; we choose a version that is most similar to \cite{bravyi2024highthreshold}.}
    \label{tab:summarygross}
\end{table}

This practical progress reshapes our vision of a medium-to-large scale computing architecture using bivariate bicycle codes. We further remark that most of our results on the gross code are better than our worst case theoretical analysis. 
In particular, the fact that a mono-layer system is distance-preserving for all eight measurement protocols is a pleasant surprise. For this reason, to apply the gauge-fixed surgery scheme to other families of QLDPC codes, we suggest using our theoretical analysis as an ``instruction manual'', and expect case-by-case optimized results to be better.

The paper is organized as follows. Section~\ref{sec:prelim} establishes notation and reviews the CKBB measurement scheme. Section~\ref{sec:gauged} presents our gauge-fixed surgery scheme for CSS codes and all aforementioned theoretical analysis. In Section~\ref{sec:gross_code}, we apply our theoretical toolbox to construct an ancilla system for the $[[144,12,12]]$ gross code, numerically investigate its performance in the circuit model, and synthesize logical Clifford gates on the code block.

\subsection{Related work}~\label{sec:related_works}
We remark on a collection of related works. 
The independent work of~\cite{cowtan2024ssip} appeared shortly before ours, in which QLDPC surgery is studied in the language of chain complexes and homology. 
\cite{cowtan2024ssip} further applied the CKBB construction to various small-scale QLDPC codes, and presented numerical results showing that in concrete instances, the space-overhead is often much less than the $O(d^2)$ upper bound, which is reflected here in Table~\ref{tab:summarygross}.

Since this work first appeared, a few others have followed with improvements to the central ideas. First, while we point out here that expansion existing in the Tanner graph of the code can drastically reduce the space-overhead of QLDPC surgery, the techniques of~\cite{williamson2024low} and \cite{ide2024fault} can simply add the required expansion to the ancilla system to achieve the same or better space overhead. In addition,~\cite{williamson2024low,ide2024fault} think of the ancilla system as built on a graph and this perspective has other benefits in that certain cellulation \cite{hastings2021weight,williamson2024low,ide2024fault} and decongestion \cite{freedman2021building,hastings2021weight,williamson2024low} ideas can be applied to guarantee that the gauge-fixed ancilla system remains LDPC. 
Second, while we show that any two ancilla systems can be joined via a bridge to measure a joint logical operator,~\cite{swaroop2024universal} constructs the bridge/adapter in such a way to guarantee the newly introduced checks are still LDPC. This construction relies also on the graphical perspective from~\cite{williamson2024low,ide2024fault}. 
Third, both~\cite{xu2024fast} and \cite{zhang2024time} present approaches to parallelizing related logical operations and appeared around the same time as our work. 
Ref.~\cite{xu2024fast} applied the technique of homomorphic measurements \cite{huang2023homomorphic} to homological product codes, while 
Ref.~\cite{zhang2024time} proposed parallelized QLDPC surgery methods applicable to generic codes, utilizing the CKBB scheme. 
Building on the methods of our work and~\cite{williamson2024low,swaroop2024universal},~\cite{he2025extractors} and~\cite{yoder2025tour} proposed QLDPC architectures capable of performing universal fault-tolerant computation with fixed hardware connectivity, while~\cite{cowtan2025parallel} improved the techniques from~\cite{zhang2024time} to enable more efficient parallel surgery. 
We refer readers to Section~3.2 of~\cite{he2025extractors} for a brief review of recent developments.

 \section{Preliminaries}\label{sec:prelim}

We begin by establishing some notation. We then apply this notation to review the logical measurement scheme presented by \cite{cohen2022low}.

\subsection{Tanner graph notation}\label{subsec:Tanner_graph}

 Consider a CSS code $\mathcal{G}$ with check matrices $H^X \in \mathbb{F}_2^{m_x \times n}$, $H^Z \in \mathbb{F}_2^{m_z \times n}$ and minimum distance $d$.  Let $\mathcal{V}$ denote the set of qubits, with $|\mathcal{V}| = n$, $\mathcal{C}^Z$ denote the set of $Z$-checks with $|\mathcal{C}^Z| = m_z$, and $\mathcal{C}^X$ denote the set of $X$-checks with $|\mathcal{C}^X| = m_x$. The \textit{Tanner graph} is a tripartite graph with vertex sets $\mc{C}^X, \mc{V}, \mc{C}^Z$ such that each vertex in $\mc{C}^X$ and $\mc{C}^Z$ is connected to the qubits they act on in $\mc{V}$.

We describe quantum codes with or without ancilla systems by drawing their Tanner graphs. Instead of drawing the Tanner graphs in their entirety, we combine nodes into several sets of either check nodes or qubit nodes. These sets should be chosen so that we can easily describe the connections between each pair of qubit and check sets via an incidence matrix. To be more specific, if there is a set of qubits $V$ and a set of checks $C$, the edge between them will be labeled by a matrix $F\in\mathbb{F}_2^{|C|\times|V|}$ where $F_{ij}=1$ if and only if check $i$ from $C$ is connected to qubit $j$ from $V$. If these checks are $Z$ checks, then we use $\to_Z$ to summarize the above situation as $F : C \to_Z V$. For example the code $\mathcal{G}$ above has connections $H^X : \mathcal{C}^X \to_X \mathcal{V}$ and $H^Z : \mathcal{C}^Z \to_Z \mathcal{V}$. Since our graphs are sparse, many pairs of qubit and check sets will not be connected (or, formally, connected by the $0$ matrix), in which case we leave out that edge.

This way of drawing the Tanner graph is essentially a description of the code as a chain complex. We can therefore think of sets of checks $C$ and sets of qubits $V$ as vector spaces over $\mathbb{F}_2$ with $F$ the boundary map between them. A subset $c \subset C$ can be seen as a row vector in $\mathbb{F}_2^{|C|}$. Then, if $F: C \to_Z V$, the qubits in $V$ that correspond to the product of the subset of checks $c$ in $C$ are $v= cF \in \mathbb{F}_2^{|V|}$. Due to this vector space interpretation, for the rest of this paper we will often write $c\in C$ to denote the subset indicated by $c$, and similarly for $v\in V$.

In our figures, an edge denoting the connection $F : C \to_Z V$ is labeled by the matrix $F$ as well as the Pauli matrix $Z$, or $X,Y$ for $\to_X,\to_Y$ respectively, to indicate how the checks act on connected qubits. When all edges from a set of check nodes are labeled the same, we simply label the set of check nodes with $X$, $Y$, or $Z$. Checks are always rows and qubits are always columns, although vectors are always row vectors.

To specify the type of Pauli explicitly, we adopt the notation $Z(v\in V)$ (or $X(\cdot)$ or $Y(\cdot)$), to indicate Pauli $Z$s acting on exactly the qubits indicated by $v$. If it is clear from context that $v$ is a member of $V$, then we can shorten this to $Z(v)$. If we want $Z$ to act on all qubits in $V$, we write simply $Z(V)$.

In this paper, we frequently discuss the merged code obtained by attaching an ancilla system to the original code being measured. To differentiate between the two codes, we use $H$ to denote the stabilizer check matrices in the original code and use $\mathcal{H}$ for the merged code. 
For a set of checks $c\in C$, we denote their product by $\mathcal{H}(c\in C)=\mathcal{H}(c)$, and the product of all checks in $C$ by $\mathcal{H}(C)$. 
We also use $\mathcal{H}^Z(\cdot)$ or $\mathcal{H}^X(\cdot)$ if the checks are entirely $Z$- or $X$-type, respectively. 
This is meant to indicate the Pauli operator obtained by multiplying together the checks indicated by vector $c$.

Finally, if two Paulis $P_1,P_2$ differ only by an element of a Pauli group $\mathcal{S}$, i.e.~$P_1=sP_2$ for $s\in\mathcal{S}$, we write $P_1\equiv_{\mathcal{S}}P_2$.

\subsection{The CKBB ancilla system}\label{sec:CKBBschemes}

Consider any CSS code $\mc{G}$. In general, we say a logical $X$ operator $\bar X_M$ is irreducible if there are no other logical $X$ operators whose support is a subset of that of $\bar X$. We focus on logical $X$ operators here, but an ancilla system for $Z$ operators can be obtained by swapping $X$ and $Z$ in the construction. In the measurement scheme introduced by~\cite{cohen2022low}, for any logical operator $\bar X_M$ that is irreducible, they create an extended Tanner graph with $\bar X_M$ in the stabilizer.

Key to their construction and ours is the subgraph consisting of $\bar X_M$ and its neighborhood of $Z$ checks. For an $X$-logical operator $\bar{X}_M$ supported on a set of qubits $V_0\subset \mc{V}$, let $C_0 \subset \mc{C}^Z$ be the set of $Z$-checks that has neighbors in $V_0$. The induced Tanner graph of $\bar{X}_M$ is the restriction of the Tanner graph to the vertices $V_0$ and $C_0$. Formally, since the induced Tanner graph only features $Z$ checks, we can define a matrix $F: C_0 \to_Z V_0$ whose rows are the supports of the $Z$ checks. If $J_{V_0}: \mathcal{V} \to V_0$ and $J_{C_0} : \mathcal{C}^Z \to C_0$ are projection isometries, then $F := J_{C_0}^\top H^Z J_{V_0}$.

 A pair of $V_i, C_i$, connected as in the Tanner graph constitute one `layer' of the construction. Having identified $C_0, V_0$ within the original code $\mathcal{G}$'s vertices $\mathcal{C}^Z, \mathcal{V}$, we append $2d-1$ more such layers. We assign these vertices alternating roles, as follows. 
\begin{align}\label{eqn:vertexroles}
V_i \text{ for } \Bigg\{ \begin{array}{l}\text{odd }i \to \text{ X check} \\ \text{even } i \to \text{ physical qubit} \end{array}\\
C_i \text{ for } \Bigg\{ \begin{array}{l}\text{odd }i \to \text{ physical qubit} \\ \text{even } i \to \text{ Z check} \end{array}
\end{align}
Layers are connected identically via $F : C_i \to_Z V_i$ for even $i$, and $F^\top : V_i \to_X C_i$ for odd $i$. Furthermore, adjacent layers are connected by the identity: $I : C_0 \to_Z  C_1$, $I : V_1 \to_X V_0$, $I : V_1 \to_X V_2$, $I : C_2 \to_Z  C_1$, etc. This construction is visualized in Figure~\ref{fig:ckbb_Xancilla}.

\begin{figure}[t]
    \centering
    \includegraphics*[scale = 1.25]{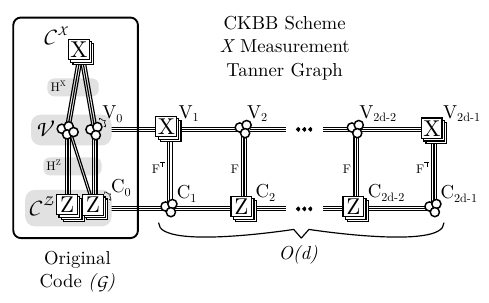}
    \caption{\label{fig:ckbb_Xancilla}Diagram of the CKBB scheme for measuring a logical $X$ operator. The original code's Tanner graph is a subgraph of this graph supported on $\mathcal{V},\mathcal{C}^X,$ and $\mathcal{C}^Z$.}
    \label{fig:hpm}
\end{figure}

In the resulting code the operator $\bar{X}_M$ is a product of stabilizers, namely all the appended $X$-checks $V_1, V_3, ..., V_{2d-1}$.  Therefore a code-switching scheme from $\mathcal{G}$ to the larger code can be used for a fault-tolerant logical measurement of $\bar{X}_M$. 
If any other $X$ logical operators were contained in the support of $\bar{X}_M$, then these would be measured as well. Therefore it is key that $\bar{X}_M$ is irreducible.

The number of ancilla qubits needed is $(2d-1)(|V_0| + |C_0|) > 2d^2$, which is a daunting space overhead. For codes where the distance $d$ scales as $\sqrt{n}$, this number of ancilla qubits may be more than the number of qubits used in the code itself. One of the primary goals of this paper is to reduce this space overhead by using fewer layers. 
The time overhead is $d$ rounds of error correction which is standard albeit on a bigger code.  We observe that if $\mc{G}$ has maximum degree $\delta$, then the CKBB ancilla system has maximum degree $\delta+1$.

 The CKBB ancilla system is a subsystem code with all the logical qubits from the original code $\mc{G}$ except the one being measured, and some gauge qubits which carry no logical information. It was proven in~\cite{cohen2022low} that the CKBB ancilla system has code distance $d$. The fault distance or circuit distance of such a measurement protocol is not analyzed.
 \section{Gauge-Fixed QLDPC Surgery}\label{sec:gauged}
In this section we present our main theoretical result: the gauge-fixed QLDPC surgery scheme. In Section~\ref{sec:gaugedXancilla} we introduce the basic form of the gauge-fixed ancilla system, under the context of measuring a $X$ logical operator. 
In Section~\ref{sec:schedule} we detail the $X$ logical measurement protocol and space-time detector graph. We prove the code distance of the merged code, as well as the fault distance of the space-time detector graph in Section~\ref{sec:dist}. We analyze the check weights of the gauge-fixed ancilla system in Section~\ref{sec:checkweight}, and introduce the modular decoder in Section~\ref{sec:modular_decoder}. 
In Sections~\ref{sec:jointmeasurement} and~\ref{sec:ysystem} we generalize the construction to multi-qubit measurements and $Y$ measurements, which require an additional bridge system. 
In Section~\ref{sec:bridge}, we discuss how to use our bridge construction as a code adapter and connect logical operators from different families of codes. 
We prove the code and fault distances of these measurements in their respective sections. 

\subsection{Gauge-fixed \texorpdfstring{$X$}{X} Ancilla Systems}\label{sec:gaugedXancilla}

We follow the notation used to describe the CKBB scheme in Section~\ref{sec:CKBBschemes}, since the initial part of the construction is the same. Recall that we are measuring an $X$-logical operator $\bar{X}_M$ with support $V_0 \subset \mc{V}$ and adjacent $Z$-checks $C_0 \subset\mc{C}^Z$. As before, we assume there are no other $X$ operators supported on a subset of $V_0$, and consider the induced Tanner graph of $\bar{X}_M$ with adjacency matrix $F: C_0 \to_Z V_0$.

We introduce $L$ additional layers $(C_1,V_1), (C_2,V_2), ..., (C_L,V_L)$ with vertex roles as in Eq.~\eqref{eqn:vertexroles} and the connectivity as in the CKBB system with $F: C_i \to_Z V_i$ for even $i$, $F^\top: V_i \to_X C_i$ for odd $i$ and $I$ for connections between layers. 

Attaching such an ancilla system introduces new stabilizers and gauge degrees of freedom to the code, which could reduce the code distance. 
Specifically, multiplying a logical operator of the original code by new stabilizers or gauge operators could reduce its weight below $d$. 
In \cite{cohen2022low}, this issue is resolved by taking $L = 2d-1$.

In this work, we promote these gauge degrees of freedom to stabilizers, resulting in a non-subsystem stabilizer code $\mathcal{G}_X$.
The $Z$ gauge operators of the CKBB scheme correspond to the nullspace of $F$, that is, $\mathrm{null}(F) = \{c \text{ s.t. } cF = 0\}$. This is because the rows of $F^\top : V_{i} \to_X C_i$ for odd $i$ denote $X$ checks within the layer $(C_i, V_i)$, and vectors orthogonal to these rows commute with these $X$ checks. Any particular gauge operator $Z(c)$ has equivalent realizations on any $C_j$ for any odd $j$, since we can multiply by $Z$ checks in the layers $C_i$ with $i$ even. That is, $Z(c\in C_j)\mathcal{H}^Z(c\in C_{j+1})=Z(cF\in V_{j+1})Z(c\in C_{j+2})=Z(c\in C_{j+2})$.

To find a set of additional $Z$ checks that remove these gauge qubits, we find a matrix $G$ whose rows span $\mathrm{null}(F)$, that is, $GF = 0$. 
Then we introduce a new set of $Z$ checks $U_L$ and connect $G : U_L \to_Z C_L$. 
Minimizing row and column weight of $G$ minimizes the degrees added to the Tanner graph. 
This completes the construction of $\mathcal{G}_X$. A sketch of $\mathcal{G}_X$ is shown in Figure~\ref{fig:Xancilla}. 

\begin{figure}[t]
    \centering
    \includegraphics*[scale = 1.25]{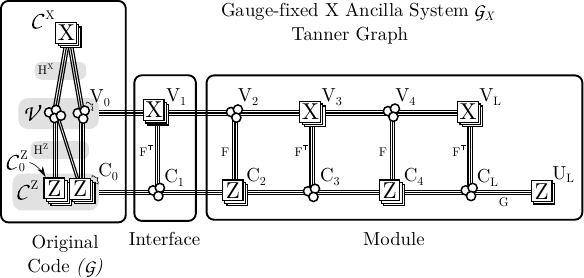}
    \caption{\label{fig:Xancilla} Diagram of the gauge-fixed $X$ ancilla system $\mathcal{G}_X$ with $L = 5$. The original code $\mathcal{G}$'s Tanner graph is a subgraph of this graph supported on $\mathcal{V},\mathcal{C}^X,$ and $\mathcal{C}^Z$. The first layer $(V_1,C_1)$ is called the interface, and the remaining layers and the gauge checks $U_L$ are called the module. }
\end{figure}

The resulting code has the following properties.

\begin{restatable}{theorem}{logicalqubit}\label{thm:validXancilla} 
Say $\mathcal{G}$ is any CSS code and $\bar{X}_M$ is a $X$-logical Pauli operator with no smaller $X$-logical operators in its support. Let $\mathcal{G}_X$ be the code defined above. Then $\bar{X}_M$ is a stabilizer of $\mathcal{G}_X$, and $\mathcal{G}_X$ has exactly one fewer logical qubit than $\mathcal{G}$.
\end{restatable}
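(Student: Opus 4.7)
The plan is to split the theorem into its two claims. For the stabilizer claim, I would form the product of all X-checks in the odd-indexed ancilla layers $V_1, V_3, \ldots, V_L$ (with $L$ odd, as the construction requires). The within-layer edges $F^\top : V_i \to_X C_i$ and the cross-layer identity edges $I : V_i \to_X C_{i \pm 1}$ arrange each ancilla qubit to sit in the support of exactly two adjacent X-layers, whose X actions cancel pairwise, while each qubit in $V_0$ receives a single X through the identity out of $V_1$. The product therefore simplifies to $X(V_0) = \bar X_M$, placing $\bar X_M$ in the X-stabilizer group. Because gauge-fixing through $G : U_L \to_Z C_L$ adds only Z-checks, the X-stabilizer group is unaffected and $\bar X_M$ remains a stabilizer of $\mathcal{G}_X$.

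For the count of logical qubits I would compute $k(\mathcal{G}_X) = n' - \mathrm{rank}(H^X_{\mathcal{G}_X}) - \mathrm{rank}(H^Z_{\mathcal{G}_X})$ and track the differences from $k(\mathcal{G})$. With $n_a = n' - n$ the ancilla qubit count and $\Delta r_X, \Delta r_Z$ the rank increments, the target is $n_a - \Delta r_X - \Delta r_Z = -1$. The Z-rank increment equals the total number of new Z-rows (those in $C_2, C_4, \ldots, C_{L-1}$ plus the gauge-fixing rows $U_L$); processing these rows inward from $U_L$, each new row introduces nonzero support on an ancilla qubit not yet touched, so they are independent and $\Delta r_Z$ saturates. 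The X-rank increment satisfies $\Delta r_X = (\text{total new X-rows}) - (\dim \ker F - 1)$: linear combinations of new X-rows whose ancilla support vanishes correspond to $v \in \ker F$, and by irreducibility of $\bar X_M$ all such $v$ except those in the coset of $\mathbf{1}$ yield combinations already in the original X-rowspace, leaving one extra dimension of new rank (spanned by $\bar X_M$). In the $L = 1$ case this gives $n_a = |C_0|$, $\Delta r_X = \mathrm{rank}(F) + 1$, $\Delta r_Z = |C_0| - \mathrm{rank}(F)$, so $n_a - \Delta r_X - \Delta r_Z = -1$; an inductive step adding two layers at a time contributes $|V_0| + |C_0|$ to $n_a$ and an equal amount to $\Delta r_X + \Delta r_Z$, preserving the net $-1$ for all odd $L$.

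The principal obstacle is establishing that the overlap of the new X-row span with the original X-rowspace has exactly dimension $\dim \ker F - 1$ as predicted by irreducibility --- any additional overlap would reduce $k(\mathcal{G}_X)$ below $k(\mathcal{G}) - 1$. Irreducibility is the essential hypothesis: any extra overlap corresponds to a vector $v \in \ker F$ with $v \neq \mathbf{1}$ modulo stabilizers and $X(v) \notin \mathrm{rowspace}(H^X)$, i.e.~a smaller X-logical supported on a proper subset of $V_0$, which irreducibility forbids. I would make this rigorous by showing that every linear combination of new X-rows whose support is confined to $\mathcal{V}$ descends to some $v \in \ker F$, and invoking irreducibility to partition these as either producing existing stabilizers (already accounted for in the original rowspace and contributing no new rank) or producing $\bar X_M$ itself (the one new stabilizer). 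A parallel argument using $GF = 0$ and the span condition on $G$ ensures the gauge-fixing contributes exactly $\dim \mathrm{null}(F)$ independent Z-rows and no more.
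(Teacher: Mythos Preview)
Your approach matches the paper's: the stabilizer claim via the telescoping product of odd-layer $X$-checks, and the qubit count via tracking rank increments $\Delta r_X,\Delta r_Z$ with an induction on $L$. Your $L=1$ numbers ($\Delta r_X = |V_0|$, $\Delta r_Z = |U_1| = |C_0| - \mathrm{rank}(F)$) and the two-layer inductive step agree with the paper.

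Two points need tightening. First, your argument that ``each new $Z$-row touches a fresh ancilla qubit'' establishes mutual independence of the new $Z$-rows but not independence from the \emph{modified} original $Z$-rows in $C_0$, which now also carry support on $C_1$. The paper closes this by showing $\dim\mathrm{null}(H^Z_1)=\dim\mathrm{null}(H^Z)$: any $v\in\mathrm{null}(H^Z)$ has restriction $v_0=vJ_{C_1}$ satisfying $v_0F=0$ (since the product of those checks vanishes on $V_0$), hence $v_0\in\mathrm{rowspan}(G)$, so a matching $u$ exists and the projection $(v,u)\mapsto v$ is a bijection. Second, your third paragraph has the direction reversed: additional overlap between the new and old $X$-rowspans would make $\Delta r_X$ \emph{smaller}, hence $k(\mathcal{G}_X)$ \emph{larger} than $k-1$, not smaller; and ``extra overlap'' would correspond to $v\in\ker F$ with $X(v)\in\mathrm{rowspace}(H^X)$, not $\notin$. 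Irreducibility is what forces $\ker F=\{0,\mathbf{1}\}$ with $X(\mathbf{1})=\bar X_M$ a nontrivial logical, so no such overlap occurs and $\Delta r_X=|V_0|$ exactly.
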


We defer the proof of this theorem to Appendix~\ref{apx:Xproofs}. We note that without additional knowledge on the structure of the code or the measured operator, we do not have upper bounds on the row or column weight of $G$. 
While this is an undesirable caveat, 
we show theoretical evidence in Section~\ref{sec:checkweight} suggesting that in the average case, these gauge checks may not break the LDPC property.
In particular, the gauge checks may be products of other $Z$ stabilizers of $\mc{G}_X$ and hence do not need to be explicitly measured.

\subsection{\texorpdfstring{$X$}{X} Measurement Protocol}\label{sec:schedule}

We now describe the measurement protocol that allows us to extract the measurement result of $\bar{X}_M$ while leaving the remaining logical information undamaged. 
This protocol can be studied as a space-time decoding hypergraph $\mathcal{D}$, made of copies of the codes $\mathcal{G}$ and $\mathcal{G}_X$ and detectors in between. 
Our objective is to demonstrate that, although the check measurement outcomes are sometimes non-deterministic, we can still find a set of deterministically zero detector parities. 
We consider the noise-free case in this section, and assess the fault-tolerance properties of $\mathcal{D}$ in the next section.

$\mathcal{D}$ splits into steps $\mathcal{D}_0,...,\mathcal{D}_R$, corresponding to $R$ measurements of the checks of $\mathcal{G}_X$. The first and final steps $\mathcal{D}_0$  and $\mathcal{D}_d$ handle the code deformation between $\mathcal{G}$ and $\mathcal{G}_X$.
As shown in Figure~\ref{fig:Xancilla}, we call the first layer ($V_1, C_1$) of $\mc{G}_X$ the \textit{interface}, and the remaining layers are the \textit{module}. There is no module when $L = 1$, so the gauge checks $U_1$ are considered part of the interface. 
We call a check \emph{unreliable} if, even in the noise-free case, the outcome of the check is random since it doesn't commute with the stabilizer of the current state. A \emph{reliable} check is deterministic in the noise-free case, but not necessarily $+1$. 

In the context of this section, we use the notation $\mathcal{H}^Z(C_0)$ to denote the collection of checks in $C_0$ with qubit support given by $\mathcal{H}^Z$, instead of the product of these checks. 
The stabilizers $\mathcal{H}^Z(C_0)$ and $\mathcal{H}^Z(C_2)$ have support on the interface qubits $C_1$. 
$H^Z(C_0)$ are the stabilizers from $\mathcal{G}$ and therefore have no support on the interface.
Let $H_{\text{mod}}^Z(C_2)$ denote $Z$-checks at $C_2$ on the module, which also have no support on the interface.
These notation helps us keep track of which checks are reliable.

$\mathcal{D}$ is constructed with the following measurement schedule.  A small example is shown in Figure~\ref{fig:matchinggraph}.

\begin{figure}
\centering
\includegraphics[width=0.8\textwidth]{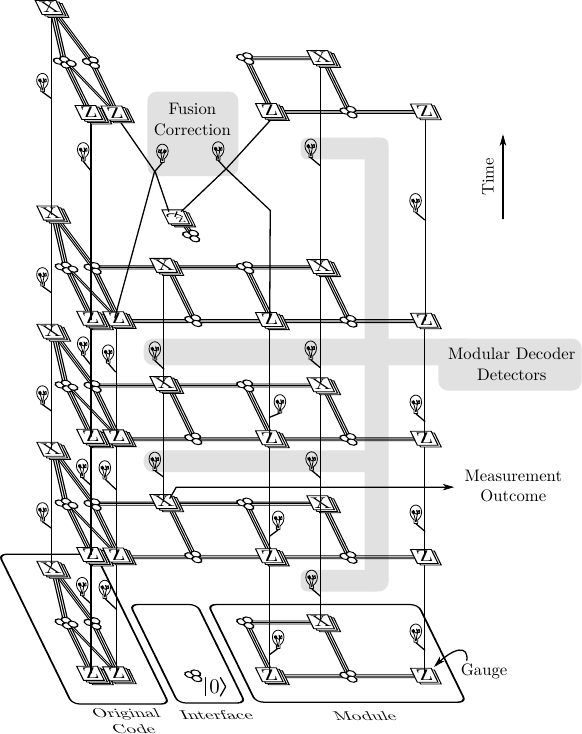}
\caption{\label{fig:matchinggraph} Space-time diagram and matching graph of a measurement protocol using an $X$ ancilla system. Detectors are denoted by lines connecting different checks and a lightbulb symbol. The errors identified by certain detectors can be decoded separately using a modular decoding approach - see Section~\ref{sec:modular_decoder}.}
\end{figure}

\begin{enumerate}[topsep = 0pt]
    \item Initialization:
    Initialize the interface qubits, namely qubits in $C_1$, to $\ket{0}$. If there is a module ($L > 1$), initialize a stabilizer state on the module qubits $C_2 ,V_3, \cdots, V_L$. It has the stabilizers $H_{\text{mod}}^Z(C_2)$, $\mathcal{H}^Z(C_i)$ for even $i \geq 4$,  $\mathcal{H}^Z(U_L)$, and $\mathcal{H}^X(V_i)$ for odd $i \geq 3$. We refer to these checks as the \textit{module stabilizer}. 

    Measure the module stabilizers. Additionally, either measure the stabilizers of $\mathcal{G}$, or take the measurement outcomes from a prior round.

    \item Fusion: 
    Measure the stabilizers of $\mathcal{G}_X$. Of these, the $\mathcal{H}^X(V_1)$ checks are unreliable, but everything else is reliable. Let $\mathcal{D}_0$ contain parities between pairs of reliable checks from step 1 and step 2 only. This involves taking parities between step 1's $H^Z(C_0)$ and step 2's $\mathcal{H}^Z(C_0)$, as well as step 1's $H_{\text{mod}}^Z(C_2)$ and step 2's $\mathcal{H}^Z(C_2)$. 
    The module stabilizers except $H_{\text{mod}}^Z(C_2)$ are identical in step 1 and step 2 so these parities are zero. From round 1 to 2, checks $H^Z(C_0)$ and $H_{\text{mod}}^Z(C_2)$ have gained support on the interface $C_1$ and become $\mathcal{H}^Z(C_0)$ and $\mathcal{H}^Z(C_2)$. But since the interface qubits are in the $\ket{0}$ state, the measurement outcome is unaffected, so these parities should be zero as well.

    Output the parity of the $\mathcal{H}^X(V_1)$ checks as the measurement result, since their product $ \mathcal{H}^X(V_1)$ is $\bar X_M$ in the noiseless case. This value should be subject to decoder corrections.

    \item Measure the stabilizers of $\mathcal{G}_X$ for $R-1 \ge 0$ more times. Let the detectors $\mathcal{D}_i$  for $0 < i < R$ be the parities between successive measurements. The parities should all be zero. 

    \item Split: Measure the stabilizers of $\mathcal{G}$, the interface qubits $C_1$ in the $Z$ basis, and the module stabilizers.

    Checks $H^Z(C_0)$, $H_{\text{mod}}^Z(C_2)$, and the interface qubit $Z$-measurements $C_1$ are all unreliable. However, we can still compute what the outcome of the reliable checks $\mathcal{H}^Z(C_0)$ and $\mathcal{H}^Z(C_2)$ would have been by considering the appropriate parities of $H^Z(C_0)$, $H_{\text{mod}}^Z(C_2)$ with the $Z$-measurements of $C_1$. The parities of these values with the measurements of $\mathcal{H}^Z(C_0)$ and $\mathcal{H}^Z(C_2)$ of the prior step feature in $\mathcal{D}_R$. These are shown as the \emph{fusion correction} in Figure~\ref{fig:matchinggraph}.

    As in step 1, if there is no module then gauge checks $\mathcal{H}^Z(U_1)$ can still be computed by considering the appropriate parities of the $Z$ measurements of $C_1$. $\mathcal{D}_R$ also features detectors comparing these parities to the $\mathcal{H}^Z(U_1)$ measurement from the prior round.
\end{enumerate}

The initialization of the module can be done independently and in advance. The module when viewed as a code itself has no logical qubits --- it is a single stabilizer state.  Therefore, the module is reusable in future rounds of logical measurements, if we keep it error-free. In other words, the module can be viewed as a resource state that facilitates fault-tolerant logical measurements. A similar module system appeared for a different purpose in \cite{xu2024constant}.

Although we let $R\ge1$, the number of times the stabilizers of the merged code $\mathcal{G}_X$ are measured, be arbitrary in this description, we will conclude below in Theorem~\ref{thm:logical_and_measurement_distance} that $R\ge d$ is sufficient for the fault-tolerance of the protocol. In practice, we may want to balance logical error rates and the error rate of the logical measurement, and this may mean $R<d$ offers better performance. We probe this trade-off in our case study in Section~\ref{sec:gross_code}.

\subsection{Code Distance and Fault Distance}\label{sec:dist}
We demonstrate that, if the original code $\mathcal{G}$ has distance $d$, then there exists a choice of layers $L$ such that $\mathcal{G}_X$ also has distance $d$. With $L$ chosen as such, we then demonstrate that the fault distance of the detector graph $\mathcal{D}$ is also $d$. Hence the protocol underlying $D$ implements a fault-tolerant logical measurement.

To understand the code distance of $\mathcal{G}_X$, we must analyze how the existing logical operators could change in shape after fusing the ancilla patch. We begin with $Z$-type logical operators, whose weight does not depend on $L$.

\begin{theorem}~\label{thm:Zdistance} 
Each nontrivial logical $Z$ operator in $\mathcal{G}_X$ has weight at least $d$.
\end{theorem}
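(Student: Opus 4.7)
The approach is by contradiction: suppose $\bar{Z}'$ is a nontrivial $Z$-logical of $\mathcal{G}_X$ with $|\bar{Z}'| < d$, and derive that $\bar{Z}'$ must in fact be a $\mathcal{G}_X$-stabilizer. Decompose $\bar{Z}' = Z(s) \cdot Z(\vec{v})$, where $s \subseteq \mathcal{V}$ is the support on the original qubits and $\vec{v}$ is the support on the ancilla qubits; in particular $|s| < d$. Since the original $X$-checks $H^X$ act only on $\mathcal{V}$, they remain $X$-stabilizers of $\mathcal{G}_X$, so $\bar{Z}'$ commutes with all of them, and this forces $Z(s)$ to lie in the $Z$-normalizer of $\mathcal{G}$. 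By the distance-$d$ assumption on $\mathcal{G}$, every representative of every nontrivial $\mathcal{G}$-$Z$-logical class has weight at least $d$, so $|s| < d$ forces $Z(s)$ to be a $\mathcal{G}$-$Z$-stabilizer.

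Next, let $\mathcal{H}^Z(s)$ denote the lift of $Z(s)$ as a $\mathcal{G}_X$-stabilizer; because each modified check $\mathcal{H}^Z(C_0[j])$ equals $Z(H^Z(C_0[j]))$ times an identity action on $C_1[j]$, the lift $\mathcal{H}^Z(s)$ is $Z(s)$ extended only by some $Z$ operator on the interface $C_1$. Then $\bar{Z}'' := \bar{Z}' \cdot \mathcal{H}^Z(s)$ is a purely-ancilla $Z$-operator in the same $\mathcal{G}_X$-logical class as $\bar{Z}'$, and it still commutes with every $X$-stabilizer of $\mathcal{G}_X$. The argument then reduces to the following subclaim: \emph{every purely-ancilla $Z$-operator commuting with all $X$-stabilizers of $\mathcal{G}_X$ is a $\mathcal{G}_X$-stabilizer.} Granted this, $\bar{Z}''$ is a stabilizer, and hence so is $\bar{Z}' = \mathcal{H}^Z(s) \cdot \bar{Z}''$, contradicting nontriviality.

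To prove the subclaim the plan is to invoke Theorem~\ref{thm:validXancilla}, which states that $\mathcal{G}_X$ has exactly one fewer logical qubit than $\mathcal{G}$. Every nontrivial $\mathcal{G}_X$-$Z$-logical class must then arise as the extension of a nontrivial $\mathcal{G}$-$Z$-logical class other than the class of $\bar{Z}_M$ (the latter is eliminated because $\bar{X}_M$ is promoted to a stabilizer). The key structural observation to establish is that every $\mathcal{G}_X$-$Z$-stabilizer has $\mathcal{V}$-restriction equal to a $\mathcal{G}$-$Z$-stabilizer: the new module and gauge stabilizers $\mathcal{H}^Z(C_2), \ldots, \mathcal{H}^Z(U_L)$ contribute no $\mathcal{V}$ support, and each modified check $\mathcal{H}^Z(C_0[j])$ has $\mathcal{V}$-part $H^Z(C_0[j])$. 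If a nontrivial $\mathcal{G}_X$-$Z$-logical class contained a purely-ancilla representative $\bar{Z}''$, then multiplying by any extension of its underlying $\mathcal{G}$-$Z$-logical would give a $\mathcal{G}_X$-$Z$-stabilizer whose $\mathcal{V}$-part is a nontrivial $\mathcal{G}$-$Z$-logical, contradicting the observation. The main obstacle will be making this $\mathcal{V}$-restriction observation airtight, i.e., verifying that the identity extension to $C_1$ is genuinely the only way the added $Z$-stabilizers of $\mathcal{G}_X$ acquire any $\mathcal{V}$-support; once this is in hand, the purely-ancilla reduction and the weight bound follow immediately.
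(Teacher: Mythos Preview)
Your proof is correct. The first half---restricting to $\mathcal{V}$, invoking the distance of $\mathcal{G}$, and reducing to a purely-ancilla operator $\bar{Z}''$---matches the paper exactly. For the subclaim that any purely-ancilla $Z$ operator commuting with all $X$ checks of $\mathcal{G}_X$ is a stabilizer, you take a different route: you invoke Theorem~\ref{thm:validXancilla} (the $k-1$ qubit count) together with the structural observation that every $\mathcal{G}_X$-$Z$-stabilizer restricts on $\mathcal{V}$ to a $\mathcal{G}$-$Z$-stabilizer, and argue by contradiction that a purely-ancilla nontrivial logical would force a $\mathcal{G}_X$-stabilizer whose $\mathcal{V}$-part is a nontrivial $\mathcal{G}$-logical. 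The paper instead proves the subclaim directly and self-containedly: it writes out the support of $\bar{Z}''$ layer by layer, uses commutation with the $X$ checks $V_j$ to derive telescoping relations $c_jF = v_{j-1}+v_{j+1}$, concludes $c_{\text{tot}}F=0$ so that $c_{\text{tot}}$ lies in the rowspace of $G$, and then explicitly multiplies by $Z$ checks in $C_{i\text{ even}}$ and $U_L$ to reduce $\bar{Z}''$ to the identity. Your argument is shorter and more conceptual but leans on the appendix dimension count; the paper's argument is computational but independent of Theorem~\ref{thm:validXancilla} and exposes the concrete stabilizer equivalences (which are reused later, e.g.\ in the $Y$-system proofs). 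One small point: your notation $\mathcal{H}^Z(s)$ for $s\subseteq\mathcal{V}$ conflicts with the paper's convention that the argument indexes checks, not qubits---you should write this as $\mathcal{H}^Z(c)$ for some $c\in\mathcal{C}^Z$ with $cH^Z=s$.
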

\begin{proof}
Suppose $P$ is a $Z$ operator that commutes with the stabilizer of $\mathcal{G}_X$. We need to show that $P$ is either weight at least $d$ or is a product of $Z$ checks. Consider the the restriction of $P$ to the qubits $\mathcal{V}$ of the code $\mathcal{G}$, written $P_{|\mathcal{V}}$.  $P_{|\mathcal{V}}$ commutes with all $X$ checks of the original code $\mathcal{G}$ and is therefore a logical operator of that code. Assuming it is not in the stabilizer, its weight is at least $d$ which implies the weight of $P$ is also at least $d$. On the other hand, suppose $P_{|\mathcal{V}}$ is a product of $Z$ checks of the original code. This implies $P$ can be multiplied by a product of $Z$ checks from $\mathcal{G}_X$ to obtain an operator $S$ supported entirely on the interface and module qubits $(C_1\cup C_3\cup\dots C_L)\cup(V_2\cup V_4\cup\dots V_{L-1})$. Notably, $S$ is equal to $P$ modulo stabilizers from $\mathcal{G}_X$. We denote the support of $S$ with $c_j \in C_j$ for odd $j$  and $v_i \in V_i$ for even $i$, allowing us to write
\begin{equation}
S=\prod_{\substack{j=1\\j\text{ odd}}}^L Z(c_j \in C_j) \cdot \prod_{\substack{i=2\\i\text{ even}}}^{L-1} Z(v_i \in V_i),
\end{equation}
recalling notation from section~\ref{subsec:Tanner_graph}. Because $S$ commutes with all $X(V_L)$, we must have $ c_{L} F = v_{L-1}$. Similarly, because of commutation with checks in $V_j$ for all odd $j < L$ we have $c_j F  = v_{j-1}+v_{j+1}$. Finally, because of commutation with checks in $V_1$, $c_1 F =v_2$. Adding all these equations together modulo two results in the right-hand sides telescoping down to zero. If we define $c_{\text{tot}}=c_1+c_3+\dots+c_L$, we find $ c_{\text{tot}} F =0$. Because rows of $G$ span the row nullspace of $F$, there exists $r$ such that $rG=c_{\text{tot}}$ and $\mathcal{H}_Z(r\in U_L)=Z(c_{\text{tot}}\in C_L)$.

We can multiply $S$ by checks $\mc{H}_Z(c_1\in C_2)\mc{H}_Z(c_1+c_3\in C_4)\dots \mc{H}_Z(c_1+c_3+\dots+c_{L-2}\in C_{L-1})$ to obtain
\begin{equation}
S'= Z(c_{\text{tot}}\in C_L)\prod_{\substack{i=2\\i\text{ even}}}^{L-1} Z(v'_i\in V_i) = \mc{H}_Z(r\in U_L)\prod_{\substack{i=2\\i\text{ even}}}^{L-1} Z(v'_i\in V_i):= \mc{H}_Z(r\in U_L)S'',
\end{equation}
where $v'_i$ are some vectors that can be determined in terms of $c_j$ and $v_i$. However, $S''$ must therefore commute with all $X$ checks while being supported only on $V_2\cup V_4\cup\dots V_{L-1}$. It is easy to see this implies $S''=I$ and so $S'$ and $S$ are products of $Z$ checks in $\mathcal{G}_X$.
\end{proof}

Ensuring that the $X$ operators have high weight requires selecting $L$. The best choice of $L$ depends on the nature of the induced Tanner graph determined by $\bar X_M$. We derive a sufficient condition in terms of the \textit{boundary Cheeger constant} of the induced Tanner graph, which has adjacency matrix $F$.

\begin{definition}[Boundary Cheeger Constant] For a bipartite graph $F^\top : V \to C$ and a set of vertices $v \subset V$, define the \textit{boundary} $\partial v\subset C$ to be the set of vertices with an odd number of neighbors in $v$. In terms of indicator vectors, we may write $\partial v = vF^\top$. 
    The \emph{boundary Cheeger constant} of a Tanner graph is defined as
\begin{align}
    \beta = \min_{v\subset V, |v|\le |V|/2}|\partial v|/|v|.
\end{align}
\end{definition}

\begin{lemma}[Expansion Lemma]~\label{lem:expansion}
    If the induced Tanner graph of $\bar{X}_M$ is $F : C_0 \to V_0$, suppose $F^\top : V_1 \to C_1$ has boundary Cheeger constant $\beta$.
    Given a set $s\in V_0$, suppose we multiply $X(s)$ by a set of $X$-stabilizers in the ancilla patch, namely $v_j\in V_j$ for odd $j\in [L]$. Suppose $m = \argmin_{i}\min(|v_i|, |V_0| - |v_i|)$. If
    $L/2\ge 1/\beta$, then 
    \begin{align}
        |X(s)\cdot \prod_{\substack{j=1\\j\text{ odd}}}^L \MH_X(v_j\in V_j)| 
        &\ge |s+v_m| + |v_m + v_L| + \min(|v_m|, |V_0| - |v_m|), \\
        &\ge \min(|s|, |V_0| - |s|).
    \end{align}
\end{lemma}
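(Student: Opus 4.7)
The plan is to compute the Hamming weight of the transformed $X$-operator layer-by-layer, then control the contributions from the even (qubit) layers via telescoping triangle inequalities and those from the odd $C_j$ (qubit) layers via the boundary Cheeger expansion of $F^\top$.

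First I would identify the support of a single stabilizer $\mathcal{H}^X(v_j\in V_j)$ for odd $j$: by the layered construction, this check acts as $X$ on the qubits in $V_{j-1}$ and $V_{j+1}$ (via identity connections between adjacent layers) and on the qubits in $C_j$ (via $F^\top$). Multiplying $X(s)$ by all the $\mathcal{H}^X(v_j)$'s for odd $j\in[L]$ therefore produces an $X$-operator whose restrictions to distinct qubit layers have supports $s+v_1$ on $V_0$, $v_{j'-1}+v_{j'+1}$ on $V_{j'}$ for each even $j'\in\{2,4,\dots,L-1\}$, and $v_j F^\top$ on $C_j$ for each odd $j$. The total weight is
\begin{equation*}
W = |s+v_1| + \sum_{\substack{j'\text{ even}\\2\leq j'\leq L-1}}|v_{j'-1}+v_{j'+1}| + \sum_{\substack{j\text{ odd}\\1\leq j\leq L}}|v_j F^\top|.
\end{equation*}

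Next I would telescope the first two sums by the triangle inequality, splitting at $m$: the partial sum $|s+v_1|+|v_1+v_3|+\cdots+|v_{m-2}+v_m|$ is at least $|s+v_m|$, and $|v_m+v_{m+2}|+\cdots+|v_{L-2}+v_L|$ is at least $|v_m+v_L|$. For the third sum, the Cheeger expansion gives $|v_j F^\top|\geq\beta\min(|v_j|,|V_0|-|v_j|)\geq\beta\min(|v_m|,|V_0|-|v_m|)$ by the defining property of $m$. Summing over the $\lceil L/2\rceil\geq 1/\beta$ odd indices then yields $\sum_j|v_j F^\top|\geq\min(|v_m|,|V_0|-|v_m|)$, establishing the first inequality. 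The second inequality follows from two symmetric-difference triangle inequalities, $|s+v_m|+|v_m|\geq|s|$ and $|s+v_m|+(|V_0|-|v_m|)\geq|V_0|-|s|$, which together give $|s+v_m|+\min(|v_m|,|V_0|-|v_m|)\geq\min(|s|,|V_0|-|s|)$; the nonnegative $|v_m+v_L|$ term may then be dropped.

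The main subtlety I anticipate is justifying the Cheeger expansion when $|v_j|>|V_0|/2$, since the definition in the excerpt only quantifies over $|v|\leq|V|/2$. This can be handled either by invoking a two-sided variant of the constant (which coincides with the one-sided one when the column sums of $F$ are even, a balance condition common for CSS $Z$-checks) or by rewriting any oversized $v_j$ as its complement modulo a full-layer $X$-stabilizer product, absorbing the full-layer contribution into the other terms. I view this as a routine technicality; the substantive content of the lemma lies in the two telescoping triangle inequalities and the Cheeger sum bound enabled by $L/2\geq 1/\beta$.
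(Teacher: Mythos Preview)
Your argument is correct and follows the same route as the paper's own proof: compute the layer-by-layer weight, telescope the $V$-layer differences through $m$, and bound the $C_j$ contributions via the Cheeger constant summed over $\lceil L/2\rceil\ge 1/\beta$ odd indices. The subtlety you flag about $|v_j|>|V_0|/2$ is real and the paper glosses over it; your complement fix works here because every row of $F$ has even weight (each $Z$-check commutes with $\bar{X}_M$), so $\vec{1}_{V_0}F^\top=0$ and $\partial v_j=\partial(V_0\setminus v_j)$.
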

\begin{proof}
Here we have
\begin{align}
    |X(s)\cdot \prod_{\substack{j=1\\j\text{ odd}}}^L \MH_X(v_j\in V_j)| 
    &= |s + v_1| + \sum_{\substack{i=3\\i\text{ odd}}}^L |v_i + v_{i-2}| + \sum_{\substack{j=1\\j\text{ odd}}}^L |v_jF^\top|, \\
    &\ge |s + v_1| + \sum_{\substack{i=3\\i\text{ odd}}}^L |v_i + v_{i-2}| + \sum_{\substack{j=1\\j\text{ odd}}}^L \beta \min(|v_j|,|V_0|-|v_j|), \\
    &\ge |s + v_1| + \sum_{\substack{i=3\\i\text{ odd}}}^L |v_i + v_{i-2}| + \min(|v_m|, |V_0| - |v_m|), \\
    &\ge |s + v_m| + \sum_{\substack{i=m+2\\i\text{ odd}}}^L |v_i + v_{i-2}| + \min(|v_m|, |V_0| - |v_m|), \\
    &\ge |s + v_m| + |v_m + v_L| + \min(|v_m|, |V_0| - |v_m|).
\end{align}
Note that the last two inequalities follow from triangle inequality. Our claim holds by another application of triangle inequality.
\end{proof}

The fact that $\bar X_M$ is a logical operator implies a condition on the support of other logical operators.
\begin{lemma}~\label{lem:Xopsprt}
Suppose $\bar{X}$ is a logical $X$ operator (inequivalent to $\bar{X}_M$) supported on $v \subset \mathcal{V}$. Then $|v\setminus V_0| + |V_0\setminus v| \geq d$.
\end{lemma}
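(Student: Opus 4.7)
The plan is to recognize the quantity $|v\setminus V_0| + |V_0\setminus v|$ as the Hamming weight of the symmetric difference $v \triangle V_0$, and hence as the support size of the $X$-type operator $\bar{X}\cdot \bar{X}_M$. The entire lemma then reduces to showing that this product is a nontrivial logical $X$ operator of $\mathcal{G}$, so that the distance bound of $\mathcal{G}$ applies directly.

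First I would note that both $\bar{X}$ and $\bar{X}_M$ are $X$-type Pauli operators commuting with every $Z$-check of $\mathcal{G}$, so their product $\bar{X}\bar{X}_M$ does too; hence $\bar{X}\bar{X}_M$ lies in the normalizer and is either a stabilizer or a nontrivial logical operator of $\mathcal{G}$. The hypothesis that $\bar{X}$ is inequivalent to $\bar{X}_M$ means precisely that $\bar{X}\bar{X}_M$ is not in the stabilizer group of $\mathcal{G}$, ruling out the first possibility. Therefore $\bar{X}\bar{X}_M$ is a nontrivial $X$-type logical operator of $\mathcal{G}$ and has weight at least $d$.

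Finally I would translate this weight bound back to set-theoretic language: since $\bar X$ is supported on $v$ and $\bar X_M$ is supported on $V_0$, the product $\bar X \bar X_M$ is supported on $v \triangle V_0$, whose cardinality is exactly $|v\setminus V_0| + |V_0\setminus v|$. Combining with the previous paragraph gives the desired inequality.

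There is no substantial obstacle here; the only subtlety worth flagging explicitly is the meaning of ``inequivalent,'' which must be read as ``not equal modulo stabilizers of $\mathcal{G}$'' (equivalently, $\bar X \bar X_M \notin \mathcal{S}(\mathcal{G})$), since otherwise $\bar X \bar X_M$ could be a product of $X$-stabilizers with small support and the argument would collapse.
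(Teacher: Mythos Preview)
Your proposal is correct and matches the paper's own proof essentially verbatim: the paper simply observes that $\bar{X}_M\bar{X}$ is supported on $(v\setminus V_0)\cup(V_0\setminus v)$ and is a nontrivial logical operator, hence has weight at least $d$. Your added remark clarifying the meaning of ``inequivalent'' is accurate and is exactly the hypothesis being used.
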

\begin{proof} Notice $\bar{X}_M\bar{X}$, which is supported on $(v\setminus V_0) \cup (V_0\setminus v)$, is also a nontrivial logical operator, and therefore has weight at least $d$.
\end{proof}

Combining the two preceding lemmas, we obtain the following bound on the code distance of the merged code.

\begin{theorem}~\label{thm:Xdistance} If the induced Tanner graph of $\bar{X}_M$ is $F : C_0 \to V_0$, suppose $F^\top : V_1 \to C_1$ has boundary Cheeger constant $\beta$. If $\mathcal{G}_X$ is constructed with $L$ layers such that $\lceil L/2\rceil \ge 1/\beta$, then the code $\mathcal{G}_X$ has distance at least $d$.
\end{theorem}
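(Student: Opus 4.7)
Given Theorem~\ref{thm:Zdistance} for $Z$-logicals, I focus on $X$-logicals. Let $\bar X$ be a nontrivial $X$-operator of $\mathcal{G}_X$ that commutes with all $Z$-stabilizers but is not in the $X$-stabilizer group; the goal is to show $|\bar X| \geq d$. I decompose the support as $X(s)$ on $V_0$, $X(t)$ on $\mathcal{V} \setminus V_0$, and some ancilla support on $C_1 \cup V_2 \cup C_3 \cup \cdots$.

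First I would use the commutation of $\bar X$ with the $Z$-stabilizers of $\mathcal{G}_X$ to expose a layered structure on the ancilla support. Commutation with the extended $C_0$ checks forces $\bar X|_{C_1} = s F^\top$, and commutation with the $C_{2k}$ checks propagates this recursively, so $\bar X|_{C_j}$ lies in the image of the map $w \mapsto w F^\top$ for each odd $j$, with the gauge-check constraint $GF=0$ at $U_L$ closing off the boundary at $C_L$. This structural form allows me to multiply $\bar X$ by the $X$-stabilizers $\mathcal{H}^X(v_j \in V_j)$ (for odd $j$) with $v_j$'s chosen to absorb the ancilla support, yielding an equivalent representative in the canonical form $X(s') X(t) \cdot \prod_j \mathcal{H}^X(v_j)$ that matches the hypothesis of Lemma~\ref{lem:expansion}.

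Next I would split into two cases. If the reduction kills all ancilla support (i.e., all $v_j$ may be taken zero), then $\bar X$ is purely supported on $\mathcal{V}$ and commutes with every $Z$-stabilizer of $\mathcal{G}$, so $\bar X|_\mathcal{V}$ is an $X$-logical of $\mathcal{G}$. Since $\bar X_M$ is a stabilizer of $\mathcal{G}_X$ by Theorem~\ref{thm:validXancilla}, the nontriviality of $\bar X$ forces $\bar X|_\mathcal{V}$ to be inequivalent to $\bar X_M$ in $\mathcal{G}$, so $|\bar X| \geq d$ by $\mathcal{G}$'s distance. Otherwise some $v_j \neq 0$, and the hypothesis $\lceil L/2 \rceil \geq 1/\beta$ lets me invoke Lemma~\ref{lem:expansion} to obtain $|\bar X| - |t| \geq \min(|s'|, |V_0| - |s'|)$. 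I then combine this bound with Lemma~\ref{lem:Xopsprt} applied to the pure-$\mathcal{V}$ logical obtained after further zeroing the ancilla, which yields $|t| + |V_0 \setminus s''| \geq d$ for the reduced $V_0$-support $s''$; using the freedom afforded by $\bar X_M$ (which is a stabilizer of $\mathcal{G}_X$ and swaps the $V_0$-support between $s'$ and $V_0 \setminus s'$), these two inequalities combine to give $|\bar X| \geq d$.

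The main obstacle will be the bookkeeping in the second case, in particular reconciling the Expansion Lemma's $\min(|s'|, |V_0| - |s'|)$ bound with Lemma~\ref{lem:Xopsprt}'s complementary bound $|V_0 \setminus s''|$ under $\bar X_M$-multiplication. Verifying that the recursive parameterization of $\bar X|_{C_j}$ closes consistently at the boundary (through the gauge checks $U_L$ whose rows span $\mathrm{null}(F)$) is the subtle step that ultimately makes the layered structure work, and is precisely where the sharpness of the condition $\lceil L/2 \rceil \geq 1/\beta$ enters.
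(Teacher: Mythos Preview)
Your core argument matches the paper's: reduce to the Expansion Lemma (Lemma~\ref{lem:expansion}) and finish with the case split via Lemma~\ref{lem:Xopsprt}. However, you take an unnecessary detour, and one of your structural claims is wrong. The paper observes directly that, by irreducibility of $\bar X_M$ and Theorem~\ref{thm:validXancilla}, every nontrivial $X$-logical of $\mathcal{G}_X$ already has a representative $P = X(t\in\mathcal{V})$ supported purely on $\mathcal{V}$; so one only needs to show $|P\cdot\prod_j\mathcal{H}^X(v_j\in V_j)|\ge d$ for all $v_j$, which is exactly the form Lemma~\ref{lem:expansion} handles. Your commutation-based derivation of the layered form is not needed, and the specific claim ``$\bar X|_{C_1} = sF^\top$'' is incorrect: the extended $C_0$ checks also have support on $\mathcal{V}\setminus V_0$, so the constraint involves $t$ as well as $s$. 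This is fixable, but the paper's route sidesteps it.

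Your final paragraph misidentifies where the hypothesis $\lceil L/2\rceil\ge 1/\beta$ enters. The gauge checks $U_L$ are $Z$-checks and play no role in bounding the $X$-distance; the ``boundary closure at $C_L$'' is irrelevant here. The condition enters purely through Lemma~\ref{lem:expansion}, which needs enough layers so that $\sum_j \beta\min(|v_j|,|V_0|-|v_j|)\ge \min_m\min(|v_m|,|V_0|-|v_m|)$. Once you have $|P'|\ge |t\setminus V_0|+\min(|s|,|V_0|-|s|)$ with $s=t\cap V_0$, the case split is simply on which side of the $\min$ wins: if $|s|\le|V_0|/2$ then $|P'|\ge|t|\ge d$; otherwise $|P'|\ge |t\setminus V_0|+|V_0\setminus t|\ge d$ by Lemma~\ref{lem:Xopsprt}. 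No further appeal to $\bar X_M$ as a stabilizer is required.
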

\begin{proof} Theorem~\ref{thm:Zdistance} bounds the minimum weight of $Z$-type logical operators in $\mathcal{G}_X$ for all $L$. Only the logical $X$ operators remain to be bounded. 

Any $X$-type logical operator of code $\mathcal{G}$ is still a logical operator of the merged code $\mathcal{G}_X$. Irreducibility of $\bar{X}_M$ implies that this set of logical $X$-type operators provides a complete basis of $X$-type logical operators of the $k-1$ qubit code $\mathcal{G}_X$. Hence, consider any nontrivial $X$-type logical operator $P$ of $\mathcal{G}_X$ supported entirely on $\mathcal{V}$. It is also a logical operator of $\mathcal{G}$ so it has weight at least $d$. We will show $P$ cannot be reduced in weight to below $d$ by multiplying by the $X$-checks of $\mathcal{G}_X$. 

Let $P=X(t\in\mathcal{V})$ be an $X$-type logical operator of $\mathcal{G}$. Consider
\begin{align}
    P'=P\prod_{\substack{j=1\\j\text{ odd}}}^L\mathcal{H}_X(v_j).
\end{align}
Let $s = t\cap V_0$, from Lemma~\ref{lem:expansion}, we see that $|P'| \ge |t| - |s| + \min(|s| , |V_0| - |s|)$. If $|s|\le |V_0|/2$, we see that $|P'| \ge |t| \ge d$. Otherwise, we have 
\[
|P'|\ge |t| - |s| + |V_0| - |s| = |t\setminus V_0| + |V_0\setminus t| \ge d,
\]
by Lemma~\ref{lem:Xopsprt}. We conclude that the $X$ distance of $\mathcal{G}_X$ is at least $d$.
\end{proof} 

As a corollary, we see that the space overhead of our ancilla system is $O(d/\beta)$. 
We remark that boundary expansion is a demanding property, and for general codes we expect $1/\beta$ to be super-constant. 
However, in the case where every check in the induced Tanner graph acts on exactly two qubits in the support of $\bar{X}_M$, the induce Tanner graph can be interpreted as a simple graph by turning every check into an edge.
Boundary Cheeger constant on the induced Tanner graph then translates to the usual Cheeger constant on this simple graph. 
Since edge expansion is a much more common property, for such codes the space overhead may be linear.  
We note that in~\cite{williamson2024low} and~\cite{ide2024fault}, instead of using the induced Tanner graph (which is generally a hypergraph if we treat each induced check as a hyperedge) for measurement, the proposed schemes constructed and used a compatible expander graph with sufficient edge expansion (Cheeger constant $\ge 1$) for measurement. 
This novelty guarantees a low space-overhead ancilla system for any QLDPC code. We also note that in~\cite{swaroop2024universal}, the requirement on edge expansion is further relaxed to a notion called relative expansion.

We also consider Theorem~\ref{thm:Xdistance} a worst case upper bound on the number of layers needed and expect practical cases to be much better. 
The reasoning behind is simple: stabilizers in the ancilla system may only affect $X$-logical operators which intersect $\bar{X}_M$. 
To provide a worse case analysis, we assume no knowledge of the intersection structure of $X$-operators and consider product of all subsets of stabilizers $v\in V$. 
In reality, only those subsets that touches another $X$-logical operator needs to be expanding. 
This is exemplified by our case study on the [[144,12,12]] gross code, where we found operators for which a mono-layer ancilla is distance preserving, see Section~\ref{sec:gross_code}. It is also supported by
results from~\cite{cowtan2024ssip}, where it was verified numerically (for the CKBB scheme) that for many small-to-medium codes the number of ancilla layers needed is 5 or less. 

Next we analyze the detector graph $\mathcal{D}$ described in Section~\ref{sec:schedule} and give lower bounds on fault distance. For our purposes it makes sense to analyze two different fault distances: the \emph{logical} fault distance and the \emph{measurement} fault distance. These are defined similarly to fault distance in \cite{beverland2024fault}.

\begin{definition}\label{def:fault_distance} Consider a phenomenological noise model on the detector graph $\mathcal{D}$: Pauli errors can occur on qubits in between any rounds of measurement, and any measurement outcome can be faulty. We say a detector \emph{triggers} when the associated parity between measurement outcomes is 1, and we say an error is \emph{undetected} if it triggers no detectors. 
The \emph{logical fault distance} is the minimum weight of an undetected error that causes a logical Pauli error on a qubit in $\mathcal{G}$ other than the measured qubit and does not cause a measurement error. 
The \emph{measurement fault distance} is the minimum weight of an undetected error that causes an error in the measurement outcome. 
\end{definition}

Our proof uses the idea that code switching using Pauli measurements can be analyzed in terms of subsystem codes \cite{vuillot2019deformation}. Let $\bar{Z}_M$ be the symplectic partner to $\bar{X}_M$, $\hat e_l$ denote a basis vector, $C_1^\text{sep}=\{Z(\hat e_l\in C_1):l\in C_1\}$ denote the set of all single qubit Pauli Z operators on the qubits $C_1$, and $\mathcal{C}_0^Z$ be checks in $\mathcal{C}^Z$ that are not supported on $V_0$. We also let $V_{j\text{ odd}}=V_1\cup V_3\cup\dots V_L$ and $C_{i\text{ even}}=C_0\cup C_2\cup\dots C_{L-1}$. Consider the stabilizer group of the merged code $\mathcal{G}_X$
\begin{equation}
\mathcal{S}_X=\langle C^X,\mathcal{C}_0^Z,V_{j\text{ odd}},C_{i\text{ even}},U_L\rangle,
\end{equation}
and the stabilizer group of the original code, interface qubits in $\ket{0}$, and module, with or without $\bar{X}_M$ included
\begin{equation}
\mathcal{S}=\langle C^X,\mathcal{C}_0^Z,V_{j\ge3\text{ odd}},C_{i\text{ even}},C_1^\text{sep}\rangle,\quad \mathcal{S}'=\langle\mathcal{S},\bar{X}_M\rangle.
\end{equation}
Note $U_L\subseteq\mathcal{S}$ as written even though it is not explicitly included. Also consider the gauge group
\begin{equation}
\mathcal{U}=\langle C^X,\mathcal{C}_0^Z,V_{j\text{ odd}},C_{i\text{ even}},C_1^\text{sep}\rangle=\langle\mathcal{S},\mathcal{S}_X\rangle.\end{equation}
These groups satisfy the subgroup relations
\begin{equation}\label{eq:subgroup_relation}
\mathcal{S}'\le\mathcal{U},\quad \mathcal{S}_X\le\mathcal{U}.
\end{equation}
In particular, $\mathcal{S}'$ and $\mathcal{S}_X$ are two different ways of fixing the gauge of $\mathcal{U}$ to get stabilizer subspace codes.

All three codes corresponding to $\mathcal{S}'$, $\mathcal{S}_X$, and $\mathcal{U}$ have $k-1$ logical qubits and a logical basis for each can be chosen to consist of the same set of operators $\mathcal{L}=\langle\bar{X}_i,\bar{Z}_i:i=1,\dots,k-1\rangle$. To see this, note that any logical $X$ operator $\bar{X}_i$ (inequivalent to $\bar{X}_M$) of $\mathcal{S}$ on the original code qubits is still a logical operator of each of the three codes. Also, if $\bar{Z}'_i$ is a nontrivial logical $Z$ operator of $\mathcal{S}$ that commutes with $\bar{X}_M$, then we claim it can be multiplied by stabilizers of the original code to obtain an equivalent $\bar{Z}_i$ suitable for inclusion in $\mathcal{L}$ because it has no support on $V_0$. This statement is a corollary of the following lemma.
\begin{lemma}~\label{lem:supportlemma}
    Suppose $\bar{X}_M$ is irreducible. Then for any vector $v$ with even weight supported on $V_0$, there exists a vector $c$ supported on $C_0$ such that $c F = v$. 
\end{lemma}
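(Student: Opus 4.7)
The plan is to recast the conclusion as a linear-algebraic statement about the null space of $F$ and then extract the needed structure from irreducibility. A vector $v \in \mathbb{F}_2^{|V_0|}$ lies in the row span of $F$ if and only if $v \cdot w = 0$ for every $w \in \rn(F) = \{w : F w^\top = 0\}$, and the even-weight vectors on $V_0$ are exactly those orthogonal to the all-ones vector $\mathbf{1}_{V_0}$. Hence the lemma is equivalent to the identity $\rn(F) = \langle \mathbf{1}_{V_0}\rangle$, and this is what I would establish.

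The containment $\langle \mathbf{1}_{V_0}\rangle \subseteq \rn(F)$ is immediate: since $X(\mathbf{1}_{V_0}) = \bar{X}_M$ is a logical $X$ operator, it commutes with every $Z$-check in $C_0$, which is precisely the statement $F \mathbf{1}_{V_0}^\top = 0$. For the reverse containment, take an arbitrary $w \in \rn(F)$ and view $X(w)$ as an $X$-type Pauli supported on $V_0 \subseteq \mathcal{V}$. By assumption $X(w)$ commutes with every check in $C_0$, and it commutes trivially with every other $Z$-check of $\mathcal{G}$ because such checks have no support on $V_0$. So $X(w)$ normalizes the stabilizer group of $\mathcal{G}$. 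Since $\mathcal{G}$ is CSS and $X(w)$ is $X$-only, any witness to being a stabilizer or to logical equivalence can be taken $X$-only, so $X(w)$ is either a product of $X$-stabilizers or equal to a logical $X$ operator times a product of $X$-stabilizers.

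The case analysis then leverages irreducibility twice. If $X(w)$ is a nontrivial $X$-stabilizer $s$ with support in $V_0$, then $\bar{X}_M \cdot s = X(\mathbf{1}_{V_0} + s)$ is a logical $X$ operator with support a proper subset of $V_0$, contradicting irreducibility of $\bar{X}_M$; therefore no such $s$ exists, and in this case $w = 0$. If $X(w)$ is a logical $X$ operator with support in $V_0$, then by irreducibility it must equal $\bar{X}_M$ up to a stabilizer, which (being $X$-only) is an $X$-stabilizer supported in $V_0$, and the preceding step forces it to be trivial. Hence $w = \mathbf{1}_{V_0}$. Combining cases gives $\rn(F) \subseteq \langle \mathbf{1}_{V_0}\rangle$, which completes the equality.

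The only subtle point, and the place I expect to need the most care, is making irreducibility — a priori a statement only about other logical $X$ operators — also rule out nontrivial $X$-stabilizers supported in $V_0$ and alternate logical-equivalent representatives of $\bar{X}_M$ supported in $V_0$. Both follow by the same trick of multiplying by $\bar{X}_M$ and invoking irreducibility, but the two statements must be invoked in the correct order so that the case distinction on $X(w)$ closes cleanly. Once that is done, the lemma is immediate: any even-weight $v$ on $V_0$ is orthogonal to $\mathbf{1}_{V_0}$, hence to all of $\rn(F)$, hence lies in the row span of $F$, producing a $c$ supported on $C_0$ with $cF = v$.
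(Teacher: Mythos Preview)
Your proposal is correct and follows essentially the same approach as the paper: both reduce the claim to the identity $\ker F = \langle \mathbf{1}_{V_0}\rangle$ and then invoke orthogonality (equivalently, the repetition-code structure of the row span) to conclude that every even-weight vector lies in the row span of $F$. The paper's proof is a terse two-line version that simply asserts the kernel fact as a consequence of irreducibility, whereas you spell out why irreducibility forces it, including the careful handling of the stabilizer case; your version is a strict elaboration of the paper's sketch rather than a different route.
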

\begin{proof}
    Since the only nonzero column vector $w$ such that $Fw = 0$ is the all $1$s vector, the rows of $F$ must span the checks of a classical repetition code. Any vector of even weight can be expressed as a linear combination of rows of $F$ since the checks of a repetition code generate all vectors of even weight. 
\end{proof}
\noindent We also use $\mathcal{L}^*=\mathcal{L}\setminus\{I\}$ to denote the nontrivial logical operators.

For Pauli $p$ and sets of Pauli operators $\mathcal{T}$ and $\mathcal{T}'$, let $p\mathcal{T}=\{pq:q\in\mathcal{T}\}$ and $\mathcal{T}\mathcal{T}'=\{pq:p\in\mathcal{T},q\in\mathcal{T}'\}$. Also, let $d(\mathcal{T})$ be the minimum weight of any Pauli in $\mathcal{T}$. Then the code distances of codes $\mathcal{S}'$, $\mathcal{S}_X$, and $\mathcal{U}$ are written $d(\mathcal{L}^*\mathcal{S}')$, $d(\mathcal{L}^*\mathcal{S}_X)$, and $d(\mathcal{L}^*\mathcal{U})$, respectively, and because of Eq.~\eqref{eq:subgroup_relation},
\begin{equation}\label{eq:distance_relations}
d(\mathcal{L}^*\mathcal{U})\le d(\mathcal{L}^*\mathcal{S}_X),\quad d(\mathcal{L}^*\mathcal{U})\le d(\mathcal{L}^*\mathcal{S}'),\quad d(\bar{Z}_M\mathcal{L}\mathcal{U})\le d(\bar{Z}_M\mathcal{L}\mathcal{S}').
\end{equation}

As the ``weakest" code out of the three, $\mathcal{U}$ and its code distance ends up determining the logical and measurement fault distances of our protocol. Let $R$ be the number of rounds we measure the checks of the merged code.
\begin{lemma}\label{lem:subsystem_codes_and_fault_distance}
The measurement fault distance of the $X$ system is $\min(R,d(\bar{Z}_M\mathcal{L}\mathcal{U}))$. The logical fault distance of the $X$ system is $d(\mathcal{L}^*\mathcal{U})$.
\end{lemma}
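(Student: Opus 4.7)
The plan is to use the subsystem-code perspective set up just above the lemma, in which $\mathcal{U} = \langle\mathcal{S},\mathcal{S}_X\rangle$ is a common gauge group for both $\mathcal{S}'$ and $\mathcal{S}_X$, sharing the logical group $\mathcal{L}$. A phenomenological error consists of qubit Pauli errors $E_0, E_1, \ldots, E_R$ inserted between consecutive measurement rounds and measurement flips $m_{t,s}$ on each check $s \in \mathcal{S}_X$ in each round $t$. The detectors split into three groups: the fusion detectors $\mathcal{D}_0$ compare values of the reliable checks $\mathcal{S}\cap\mathcal{S}_X$ across the first code switch, bulk detectors compare consecutive measurements of each $s\in\mathcal{S}_X$, and the split detectors $\mathcal{D}_R$ compare the same reliable checks across the final code switch.

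The first step is to reduce an undetected pattern to a canonical form: multiplying any $E_t$ by a check $s\in\mathcal{S}_X$ while toggling $m_{t,s}$ and $m_{t+1,s}$ preserves every detector, so without loss of generality each $E_t$ commutes with $\mathcal{S}_X$. The bulk detectors then force $E_t\in\mathcal{N}(\mathcal{S}_X)$ for every $t$, while the fusion and split detectors additionally force the boundary errors $E_0$ and $E_R$ to commute with $\mathcal{S}\cap\mathcal{S}_X$. Composed across time, the net qubit error $E_{\mathrm{tot}} := E_0 E_1\cdots E_R$ lies in $\mathcal{N}(\mathcal{U})$, and its coset in $\mathcal{N}(\mathcal{U})/\mathcal{U}$ determines both the logical action on the surviving $k-1$ logical qubits and the presence or absence of a $\bar Z_M$ factor (which is what flips the reported $\bar X_M$ outcome).

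For the logical fault distance, an undetected error produces a nontrivial Pauli on the data other than on the measured qubit, without flipping the $\bar X_M$ outcome, exactly when $E_{\mathrm{tot}}\mathcal{U}\in\mathcal{L}^*$; since the coset is preserved if we concentrate the whole error at a single time slice, the minimum weight is $d(\mathcal{L}^*\mathcal{U})$, independent of $R$. For the measurement fault distance there are two routes. The first is via qubit errors: $E_{\mathrm{tot}}$ must anticommute with $\bar X_M$, so $E_{\mathrm{tot}}\in\bar Z_M\mathcal{L}\mathcal{U}$, with minimum weight $d(\bar Z_M\mathcal{L}\mathcal{U})$. The second is via measurement errors alone: because a bulk detector for check $s$ fires whenever $m_{t,s}\ne m_{t+1,s}$, an undetectable pattern must have a constant flip on each check across all $R$ rounds, and the cheapest such pattern that flips the reported outcome $\bar X_M=\prod_{v\in V_1}\mathcal{H}^X(v)$ is to flip a single check $v\in V_1$ in every round, costing $R$ flips. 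Any mixed pattern reduces after canonicalization to one of these two extremes, giving $\min(R, d(\bar Z_M\mathcal{L}\mathcal{U}))$.

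The main obstacle will be the detector bookkeeping around the fusion and split steps, where several of the ``reliable'' checks ($\mathcal{H}^Z(C_0)$ and $\mathcal{H}^Z(C_2)$, and when $L=1$ also $\mathcal{H}^Z(U_1)$) are not directly measured but reconstructed as parities involving the $\ket{0}$ initialization or $Z$-basis readout of the interface qubits $C_1$. I would need to verify carefully that the resulting detectors indeed enforce commutation with the full intersection $\mathcal{S}\cap\mathcal{S}_X$, so that the boundary argument is symmetric at both ends and every undetected pattern really does descend to a single coset in $\mathcal{N}(\mathcal{U})/\mathcal{U}$ with the weight accounting claimed above.
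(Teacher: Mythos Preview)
Your outline captures the right subsystem-code picture and correctly identifies the relevant cosets, but the weight accounting in the lower-bound direction has a genuine gap. Your canonicalization step (multiply $E_t$ by $s\in\mathcal{S}_X$ and toggle the adjacent measurement bits) preserves detector values and the logical coset, but it does \emph{not} preserve the total phenomenological weight. So after you reduce to a form where each $E_t\in\mathcal{N}(\mathcal{S}_X)$ and conclude $E_{\mathrm{tot}}\in\mathcal{L}^*\mathcal{U}$, the inequality $|E_{\mathrm{tot}}|\ge d(\mathcal{L}^*\mathcal{U})$ tells you nothing about the weight of the \emph{original} pattern. The sentence ``since the coset is preserved if we concentrate the whole error at a single time slice, the minimum weight is $d(\mathcal{L}^*\mathcal{U})$'' is only an upper bound, and ``any mixed pattern reduces after canonicalization to one of these two extremes'' is precisely the missing lower bound, asserted but not argued.

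The paper sidesteps this by never canonicalizing. It works directly with the accumulated qubit error $P=\prod_t E_t$ of the \emph{original} pattern, so that $|P|\le\sum_t|E_t|\le W$ by the triangle inequality. The key structural observation is then purely about detectors: if $t$ is the last round before any qubit error and $t'$ the first round after the last one, then $P$ must commute with every element of the detector group $\mathcal{M}_{t\rightarrow t'}=\mathcal{M}_t\cap\mathcal{M}_{t'}$, since the measurement-error contributions at times $\le t$ and $\ge t'$ cannot hide it (and a perfect final round blocks the future boundary). This group is one of $\mathcal{S}\cap\mathcal{S}_X$, $\mathcal{S}_X$, or $\mathcal{S}$ depending on whether the qubit-error support straddles a code-switch, and a short generator/centralizer analysis places $P$ in $\mathcal{L}^*\mathcal{U}$, $\mathcal{L}^*\mathcal{S}_X$, or $\mathcal{L}^*\mathcal{S}'$ (logical faults) or in $\bar Z_M\mathcal{L}\mathcal{U}$ or $\bar Z_M\mathcal{L}\mathcal{S}'$ (measurement faults). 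Taking minima and using $\mathcal{S}',\mathcal{S}_X\le\mathcal{U}$ collapses everything to $d(\mathcal{L}^*\mathcal{U})$ and $\min(R,d(\bar Z_M\mathcal{L}\mathcal{U}))$. Your boundary concern is handled in the same stroke: the detector group \emph{is} $\mathcal{S}\cap\mathcal{S}_X$ at the code-switch, without needing a separate reconstruction argument.
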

\begin{proof}
We denote by $\mathcal{M}_t$ the (abelian) group of Paulis generated by the checks measured in timestep $t$. In particular, $\mathcal{M}_0=\mathcal{S}=\mathcal{M}_{R+1}$ measures the original code, interface qubits in $\ket{0}$, and module (see the top and bottom of Figure~\ref{fig:matchinggraph}). For $0<t<R+1$, $\mathcal{M}_t$ measures the merged code stabilizers $\mathcal{S}_X$.

Measurement differences are denoted $\mathcal{M}_{t\rightarrow t'}=\mathcal{M}_{t}\cap\mathcal{M}_{t'}$ for $t'>t$. Each element of a detector group $\mathcal{M}_{t\rightarrow t'}$ is indeed a detector because, given its value at time $t$, in the absence of errors it will have the same value at time $t'$.  The explicit parities $\mathcal{D}_t$ described in Section~\ref{sec:schedule} suffice to indicate which detectors in $\mathcal{M}_{t\rightarrow t+1}$ changed from $t$ to $t+1$. While the $\mathcal{M}_{t\rightarrow t+1}$ are a generating set of detectors sufficient for decoding, it is also true that an undetected error in the sense of Definition~\ref{def:fault_distance} should be undetected by (i.e.~commute with) any detector group $\mathcal{M}_{t\rightarrow t'}$. This means that undetected errors are elements of the centralizer $\mathcal{C}(\mathcal{M}_{t\rightarrow t'})$.

Detector groups $\mathcal{M}_{t\rightarrow t'}$ fall into three cases. If $t=0$ and $0<t'<R+1$ or $0<t<R+1$ and $t'=R+1$, then $\mathcal{M}_{t\rightarrow t'}=\mathcal{S}\cap\mathcal{S}_X$, the stabilizers that are both in the original code and merged code. If $0<t,t'<R+1$, then $\mathcal{M}_{t\rightarrow t'}=\mathcal{S}_X$, the merged code stabilizers. Finally, if $t=0$ and $t'=R+1$, then $\mathcal{M}_{0\rightarrow R+1}=\mathcal{S}$, the original code stabilizers.

We make the standard modeling assumption that the final round is free of measurement errors, see Ref.~\cite{fowler2009high}, for instance. 
Because the final round is perfect, any qubit errors must at this point amount to an undetectable error, either a stabilizer or non-trivial logical error, which makes the definition of a logical failure clear-cut. We point out that an ideal round of measurement is also done in our numerical case study (see Section~\ref{sec:gross_code}) at $t=R+2$ for the same reason.

If some nontrivial Pauli error $P$ occurred on the qubits we will bound its weight. Such a Pauli error is necessary to cause a logical fault and is one way to cause a measurement fault. The other way to cause a measurement fault is with only measurement errors. Because we repeat the measurement of $\mathcal{S}_X$ (which contains $\bar{X}_M$) $R$ times, this way to cause a measurement fault requires at least $R$ measurement errors.

Let $t$ be the latest time before the first qubit error, and $t'$ be the earliest time after the last qubit error, so that by time $t'$ the accumulated qubit error is $P$. Hiding this error from detection with measurement errors in steps $t'$ and later will fail due to the perfect measurement round at $R+1$. Thus, $P$ must commute with all detectors in $\mathcal{M}_{t\rightarrow t'}$ (although it may also require measurement errors at times between $t+1$ and $t'-1$ to avoid detection at those times). Given the discussion above, this means $P$ is in the centralizer of $\mathcal{S}\cap\mathcal{S}_X$, $\mathcal{S}_X$, or $\mathcal{S}$ depending on $t$ and $t'$.

We provide generating sets of these centralizers so we can see exactly which elements lead to logical or measurement faults using Definition~\ref{def:fault_distance}. We deal with the three cases one at a time. First,
\begin{equation}\label{eq:generating_set_case_1}
\mathcal{C}(\mathcal{S}\cap\mathcal{S}_X)=
\left\langle\begin{array}{cccc}
\text{\colorbox{gray!30}{$\mathcal{S}\cap\mathcal{S}_X$}}&\text{\colorbox{gray!30}{$V_1$}}&\text{\colorbox{gray!30}{$\bar{X}_M$}}&\bar{X}_{i=1,\dots,k-1}\\
&\text{\colorbox{gray!30}{$C_1^\text{sep}$}}&\bar{Z}_M&\bar{Z}_{i=1,\dots,k-1}
\end{array}\right\rangle.
\end{equation}
The elements highlighted in gray are a generating set of $\mathcal{U}$ and the $\bar{X}_i$ and $\bar{Z}_i$ in the last column are the generators of $\mathcal{L}$. 
A measurement fault is an element of $\mathcal{C}(\mathcal{S}\cap\mathcal{S}_X)$ that includes $\bar{Z}_M$, the only generator of $\mathcal{C}(\mathcal{S}\cap\mathcal{S}_X)$ that anticommutes with $\bar{X}_M$. 
So, for this first case, measurement faults are exactly the elements of $\bar{Z}_M\mathcal{L}\mathcal{U}$. Similarly, logical faults are elements of $\mathcal{C}(\mathcal{S}\cap\mathcal{S}_X)$ that do not include $\bar{Z}_M$ (since Definition~\ref{def:fault_distance} specifies that logical faults do not flip the logical measurement) and include at least one of $\bar{X}_i$ or $\bar{Z}_i$. Thus, logical faults are exactly the elements of $\mathcal{L}^*\mathcal{U}$.

Second,
\begin{equation}\label{eq:generating_set_case_2}
\mathcal{C}(\mathcal{S}_X)=
\left\langle\begin{array}{cc}
\mathcal{S}_X&\bar{X}_{i=1,\dots,k-1}\\
&\bar{Z}_{i=1,\dots,k-1}
\end{array}\right\rangle.
\end{equation}
In this case, it is not possible to get a measurement fault, because $\bar{X}_M\in\mathcal{S}_X$ is a detector. Similar to the first case, logical faults are exactly the elements of $\mathcal{L}^*\mathcal{S}_X$.

Third,
\begin{equation}\label{eq:generating_set_case_3}
\mathcal{C}(\mathcal{S})=
\left\langle\begin{array}{ccc}
\text{\colorbox{gray!30}{$\mathcal{S}$}}&\text{\colorbox{gray!30}{$\bar{X}_M$}}&\bar{X}_{i=1,\dots,k-1}\\
&\bar{Z}_M&\bar{Z}_{i=1,\dots,k-1}
\end{array}\right\rangle.
\end{equation}
The highlighted elements are generators of $\mathcal{S}'$. Similar to the first case above, measurement faults in this third case are exactly elements of $\bar{Z}_M\mathcal{L}\mathcal{S}'$ and logical faults are exactly elements of $\mathcal{L}^*\mathcal{S}'$.

Therefore, combining the three cases we see that the logical fault distance is at least 
\begin{equation}
\min\left(d(\mathcal{L}^*\mathcal{S}'),d(\mathcal{L}^*\mathcal{U}),d(\mathcal{L}^*\mathcal{S}_X)\right)=d(\mathcal{L}^*\mathcal{U}),
\end{equation}
where we used Eq.~\eqref{eq:distance_relations}. The logical fault distance is exactly $d(\mathcal{L}^*\mathcal{U})$ because an error from $\mathcal{L}^*\mathcal{U}$ can happen between $\mathcal{M}_0$ and $\mathcal{M}_1$ without necessitating any measurement errors to hide it from detection.

Similarly, combining the three cases with the case of $R$ measurement errors we find the measurement fault distance is at least $\min\left(R,d(\bar{Z}_M\mathcal{L}\mathcal{S}'),d(\bar{Z}_M\mathcal{L}\mathcal{U})\right)=\min\left(R,d(\bar{Z}_M\mathcal{L}\mathcal{U})\right)$, again using Eq.~\eqref{eq:distance_relations}. Again, this is exactly the measurement fault distance because all qubit errors could occur between $\mathcal{M}_0$ and $\mathcal{M}_1$.
\end{proof}

Though we proved it in the special case of measuring $\bar{X}_M$ for simplicity, this lemma can be generalized to apply to any of the ancilla measurement schemes in this paper. There is always an original stabilizer group $\mathcal{S}_O$ and merged code group $\mathcal{S}_M$ measured $R\ge1$ times. Let $\mathcal{U}=\langle\mathcal{S}_O,\mathcal{S}_M\rangle$. If $\bar{P}_M$ is the logical operator being measured, let $\bar{Q}_M$ be a Pauli that anticommutes with it. We can establish a group of logical operators for $k-1$ qubits $\mathcal{L}$ that is the same for all the codes, and such that elements of $\mathcal{L}$ commute with both $\bar{P}_M$ and $\bar{Q}_M$. The lemma then applies with $\bar{Q}_M$ replacing $\bar{Z}_M$. For instance, if we are measuring $\bar{P}_M=\bar{X}_1\bar{X}_2$ as in the next section, we can take $\bar{Q}_M=\bar{Z}_1$, set up $\mathcal{L}=\langle\bar{X}_2,\bar{Z}_1\bar{Z}_2,\bar{X}_i,\bar{Z}_i,\forall i=3,...k\rangle$, and calculate $d(\bar{Q}_M\mathcal{L}\mathcal{U})$ and $d(\mathcal{L}^*\mathcal{U})$ to bound the measurement and logical fault distances.

However, when we measure $X$-type (or, symmetrically, $Z$-type) operators on a CSS code, we can be more specific about the fault distances in Lemma~\ref{lem:subsystem_codes_and_fault_distance}. We say a Pauli is $X$-type (resp.~$Z$-type) if it acts as either $I$ or $X$ (resp.~$Z$) on all qubits. Let $d_X(\mathcal{T})$ and $d_Z(\mathcal{T})$ indicate the lowest weight of any $X$- or $Z$-type operator in Pauli set $\mathcal{T}$, respectively. Then we have the following lemma for the case of measuring $\bar{X}_M$.
\begin{lemma}\label{lem:CSS_subsystem_case}
With $\mathcal{S}$, $\mathcal{S}_X$, $\mathcal{U}$, and $\mathcal{L}$ defined above for the $X$ ancilla system, $d(\mathcal{L}^*\mathcal{U})=\min(d_Z(\mathcal{L}^*\mathcal{S}),d_X(\mathcal{L}^*\mathcal{S}_X))$ and $d(\bar{Z}_M\mathcal{L}\mathcal{U})=d_Z(\bar{Z}_M\mathcal{L}\mathcal{S})$.
\end{lemma}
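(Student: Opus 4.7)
The plan is to exploit the CSS structure pervasive in every object appearing in the lemma. Every listed generator of $\mathcal{S}$, $\mathcal{S}_X$, and $\mathcal{U}$ is either $X$-type or $Z$-type, and $\mathcal{L}$ is spanned by pure-$X$ logicals $\bar{X}_i$ and pure-$Z$ logicals $\bar{Z}_i$, with $\bar{Z}_M$ itself $Z$-type. So any Pauli $P\in\mathcal{L}^*\mathcal{U}$ or $P\in\bar{Z}_M\mathcal{L}\mathcal{U}$ factors uniquely as $P=P_XP_Z$ with $\supp(P_X),\supp(P_Z)\subseteq\supp(P)$, hence $|P|\ge\max(|P_X|,|P_Z|)$. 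The crucial observation is that the $X$-type subgroup of $\mathcal{U}$ coincides with the $X$-type subgroup of $\mathcal{S}_X$ (both are generated by $C^X\cup V_{j\text{ odd}}$), while the $Z$-type subgroup of $\mathcal{U}$ coincides with the $Z$-type subgroup of $\mathcal{S}$ (both are generated by $\mathcal{C}_0^Z\cup C_{i\text{ even}}\cup C_1^\text{sep}$, since $U_L$ is already contained in $\mathcal{S}$ per the excerpt's parenthetical remark).

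For the logical-fault identity $d(\mathcal{L}^*\mathcal{U})=\min(d_X(\mathcal{L}^*\mathcal{S}_X),d_Z(\mathcal{L}^*\mathcal{S}))$, I would take a minimum-weight $P=\bar{L}u\in\mathcal{L}^*\mathcal{U}$ and decompose $\bar{L}=\bar{L}_X\bar{L}_Z$, $u=u_Xu_Z$, so that $P_X=\bar{L}_Xu_X$ and $P_Z=\bar{L}_Zu_Z$. Because $\bar{L}\ne I$, at least one of $\bar{L}_X,\bar{L}_Z$ is nontrivial, and the identification from the first paragraph places the corresponding factor in $\mathcal{L}^*\mathcal{S}_X$ (as an $X$-type element) or in $\mathcal{L}^*\mathcal{S}$ (as a $Z$-type element), yielding $|P|\ge\min(d_X(\mathcal{L}^*\mathcal{S}_X),d_Z(\mathcal{L}^*\mathcal{S}))$. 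The matching upper bound is immediate from $\mathcal{S},\mathcal{S}_X\le\mathcal{U}$, since any $X$-type representative achieving $d_X(\mathcal{L}^*\mathcal{S}_X)$ or $Z$-type representative achieving $d_Z(\mathcal{L}^*\mathcal{S})$ already lies in $\mathcal{L}^*\mathcal{U}$.

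The measurement-fault identity $d(\bar{Z}_M\mathcal{L}\mathcal{U})=d_Z(\bar{Z}_M\mathcal{L}\mathcal{S})$ follows by the same template: for $P=\bar{Z}_M\bar{L}u$ the $Z$-factor $P_Z=\bar{Z}_M\bar{L}_Zu_Z$ always lies in $\bar{Z}_M\mathcal{L}\mathcal{S}$ and is nontrivial because it continues to anticommute with $\bar{X}_M$, forcing $|P|\ge|P_Z|\ge d_Z(\bar{Z}_M\mathcal{L}\mathcal{S})$; the converse inclusion is again trivial from $\mathcal{S}\le\mathcal{U}$. The only nontrivial step in the whole argument is the $Z$-type matching $\mathcal{U}_Z=\mathcal{S}_Z$, which requires checking that promoting the gauge-fixing generators $U_L$ of $\mathcal{S}_X$ into $\mathcal{U}$ does not enlarge its $Z$-part beyond what $\mathcal{S}$ already generates; this is exactly the $U_L\subseteq\mathcal{S}$ fact already noted in the excerpt, so everything else is CSS bookkeeping.
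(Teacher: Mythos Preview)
Your proposal is correct and takes essentially the same approach as the paper: exploit the CSS structure to split any $P\in\mathcal{L}^*\mathcal{U}$ (or $\bar{Z}_M\mathcal{L}\mathcal{U}$) into pure $X$- and $Z$-parts, then use the identifications $\mathcal{U}_X=(\mathcal{S}_X)_X$ and $\mathcal{U}_Z=\mathcal{S}_Z$ (the latter via $U_L\subseteq\mathcal{S}$) to land each part in the appropriate set. The paper's proof is more terse---it first reduces $d(\mathcal{L}^*\mathcal{U})$ to $\min(d_X(\mathcal{L}^*\mathcal{U}),d_Z(\mathcal{L}^*\mathcal{U}))$ as a one-line CSS fact and then invokes the subgroup identifications---but your explicit decomposition $P_X=\bar{L}_Xu_X$, $P_Z=\bar{L}_Zu_Z$ is exactly what underlies that step.
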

\begin{proof}
Since $\mathcal{U}$ and $\mathcal{L}$ are generated by only $X$- and $Z$-type Paulis (i.e.~they are CSS) and $\bar{Z}_M$ is $Z$-type, 
\begin{equation}
d(\mathcal{L}^*\mathcal{U})=\min(d_Z(\mathcal{L}^*\mathcal{U}),d_X(\mathcal{L}^*\mathcal{U}))\quad\text{and}\quad d(\bar{Z}_M\mathcal{L}\mathcal{U})=d_Z(\bar{Z}_M\mathcal{L}\mathcal{U}).
\end{equation}
Now note that the $Z$-type operators of $\mathcal{U}$ are exactly the same group of $Z$-type operators in $\mathcal{S}$, and so $d_Z(\mathcal{L}^*\mathcal{U})=d_Z(\mathcal{L}^*\mathcal{S})$ and $d(\bar{Z}_M\mathcal{L}\mathcal{U})=d_Z(\bar{Z}_M\mathcal{L}\mathcal{S})$. Likewise, the $X$-type operators of $\mathcal{U}$ are the same group as the $X$-type operators of $\mathcal{S}_X$, and so $d_X(\mathcal{L}^*\mathcal{U})=d_X(\mathcal{L}^*\mathcal{S}_X)$.
\end{proof}

In particular, if the original code has code distance $d$, then $d_Z(\mathcal{L}^*\mathcal{S})\ge d$ and $d_Z(\bar{Z}_M\mathcal{L}\mathcal{S})\ge d$ are automatic. Thus, Lemmas \ref{lem:subsystem_codes_and_fault_distance} and \ref{lem:CSS_subsystem_case} combine to show the fault distance of the $X$ ancilla system is large provided the merged code $\mathcal{S}_X$ has large $X$ distance.

\begin{theorem}\label{thm:logical_and_measurement_distance}
Let $d$ be the distance of the original code. If the stabilizers of the merged code are measured at least $d$ times, the measurement fault distance of the $X$ system is at least $d$. If the merged code has $X$ distance at least $d$ (e.g.~satisfies Theorem~\ref{thm:Xdistance}), the logical fault distance of the $X$ system is also at least $d$.
\end{theorem}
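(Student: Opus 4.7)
The plan is to chain Lemmas~\ref{lem:subsystem_codes_and_fault_distance} and~\ref{lem:CSS_subsystem_case}, giving measurement fault distance $\min(R, d_Z(\bar{Z}_M\mathcal{L}\mathcal{S}))$ and logical fault distance $\min(d_Z(\mathcal{L}^*\mathcal{S}), d_X(\mathcal{L}^*\mathcal{S}_X))$. The hypothesis $R \ge d$ handles the $R$ term of the former, and the assumed $X$-distance of the merged code handles $d_X(\mathcal{L}^*\mathcal{S}_X) \ge d$. What remains is to derive the two $Z$-distance bounds $d_Z(\bar{Z}_M\mathcal{L}\mathcal{S}) \ge d$ and $d_Z(\mathcal{L}^*\mathcal{S}) \ge d$ from the distance-$d$ hypothesis on the original code $\mathcal{G}$.

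For both bounds, I would proceed by restricting to the original-code qubits $\mathcal{V}$. Inspecting the generating set $\mathcal{S} = \langle \mathcal{C}^X, \mathcal{C}_0^Z, V_{j\ge 3\text{ odd}}, C_{i\text{ even}}, C_1^\text{sep}\rangle$, every generator either lives entirely on $\mathcal{V}$ (and is a $\mathcal{G}$-stabilizer) or lives entirely off $\mathcal{V}$ on the interface $C_1$ and the module qubits; the only subtle case is the $C_0 \subseteq C_{i\text{ even}}$ generators, which touch both $\mathcal{V}$ and $C_1$ but become original $\mathcal{G}$-stabilizers after multiplication by elements of $C_1^\text{sep}$. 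Thus, writing any $Z$-type $P$ in the relevant coset set as $P = P_{\mathcal{V}} \cdot P_\text{ext}$ according to qubit location, the piece $P_{\mathcal{V}}$ must be a $Z$-type centralizer element of $\mathcal{G}$, and tracking cosets through the generators shows that it lies in the analogous nontrivial logical coset of $\mathcal{G}$ (either $\mathcal{L}^*\mathcal{S}_\mathcal{G}$ or $\bar{Z}_M \mathcal{L}\mathcal{S}_\mathcal{G}$, with $\mathcal{S}_\mathcal{G}$ the original-code stabilizer group). The distance hypothesis on $\mathcal{G}$ then gives $|P| \ge |P_{\mathcal{V}}| \ge d$.

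The step I expect to require the most care is verifying that $P_{\mathcal{V}}$ actually sits in the nontrivial coset rather than collapsing to a $\mathcal{G}$-stabilizer --- in particular, for the $\bar{Z}_M \mathcal{L} \mathcal{S}$ case, that the $\bar{Z}_M$ factor survives the restriction. This boils down to the observation that none of the off-$\mathcal{V}$ $\mathcal{S}$-generators can cancel the $\bar{Z}_M$ content of $P_{\mathcal{V}}$, while the $\mathcal{V}$-supported ones are $\mathcal{G}$-stabilizers that cannot reduce a nontrivial logical to triviality. Irreducibility of $\bar{X}_M$ together with Lemma~\ref{lem:supportlemma}, which ensures a $\mathcal{V}$-supported logical basis $\mathcal{L}$ exists, is what makes this bookkeeping go through cleanly and yields the theorem.
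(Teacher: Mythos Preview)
Your proposal is correct and follows exactly the paper's approach: chain Lemmas~\ref{lem:subsystem_codes_and_fault_distance} and~\ref{lem:CSS_subsystem_case}, then observe that $d_Z(\bar{Z}_M\mathcal{L}\mathcal{S})\ge d$ and $d_Z(\mathcal{L}^*\mathcal{S})\ge d$ follow from the original code having distance $d$. The paper in fact treats these two $Z$-distance bounds as ``automatic'' in a single sentence just before the theorem statement; your restriction-to-$\mathcal{V}$ argument is precisely the justification behind that word, since the $Z$-type part of $\mathcal{S}$ restricted to $\mathcal{V}$ is exactly the $Z$-stabilizer group of $\mathcal{G}$ (after absorbing $C_1^\text{sep}$ into the $C_0$ checks, as you noted), while the logical basis $\mathcal{L}$ and $\bar{Z}_M$ are already supported on $\mathcal{V}$ by the setup preceding Lemma~\ref{lem:subsystem_codes_and_fault_distance}.
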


We note that initializing the module code on qubits $V_{i\ge2\text{ even}}\cup C_{j\ge3\text{ odd}}$ was used in our $\bar{X}_M$ measurement protocol just to make a specific choice of codes at the merge and split steps. This is flexible to some extent. For instance, one could instead initialize all qubits $V_{i\ge2\text{ even}}\cup C_{j\ge1\text{ odd}}$ to $\ket{0}$ and make the same arguments to arrive at Lemma~\ref{lem:subsystem_codes_and_fault_distance} and Theorem~\ref{thm:logical_and_measurement_distance}. The key property we need from a choice of initialization is that the $Z$ checks are reliable at the merge and split.

\subsection{Redundancy of Gauge Checks}\label{sec:checkweight}
As discussed in section~\ref{sec:gaugedXancilla}, without further assumption on the induced Tanner graph $F$, we cannot prove that $\text{null}(F)$ has a low weight basis. 
This means we do not have a worst case upper bound on the weight of gauge checks in $U_L$. 
We believe that this is not an issue in average case applications, for two reasons.
First, many of these gauge checks may be redundant and therefore do not need to be explicitly measured. 
Second, for a particularly heavy gauge check, we may decompose it into small gauge checks by a process we call cellulation. 

We first discuss redundancy. Let $c\in C_1$ be such that $cF = 0$, then $Z(c)$ is a gauge-fixed operator in the ancilla system. Observe that 
\[
Z(c)\cdot \mathcal{H}_Z(c\in C_0) = H_Z(c\in C_0),
\]
where the right hand side is a stabilizer in the original code $G$, with no support on $V_0$. 
If there exists a collection of checks $\bar{c}\subset \mathcal{C}_0^Z$ such that $H_Z(\bar{c}) = H_Z(c)$, then the gauge check $Z(c)$ will be redundant in the merged code. 
To summarize, we have just shown the following lemma.

\begin{lemma}~\label{lem:redundancy}
    For a vector $c\in C_0$ with $cF = 0$, if there exists vector $\bar{c}\in \mathcal{C}_0^Z$ such that $H_Z(\bar{c}) = H_Z(c)$, then the gauge check $Z(c\in C_1)$ is redundant in the merged code $\mathcal{G}_X$ and can be omitted from the Tanner graph. 
\end{lemma}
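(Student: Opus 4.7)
The plan is to exhibit the gauge check $Z(c \in C_1)$ explicitly as a product of stabilizers already present in $\mathcal{G}_X$. The starting point is the decomposition of a merged-code check: since each check in $C_0$ acquires an identity connection $I : C_0 \to_Z C_1$ to the interface qubit carrying the same index, we have
\[
\mathcal{H}_Z(c \in C_0) \;=\; H_Z(c \in C_0) \cdot Z(c \in C_1),
\]
where on the left $c$ indexes checks in $C_0$ and on the right it indexes qubits in $C_1$ under the natural bijection between the two. Rearranging immediately gives $Z(c \in C_1) = \mathcal{H}_Z(c \in C_0) \cdot H_Z(c \in C_0)$, which is the identity already displayed just above the lemma.

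Next, I would exploit the hypothesis $cF = 0$: this forces the restriction of $H_Z(c \in C_0)$ to $V_0$ to vanish, so that $H_Z(c \in C_0)$, viewed as a Pauli on $\mathcal{V}$, is a $Z$-stabilizer of $\mathcal{G}$ supported entirely outside $V_0$. The existence of $\bar c \in \mathcal{C}_0^Z$ with $H_Z(\bar c) = H_Z(c)$ is then the crucial input: it expresses this same Pauli as a product of checks that have \emph{no} neighbors in $V_0$. Because such checks are disjoint from $C_0$, they receive no identity connection to the interface when the ancilla is attached, and therefore they are unchanged in the merged code, i.e.\ $\mathcal{H}_Z(\bar c \in \mathcal{C}_0^Z) = H_Z(\bar c \in \mathcal{C}_0^Z)$.

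Substituting, I obtain
\[
Z(c \in C_1) \;=\; \mathcal{H}_Z(c \in C_0) \cdot \mathcal{H}_Z(\bar c \in \mathcal{C}_0^Z),
\]
which exhibits the gauge check as a product of $Z$-stabilizers already present in $\mathcal{G}_X$. Hence measuring $Z(c \in C_1)$ adds no new information and the corresponding node in $U_L$ may be dropped from the Tanner graph.

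The main obstacle here is essentially bookkeeping rather than content: one has to track the dual role of $c$ as simultaneously a subset of checks in $C_0$ and the indicator vector of qubits in $C_1$, and remember that $\mathcal{C}_0^Z$ denotes the $Z$-checks of $\mathcal{G}$ lying \emph{outside} $C_0$ (and hence unaffected by the merge). Once those conventions are unpacked, the proof reduces to the two-line algebraic identity above and does not require any further structural argument about $\mathcal{G}$.
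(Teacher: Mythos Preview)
Your proof is correct and is essentially the same argument the paper gives in the paragraph immediately preceding the lemma: both exploit the identity $Z(c\in C_1)\cdot\mathcal{H}_Z(c\in C_0)=H_Z(c\in C_0)$, observe that $cF=0$ forces $H_Z(c\in C_0)$ to have no support on $V_0$, and then use the hypothesis $H_Z(\bar c)=H_Z(c)$ with $\bar c\in\mathcal{C}_0^Z$ to rewrite this as a product of merged-code stabilizers. Your write-up is simply a more careful unpacking of the bookkeeping that the paper leaves implicit.
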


This observation suggests an important heuristic for applying our scheme: if the stabilizers of the original code has large degrees of redundancy, then many (if not all) the gauge checks in our ancilla system may be omitted. 
We note that this redundancy condition is not hard to satisfy. 
Many important families of QLDPC codes, such as hypergraph product codes and lifted product codes, have structured redundancy among its stabilizer checks. We demonstrate this principle with the example of hypergraph product codes. 

Consider a classical code specified by a binary $m\times n$ parity check matrix $H$. 
The hypergraph product code $Q = \text{HGP}(H, H^\top)$ has parity check matrices $H_Z = [H\otimes I_n, I_m\otimes H^\top]$ and $H_X = [I_n\otimes H, H^\top\otimes I_m]$. Suppose $\rank(H) = r$, then $Q$ is defined on $n^2 + m^2$ physical qubits, and has logical dimension $(n-r)^2 + (m-r)^2$. 

We specify a basis for the $X$-logical operators of $Q$. Let $\im(\cdot)$ denote the image of a linear operator, we first choose bases for four linear spaces. Let $W = \{w_1, \cdots, w_{n-r}\}$ be an arbitrary basis of $\ker{H}$, $V = \{v_1, \cdots, v_{m-r}\}$ be an arbitrary basis of $\ker{H^\top}$. For the vector space $\FF_2^n/\im(H^\top)$, 
we take the reduced row echelon form of $H$, and let $J\subset [n]$ denote the indices of columns with pivots. We can take $R = \{e_i: i\in [n]\setminus J\}$, which is a basis of $\FF_2^n/\im(H^\top)$ such that every vector has weight one. For $\FF_2^m/\im(H)$, we pick $S = \{s_1, \cdots, s_{m-r}\}$ similarly. 
Now consider the sets
\begin{align}
    X_{n\times n} &= \{(w\otimes r, 0_{m\times m}): w\in W\subset \ker(H), r\in R\subset \FF_2^n/\im(H^\top) \}, \\
    X_{m\times m} &= \{(0_{n\times n}, s\otimes v): s\in S\subset \FF_2^m/\im(H), v\in V\subset \ker(H^\top)\}.
\end{align}
We cite as a fact that these two sets specify a complete basis of the $X$-logical operators of $Q$. Note that $|X_{n\times n}| = (n-r)^2$ and $|X_{m\times m}| = (m-r)^2$.
Fix an $X$-operator $P$ supported on $(w\otimes r, 0_{m\times m})\in X_{n\times n}$, we will analyze the gauge-fixed ancilla system built on $P$. The case where $P$ is supported on a basis vector in $X_{m\times m}$ follows similarly. 

Consider the induced Tanner graph of $P$, which consists of qubits $V_0 = \supp(w\otimes r)$ and all $Z$-stabilizer checks that touch qubits in $V_0$. By construction of the hypergraph product code $Q$, we can label the $Z$-checks of $Q$ by $f_j\otimes e_i$ where $f_j$ ranges over standard basis vectors of $\FF_2^m$, and $e_i$ ranges over standard basis vectors of $\FF_2^n$. The support of such a $Z$-stabilizer is
\begin{align}
    H_Z(f_j\otimes e_i) = ((H^\top f_j)\otimes e_i, f_j\otimes (He_i)).
\end{align}
Let $C = \{f_j\in \FF_2^m: H^\top f_j\cap w\ne \varnothing\}$, then $C\otimes r = \{f\otimes r: f\in C\}$ is exactly the set of $Z$-checks of $Q$ that touches $V_0$. 
Let $F$ be the submatrix of $H$ induced by $C$ and $\supp(w)$, namely, $F: C \rightarrow \supp(w)$. Since $r$ is a standard basis vector, we see that $F$
is also the submatrix of $H_Z$ induced by the qubits $V_0$ and $Z$-checks $C_0 = C\otimes r$. Therefore, in a $L$ layer gauge-fixed ancilla system build on $P$, we can identify every $C_i$ as a copy of $C$, and every $V_i$ as a copy of $\supp(w)$. 

The gauge $Z$-check of our system corresponds to vectors $c\in \text{null}(F)\subset C$, i.e., $c F = 0$. 
We consider a subset of these gauge checks which corresponds to vectors in $\text{null}(H)$. 
Specifically, for every $u\in \text{null}(H)$, consider the restriction of $u$ to the indices in $C$, which we denote by $u\vert_C$. 
By construction of $C$, we see that $u\vert_C F = 0$ which means $u\vert_C\in \text{null}(F)$.
We show that the gauge checks corresponding to these vectors are redundant. 

Fix a vector $c\in C$ such that $c = u\vert_C$ for some $u\in \text{null}(H)$.
Consider the gauge check $Z(c)$ and its equivalent representative supported on qubits in $C_1$. 
Let $\MH_Z$ denote the new $Z$-check matrix with the ancilla system. Consider the product of $Z$-checks $c\otimes r\in \mathcal{C}^Z$, which is
\begin{align}
    \MH_Z(c\otimes r\in \mathcal{C}^Z) 
    &= Z(c\in C_1)\cdot H_Z(c\otimes r) = Z(c\in C_1)\cdot Z((H^\top c)\otimes r, c\otimes (Hr)).
\end{align}
Since $H^\top c = 0$, we have
\begin{align}
    Z(c\in C_1)\cdot \MH_Z(c\otimes r\in \mathcal{C}^Z)
    &= Z(0_{n\times n}, c\otimes (Hr)).
\end{align}
By construction, $r\in \FF_2^n/\im(H^\top)$. Therefore, there does not exist $\bar{c}\in \FF_2^m$ such that $H^\top\bar{c} = r$. In other words, $r$ is not in the row space of $H$, so $r$ is not orthogonal to the codespace $\ker(H)$. This implies that there exists vector $\bar{v}\in \ker(H)$ such that $\bar{v}\cdot r = 1$. Let $v = \bar{v} - r$, then $Hv = Hr$. Consider the product of $Z$-checks $c\otimes v$, we have 
\begin{align}
    \MH_Z(c\otimes v) 
    &= H_Z(c\otimes v)\\
    &= Z((H^\top c)\otimes v, c\otimes (Hv)) \\
    &= Z(0_{n\times n}, c\otimes (Hr)).\\
    Z(c\in C_1) 
    &= \MH_Z(c\otimes v)\cdot \MH_Z(c\otimes r\in \mathcal{C}^Z).
\end{align}
Note that the first equation is true because the $Z$-checks in $c\otimes v$ are kept unchanged by the ancilla system, which is only attached to the checks $C\otimes r$. This proves that the gauge check $Z(c\in C_1)$ is a product of $Z$-stabilizers in $\mathcal{C}^Z$, and is therefore redundant.
We expect that similar analysis can be made for lifted product codes and potentially other families of QLDPC codes.

In an earlier version of this paper, we mistakenly claimed that all gauge checks $Z(c)$ for $c\in \text{null}(F)$ are redundant. 
We thank John Blue for pointing out an error in the earlier proof, which is now corrected.

In cases where there is a heavy, non-redundant gauge check, a further technique called cellulation \cite{hastings2021weight,sabo2024weight} could be employed. 
For a gauge check $Z(c)$, enumerate the checks in $c$ as $c_1, \cdots, c_k$. 
For a subset of indices $I\subset [k]$, let $\bar{c}_I$ denote the product of checks $\prod_{i\in I}\mathcal{H}^Z(c_i)$. 
If there exists $I\subset [k]$ such that $\bar{c}_I$ is low weight (where the definition of low depends on the use case), we can add $\bar{c}_I$ as a redundant check into the original code, and build the ancilla system on the modified code. 
This procedure breaks $Z(c)$ into a product of two gauge checks: one with $Z$ acting on $c_i$ for $i\in I$ and $\bar{c}_I$, the other with $Z$ acting on $c_j$ for $j\in [k]\setminus I$ and $\bar{c}_I$.
Each of these gauge checks now have lower weight than $Z(c)$. 
We remark that \cite{williamson2024low} utilized cellulation in conjunction of another technique called decongestion~\cite{freedman2021building} to guarantee that the measurement ancilla is LDPC for any QLDPC code.

\subsection{Modular Decoder Achieving Measurement Fault-Distance}\label{sec:modular_decoder}
Above we showed that the phenomenological fault-distances of the decoding graph $\mc{D}$ are $d$ for both measurement errors and logical errors whenever it is constructed with at least $R \geq d$ rounds. To ensure practicality of the scheme we must also ensure that a decoder can correct errors of weight $< d/2$. It is reasonable to assume that such a decoder already exists for the code $\mc{G}$ on its own. In this section we show that if there is a good decoder for the original code $\mc{G}$, then there also exists a good decoder for extracting measurement outcomes from the measurement protocol specified by $\mc{D}$.  

Since $\mc{G}$ and $\mc{G}_X$ are CSS codes, $X$ and $Z$ errors (and $Z$ and $X$ measurement errors) can be decoded separately. Since the measurement outcome is only affected by $Z$ errors (and $X$ measurement errors), we restrict our attention to these errors specifically. 

We divide the parity checks in $\mc{D}$ into two parts. $\DG$ contains the detectors on $\mc{C}^{X}$, the $X$ checks of the original code $\mc{G}$. $\Danc$ contains the checks on the remaining $X$ checks of $\mc{G}_X$ corresponding to vertices $V_j$ for odd $j$. We consider a modular approach where we first run a given decoder on $\DG$ only, and propagate the resulting corrections to the $X$ checks in $\Danc$. We show that there exists a decoder for decoding the remaining errors in $\Danc$ that cannot result in a correction that flips the measurement outcome unless the error has weight $\geq d/2$.

The measured observable $\bar X$ is the product of all $X$ checks in $\mc{G}_X$. Hence an error that flips the measurement outcome must flip an odd number of $X$ checks. Indeed, errors that flip an even number of checks can be identified using Lemma~\ref{lem:supportlemma}.

The main idea is to reduce decoding to a matching problem on a chain, where every node in the chain corresponds to a step in $\Danc$. Since errors that flip an even number of $X$-checks within a round of measurements cannot affect the measurement outcome, we can restrict our attention to the total parity of the detectors within each step. Then the matching decoder is the only part of the decoding algorithm that could introduce a measurement error. It is well-known that a matching decoder on a line of length $R$ can only fail to correct an error if the error has weight $\geq R/2$, which is $\geq d/2$ since $R \geq d$.  

\begin{theorem} Suppose there exists a decoder that, for any collection $E_\mc{G}$ of $Z$-qubit and $X$-measurement errors of weight $< d/2$ in the detector graph $\DG$, outputs a correction $E'_\mc{G}$ consisting of $Z$-qubit errors and $X$-measurement errors consistent with the detectors and equivalent to $E_\mc{G}$ up to the stabilizers of $\mc{G}$. 

Then there exists a decoder for $\mc{D}$ as a whole that outputs a correction $E'$ such that, if the true error is $E$ of weight $< d/2$, the product $EE'$ triggers no detectors and commutes with the measured observable $\bar X_M$.
\end{theorem}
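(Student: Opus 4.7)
The plan is to build the required decoder in two modular stages matching the split $\mc{D} = \DG \cup \Danc$, relying on the hypothesized $\mc{G}$-decoder to handle the first and a simple 1D matching decoder to handle the second, and then bound the error budget to ensure both stages succeed.

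Stage one: restrict the observed syndrome to the detectors of $\DG$. Since checks in $\mc{C}^X$ act only on $\mc{V}$, only the $Z$-qubit errors on $\mc{V}$ and $X$-measurement errors on $\mc{C}^X$ contribute to $\DG$; call this sub-error $E_{\mc{G}}$, with $|E_{\mc{G}}| \le |E| < d/2$. Feed this syndrome to the hypothesized decoder to obtain a correction $E'_{\mc{G}}$ consistent with $\DG$ and equivalent to $E_{\mc{G}}$ modulo $\mc{G}$-stabilizers. Thus $R := E_{\mc{G}} E'_{\mc{G}}$ is a spacetime product of $Z$-stabilizers of $\mc{G}$ (including time-like edges from repeated rounds).

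Stage two: propagate $E'_{\mc{G}}$ to update the syndrome of $\Danc$. The key structural observation is that $R$, being a $\mc{G}$-stabilizer, commutes with $\bar{X}_M = X(V_0)$, so its restriction to $V_0$ in any single round has even weight. The induced flips on the $V_j$ detectors in that round are then absorbable into a $\mc{G}_X$-stabilizer correction using Lemma~\ref{lem:supportlemma}: any even-weight subset $v$ of $V_0$ equals $cF$ for some $c \subset C_0$, and the corresponding $Z$-stabilizer $\mc{H}^Z(c \in C_0)$ of the merged code neutralizes the detector flips caused by $R$ without changing the parity of $V_1$ in that round. We append these stabilizers to the running correction.

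After this absorption, the remaining $\Danc$ syndrome comes only from errors that directly affect the ancilla---$Z$-errors on the ancilla qubits and $X$-measurement errors on the $V_j$ checks. By the same Lemma~\ref{lem:supportlemma} argument, within any single round even-parity detector flips are always absorbable by further $\mc{G}_X$-stabilizers, so the only way an error can flip the measurement outcome is to produce an odd-parity flip pattern of $V_1$ that spans a timelike cut of the chain of $R$ measurement rounds. Decoding this reduces to the standard 1D matching problem on a chain of length $R \ge d$ with boundaries at the reliable fusion and split rounds, in which each physical error contributes at most one edge. A matching decoder on such a chain succeeds whenever the error weight on the chain is strictly less than $R/2 \ge d/2$, which holds because $|E| < d/2$. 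Composing all corrections gives $E'$ such that $EE'$ triggers no detectors and leaves $p_t := \bigoplus_{v \in V_1} m_{v,t}$ unchanged, hence commutes with $\bar{X}_M$.

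The main obstacle is the bookkeeping in stage two: one must carefully verify that the stabilizer absorption from $R$ commutes with the boundary conditions at the fusion and split rounds (where some $Z$-checks become unreliable and are replaced by parities with the $C_1$ destructive measurements). This is exactly where the subsystem-code perspective of Lemma~\ref{lem:subsystem_codes_and_fault_distance} is invoked, to identify which check parities remain reliable across the round boundaries and thereby serve as the endpoints of the 1D matching chain.
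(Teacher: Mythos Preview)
Your two-stage modular plan---run the hypothesized $\mc{G}$-decoder on $\DG$, propagate to $\Danc$, then reduce to 1D matching with Lemma~\ref{lem:supportlemma} cleaning up even-parity residues---is exactly the paper's approach. Two points where your write-up diverges from the paper deserve comment.

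First, your justification for why the residual $R=E_{\mc{G}}E'_{\mc{G}}$ can be absorbed is not quite right. You claim that $R$, being a $\mc{G}$-stabilizer, has even weight on $V_0$ \emph{in each individual round}. But the hypothesis only guarantees that $R$ is trivial in the spacetime sense (no $\DG$ detectors triggered, no $\mc{G}$-logical flipped); this does not force the qubit part of $R$ at a fixed round to be a literal $Z$-stabilizer of $\mc{G}$. The paper sidesteps this by arguing directly at the level of $EE'$: since $E_{\mc{G}}E'_{\mc{G}}$ commutes with $\bar{X}_M$ by hypothesis, the worst-case $E$ for flipping the measurement can be taken to live entirely in $\Danc$, and then one only has to control the matching step. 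Your absorption step via $\mathcal{H}^Z(c\in C_0)$ is not needed and not obviously valid as stated.

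Second, the appeal to Lemma~\ref{lem:subsystem_codes_and_fault_distance} for the fusion/split boundary bookkeeping is a red herring; the paper's proof is entirely self-contained and never invokes that lemma. What actually happens at the boundaries is simpler: a $V_1$ measurement error in the first (resp.~last) round of $\mc{G}_X$ measurements flips only the first (resp.~last) $\Danc$ step, giving the two open endpoints of the matching chain. The paper then makes explicit what your sketch leaves implicit: the two possible matchings of odd steps have total length exactly $R$, so the minimum-weight matching has length $\le R/2$, and any measurement-flipping error must correspond to the other matching of length $\ge R/2 \ge d/2$. You should replace the vague ``each physical error contributes at most one edge'' with this complementary-length argument. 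There is also an intermediate step you omit: after matching makes every step even, one still needs $Z$-errors on $V_i$ for even $i$ to even out the individual layers $V_j$ within each step before Lemma~\ref{lem:supportlemma} applies layer by layer.
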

\begin{proof} The decoder operates as follows. First we run the provided decoder on $\DG$ to obtain the correction $E'_\mc{G}$. The $Z$ errors in $E'_\mc{G}$ may affect some $X$-checks on $V_0$. We flip the associated detectors in $\Danc$, so that the detector values in $\Danc$ only indicate $Z$-qubit errors a $X$-measurement errors on the interface and module.

Consider the steps $\Danc_1, \Danc_2, ... \Danc_{R}$. We call a step $\Danc_i$ \emph{odd} if an odd number of detectors trigger, and \emph{even} otherwise. Now we solve a matching problem on a line: the odd steps are sorted into adjacent pairs, possibly involving the endpoints of the sequence. In particular, suppose $\Danc_{j_1}, \Danc_{j_2},..., \Danc_{j_k}$ are the odd steps with $j_1 < j_2 < ... < j_k$. If $k$ is even then the two possible matchings are:
\begin{align}
(\Danc_{j_1}, \Danc_{j_2}), (\Danc_{j_3}, \Danc_{j_4})..., (\Danc_{j_{k-1}}, \Danc_{j_k}),\\
(\text{start},\Danc_{j_1}), (\Danc_{j_2}, \Danc_{j_3}), ..., (\Danc_{j_{k}}, \text{end}).
\end{align}
and if $k$ is odd, the two possible matchings are:
\begin{align}
(\Danc_{j_1}, \Danc_{j_2}), (\Danc_{j_3}, \Danc_{j_4})..., (\Danc_{j_{k}}, \text{end}),\\
(\text{start},\Danc_{j_1}), (\Danc_{j_2}, \Danc_{j_3}), ..., (\Danc_{j_{k-1}}, \Danc_{j_{k}}).
\end{align}
We select the matching that minimizes the length between paired steps. For example, the lengths of the matchings when $k$ is odd are $(j_2 - j_1) + (j_4 - j_3) + ... + (R - j_k)$, and $(j_1 - 0) + (j_2 - j_1) + ... + (j_{k} - j_{k-1})$ respectively.

We now identify errors in $E'$ that cause each pair of odd steps $(\Danc_j, \Danc_k)$ to become even. We select any single $X$-check in $V_1$ and insert measurement errors for all measurements in between the two steps. Inserting such errors flips one detector each in $\Danc_j$ and $\Danc_k$, making them even, while keeping the steps in between identical. A similar process works for $(\text{start},\Danc_{j_1})$ and $(\Danc_{j_{k}}, \text{end})$ pairs since flipping a $V_1$ measurement result in the first or final round of measurements only affects detector steps $\Danc_{1}$ and $\Danc_{R}$ respectively.

After introducing the above errors, all the steps are even. However, a particular even step may still feature an odd number of triggered detectors within each group of $X$-check vertices $V_j$ for odd $j$. Since the step itself is even, there must be an even number of such odd groups. $Z$-qubit errors on $V_i$ for even $i$ in that layer flip pairs of detectors based on $V_j$ checks within one step.  Hence we can find a choice of such $Z$-errors that cause all groups of $X$-checks on $V_j$ for odd $j$ to be even.

To finish the decoding, we leverage Lemma~\ref{lem:supportlemma} to identify $Z$ errors on each $C_j$ for odd $j$ that flip the corresponding $X$ checks on $V_j$. We return the product of all identified errors and $E'_\mc{G}$ as $E'$ for $\mc{D}$ as a whole. Since inserting these errors untriggered all detectors, the product $EE'$ must trigger no detectors.

Suppose the true error on $\mc{D}$ was $E$ of weight $< d/2$. It remains to show that $EE'$ commutes with the measured observable $\bar X$. $\bar X$ is the product of all the $X$-checks on $V_j$ for odd $j$. Here we point out that several parts of $E$ introduced above do not affect the measurement result. Hence a worst-case error $E$ does not feature these parts, since they needlessy increase the weight. This lets us restrict our attention to corrections $E$ of a certain form. Since $E_\mc{G}E'_\mc{G}$ is equivalent to the identity up to the stabilizer of $\mathcal{G}$ by assumption, we can restrict our attention to an error $E$ of weight $<d/2$ supported on $\Danc$ only. Furthermore, since the errors identified from Lemma~\ref{lem:supportlemma} flip an even number of $X$-checks in any $V_j$ for odd $j$, these must commute with $\bar X$ as well. So we may furthermore assume that $E$ flips at most one detector for each $V_j$ within each layer. Similarly, flipping an even number of bits within the odd $V_j$ groups within an even layer also commutes with the measurement outcome, so can assume that errors only occur on $V_1$.

So the only part of the decoding process that could cause $EE'$ to flip the measurement outcome is the matching problem. Measurement errors on $V_1$ introduced as part of $E'$ could form a chain of errors within $EE'$ connecting the first round to the final round. Then $EE'$ triggers no detectors but flips the logical measurement outcome. Suppose $EE'$ is as such. Observe that the sum of the distances between paired layers for the two possible matchings is $R$, for both the even and odd $k$ case. That means that one of them has length $\leq R/2$ and the other one $\geq R/2$. Since the decoder always picks the shorter matching, it will always pick one of length $\leq R/2$. For $EE'$ form the entire chain, $E$ must then correspond to the matching of length $\geq R/2$. 
\end{proof}

\begin{remark}
    We note that while the decoding strategy described above correctly achieves decoding distance $\lfloor d/2\rfloor$, as stated it does not have a threshold. 
    The reason is that under stochastic noise, the parity of each step $\Danc_i$ is odd with probability converging to $1/2$, which means the matching decoder on a line will fail with probability converging to $1/2$. This is a shortcoming of the proof, but not of the modular decoding method itself. 
    We can instead use minimum weight decoders on both $\DG$ and $\Danc$, which we expect to have a threshold asymptotically.

    We further observe that if every qubit in $C_1$ is adjacent to exactly two checks from $V_1$, then we can simply use a minimum-weight perfect matching (MWPM) decoder on $\Danc$. This condition is equivalent to every $Z$-check in $C_0$ touching two qubits in $V_0$. This is not true for arbitrary codes and logical operators. However, in our case study on the $\llbracket 144,12,12\rrbracket$ code in Section~\ref{sec:gross_code}, we find logical operators where this condition is satisfied. We therefore employ the modular decoder with BPOSD and MWPM in our numerical simulations presented in Section~\ref{sec:gross_code}.
\end{remark}

\subsection{Joint \texorpdfstring{$X$}{X} Measurement}\label{sec:jointmeasurement}

In this section, given ancilla systems for measuring two logical $X$ operators $\bar{X}_1$ and $\bar{X}_2$ that do not share support, we present a method to measure the joint operator $\bar{X}_1\bar{X}_2$ by connecting the two systems. For instance, this joint system can be used to connect two distinct codeblocks provided they are each equipped with an ancilla system, and in this case, we prove that the fault and code distances of the joint system are the minimum of the distances of the individual systems. If $\bar{X}_1$ and $\bar{X}_2$ do share support, then $\bar{X}_1\bar{X}_2$ is best measured with an $X$ ancilla system as described in Section~\ref{sec:gaugedXancilla}.

Our method of joint measurement is distinct from the proposal in \cite{cohen2022low}, not just from the fact that our ancilla systems are gauge-fixed, but also because their approach actually requires $2d-1$ layers to preserve the code distance and, more specifically, to preserve the weight of $\bar{X}_1$ or $\bar{X}_2$, which are still logical operators in the merged code. Instead of more layers, we introduce a collection of $\min(|\bar{X}_1|,|\bar{X}_2|)$ qubits that are connected to $X$ checks in both ancilla systems. We term these qubits \emph{bridge} qubits. The joint system is depicted in Figure~\ref{fig:joint_ancilla}.

\begin{figure}
\centering
\includegraphics[width=0.75\textwidth]{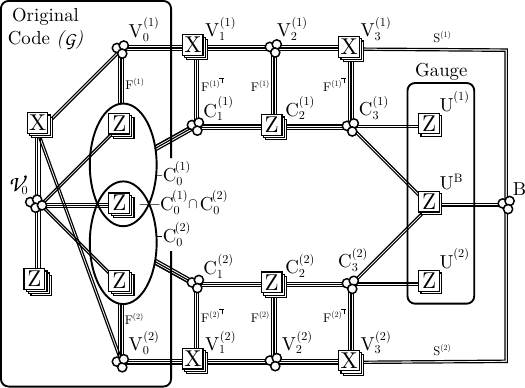}
\caption{\label{fig:joint_ancilla} The ancilla system for measuring the joint logical operator $\bar{X}_1\bar{X}_2$, where $\supp{\bar{X}_1}=V_0^{(1)}$ and $\supp{\bar{X}_2}=V_0^{(2)}$ are disjoint. The joint system is exactly the two systems for measuring $\bar{X}_1$ and $\bar{X}_2$ with the addition of bridge qubits $B$ and gauge checks $U^{B}$. Checks in $C_0^{(1)}\cap C_0^{(2)}$ are each connected to two ancilla qubits, one from $C_1^{(1)}$ and one from $C_1^{(2)}$. Although not strictly necessary, the drawing assumes that the number of layers in the $\bar{X}_1$ and $\bar{X}_2$ systems are the same and the bridge connects the last layers together. In general, the bridge $B$ can be used to connect together any odd layer $j_1$ from the first system and odd layer $j_2$ of second.}
\end{figure}

To describe how the bridge qubits are connected, we need to establish notation distinguishing the two ancilla systems for measuring $\bar{X}_1$ and $\bar{X}_2$ individually. Generally, we keep the notation from Figure~\ref{fig:Xancilla}, but we superscript sets of qubits and checks by $(1)$ or $(2)$. The collection of bridge qubits are indicated by $B$ and a new set of gauge checks that act on bridge qubits is denoted $U^{B}$. See again Figure~\ref{fig:joint_ancilla}.

Suppose $j_1$ and $j_2$ are any odd integers, each specifying a particular layer from the $\bar{X}_1$ and $\bar{X}_2$ systems respectively. Each bridge qubit is checked by one check from $V_{j_1}^{(1)}$ and one check from $V_{j_2}^{(2)}$. To specify connections between checks $V_{j_1}^{(1)}$ and qubits $B$, we need only give a binary matrix $S^{(1)}$ with exactly one 1 per column (each bridge qubit is connected to some check) and at most one 1 per row (checks are connected to at most one bridge qubit). A similar matrix $S^{(2)}$ specifies how $V_{j_2}^{(2)}$ and $B$ are connected. Since $|B|=\min(|\bar{X}_1|,|\bar{X}_2|)=\min(|V_{j_1}^{(1)}|,|V_{j_2}^{(2)}|)$, at least one of $S^{(1)}$, $S^{(2)}$ is a square matrix with exactly one 1 per row and column.

The next lemma characterizes the additional gauge operators introduced by including the bridge.

\begin{lemma}\label{lem:XX_system_gauges}
A complete, linearly independent basis of gauge checks of $U^B$ in the joint ancilla system consists of $|B|-1$ checks. Moreover, there is a basis in which each check acts on exactly two bridge qubits.
\end{lemma}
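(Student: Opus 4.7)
My plan combines a dimension count with an explicit construction. First, I would count $|U^B|$ by comparing the logical-qubit count of the joint system to that of the disjoint union of the two ancilla systems (without the bridge). In the disjoint union, $\bar X_1$ and $\bar X_2$ are each individually stabilizers, yielding $k_1+k_2-2$ logical qubits. In the joint system only $\bar X_1\bar X_2$ should be in the stabilizer, yielding $k_1+k_2-1$ logical qubits. Adding $|B|$ bridge qubits and redefining the checks in $V_{j_1}^{(1)}$ and $V_{j_2}^{(2)}$ in place (preserving $X$-stabilizer rank) increases the logical count by $|B|$; adding $|U^B|$ independent new $Z$-stabilizers reduces it by $|U^B|$. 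Setting $k_1+k_2-2+|B|-|U^B|=k_1+k_2-1$ gives $|U^B|=|B|-1$.

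Next I would prove the upper bound $|U^B|\le |B|-1$ via a parity argument that also motivates the weight-$2$ construction. The $X$-stabilizer $\prod_{v\in V_{j_1}^{(1)}}X(v)=\bar X_1\cdot X(B)$ has support $\supp(\bar X_1)\cup B$, disjoint from all ancilla qubits. Hence any prospective $Z$-stabilizer of the form $Z(b)\cdot Z(z^{(1)})\cdot Z(z^{(2)})$ with $b\subseteq B$ and $z^{(k)}$ supported on system-$k$ ancilla has overlap $|b|$ with this $X$-stabilizer, forcing $|b|$ to be even. Thus the bridge components of gauge checks lie in the even-weight subspace of $\mathbb F_2^{|B|}$, which has dimension $|B|-1$.

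For the constructive basis, fix a pair $(b_i,b_j)$ and let $v_k(b_i),v_k(b_j)\in V_{j_k}^{(k)}$ be the two $X$-checks connected to $b_i,b_j$ via $S^{(k)}$ for $k\in\{1,2\}$. Since $e_{v_1(b_i)}+e_{v_1(b_j)}$ has even weight on $V_0^{(1)}$, Lemma~\ref{lem:supportlemma} applied to system~1's induced Tanner graph yields $d^{(1)}\subseteq C_0^{(1)}$ with $d^{(1)}F^{(1)}=e_{v_1(b_i)}+e_{v_1(b_j)}$, and symmetrically $d^{(2)}$ for system~2. The operator
\[
g_{ij}=Z(b_i)Z(b_j)\cdot Z(d^{(1)}\in C_{j_1}^{(1)})\cdot Z(d^{(2)}\in C_{j_2}^{(2)})
\]
commutes with every $X$-stabilizer: its bridge part anticommutes with precisely the four checks $v_k(b_i),v_k(b_j)$ in layers $V_{j_1}^{(1)},V_{j_2}^{(2)}$, and the $F^\top$-overlap between $Z(d^{(k)})$ and $V_{j_k}^{(k)}$ flips precisely those same checks, cancelling the anticommutations. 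Every other $X$-check (in the original codes, in other ancilla layers, or in the gauge-fixing sets $U_L^{(k)}$) is disjoint from $C_{j_k}^{(k)}$ and $B$ and thus commutes automatically. Taking the $|B|-1$ operators $g_{1,2},g_{2,3},\ldots,g_{|B|-1,|B|}$ gives a linearly independent family since their bridge components $\{e_i+e_{i+1}\}$ are linearly independent in $\mathbb F_2^{|B|}$; together with the parity upper bound, they form a complete basis of $U^B$.

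The main obstacle is juggling the layered notation to identify which $X$-stabilizers act on which ancilla qubits, and in particular verifying that the $F^\top$-overlap calculation uses the right indexing so that $Z(d^{(k)}\in C_{j_k}^{(k)})$ anticommutes only with the intended checks in $V_{j_k}^{(k)}$. A secondary subtlety is confirming that each $g_{ij}$ is genuinely a new stabilizer rather than a product of pre-bridge stabilizers, which follows immediately from its nonzero bridge support.
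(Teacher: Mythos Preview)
Your proof is essentially correct and follows the same approach as the paper: both show that the bridge component of any gauge check must have even weight (the paper via the kernel of a connectivity matrix, you via commutation with the product of all $X$-checks in $V_{j_1}^{(1)}$), and both construct the weight-two basis using Lemma~\ref{lem:supportlemma}. Two small points: your formula $\prod_{v\in V_{j_1}^{(1)}}\mathcal{H}_X(v)=\bar X_1\cdot X(B)$ is only literally correct when $j_1=1$ and $L_1=1$; in general the product is $X(V_{j_1-1}^{(1)})X(V_{j_1+1}^{(1)})X(B)$, but since those $V$-layers are disjoint from $C_{j_1}^{(1)}\cup C_{j_2}^{(2)}\cup B$ your parity conclusion is unaffected. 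Also, the dimension count in your first paragraph presupposes that the fully gauge-fixed joint code has $k_1+k_2-1$ logical qubits, which in the paper is \emph{derived from} this lemma rather than assumed---so that paragraph is circular as written, but redundant anyway given that your parity upper bound and explicit construction already pin down $|U^B|=|B|-1$.
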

\begin{proof}

Define a matrix specifying connectivity between checks in sets $V_{j_1}^{(1)}$ and $V_{j_2}^{(2)}$ and qubits in sets $C^{(1)}_{j_1}$, $C^{(2)}_{j_2}$, and $B$.
\begin{equation}
G = \left(\begin{array}{ccc}
    F^{(1)\top} & 0 & S^{(1)} \\
    0 & F^{(2)\top} & S^{(2)}
\end{array}\right).
\end{equation}
A gauge check $Z(v_1\in C_{j_1}^{(1)})Z(v_2\in C_{j_2}^{(2)})Z(v_B\in B)$ in $U^{(1)}$, $U^{(2)}$, or $U^B$ is a solution to 
\begin{equation}\label{eq:XX_gauge_kernel}
G\left(\begin{array}{ccc}v_1 & v_2 & v_B\end{array}\right)^{\top}=0.
\end{equation}
Gauge checks in $U^{(1)}$ are solutions for which $v_2=v_B=0$ and checks in $U^{(2)}$ are solutions for which $v_1=v_B=0$. Clearly, $U^{(1)}$ and $U^{(2)}$ provide a basis for all solutions with $v_B=0$.

If $v_B\neq0$, we claim it has even weight if and only if $v_1,v_2$ exist so that $(v_1,v_2,v_B)$ solves Eq.~\eqref{eq:XX_gauge_kernel}. This is because, for all $v_1$, $Z(v_1\in C_{j_1}^{(1)})$ must flip an even number of checks in $V_{j_1}^{(1)}$ (Lemma~\ref{lem:supportlemma}), or, equivalently $F^{(1)\top} v_1$ is even weight. Likewise, $F^{(2)\top} v_2$ is even weight. Therefore, $(v_1,v_2,v_B)$ is a solution if and only if $S^{(1)}v_B$ and $S^{(2)}v_B$ are even weight, which implies $v_B$ is even weight since $S^{(1)}$ and $S^{(2)}$ both have exactly one 1 per column.

As a result, we can choose a basis of $U^B$ to consist of solutions in which $v_B=\hat e_i+\hat e_{i+1}$, for $i=0,1,\dots,|B|-2$, where $\hat e_i$ is the unit vector with a 1 in only the $i^{\text{th}}$ position.
\end{proof}

We note that the weight of this basis of gauge checks $U^B$ we specified depends on the logical operators being measured. 
In this work, we do not attempt to optimize these weights, and note that in the worst case they may not be LDPC.
We discuss the problem of minimizing weights of $U^B$, and the progress made by~\cite{swaroop2024universal}, in Section~\ref{sec:bridge}. 
In our study on the gross code, we found a basis of $U^B$ with weight at most $5$. 

With the help of these gauge checks, we can show that the merged code indeed has $k-1$ logical qubits.

\begin{lemma}
The merged code $\mathcal{G}_{XX}$ for measuring the product of two logical $X$ operators has $k-1$ logical qubits.
\end{lemma}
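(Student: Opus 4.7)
The plan is to compute the logical dimension of $\mc{G}_{XX}$ through a rank argument, breaking the construction into stages. Starting from the disjoint union $\mc{G}_X^{(1)} \cup \mc{G}_X^{(2)}$, two applications of Theorem~\ref{thm:validXancilla} yield $k-2$ logical qubits and stabilizer rank $(n+n_1+n_2)-(k-2)$, where $n_i$ denotes the ancilla size of system $i$. The merged code $\mc{G}_{XX}$ has $N = n+n_1+n_2+|B|$ total physical qubits; I aim to show its stabilizer group has rank $(n+n_1+n_2-k+2)+(|B|-1)$, which gives the desired logical dimension $N - \rank = k-1$.

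The passage from the disjoint union to $\mc{G}_{XX}$ proceeds in three stages: (i) adjoin the $|B|$ bridge qubits without any new stabilizers; (ii) modify the $X$-checks in $V_{j_1}^{(1)}$ and $V_{j_2}^{(2)}$ so they include bridge support prescribed by $S^{(1)}$ and $S^{(2)}$; and (iii) add the $|B|-1$ independent $Z$-type gauge checks of $U^B$ guaranteed by Lemma~\ref{lem:XX_system_gauges}. Stage (i) leaves the rank unchanged by construction. Stage (iii) increases the rank by exactly $|B|-1$: the $U^B$ generators are $Z$-type with nontrivial bridge support, whereas every stabilizer existing before stage (iii) either acts as $X$ on the bridge (from the stage-(ii) modifications) or has no action there at all, so CSS linear independence immediately yields $|B|-1$ fresh generators.

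Stage (ii) is the main technical obstacle. A general rank-change identity gives the rank increase as $\dim(B'(\ker M))$, where $M$ is the pre-modification check matrix of the disjoint system and $B'$ collects the appended bridge columns. I would argue this dimension vanishes by analyzing the dependency structure of $\mc{G}_X^{(1)} \cup \mc{G}_X^{(2)}$: because the two systems act on disjoint non-bridge qubit supports, any check-dependency decomposes into a sum of dependencies within each system, and within $\mc{G}_X^{(i)}$ the $X$-check layers $V_j^{(i)}$ are constrained by the identity connections between adjacent ancilla layers---and, if $j_i = L_i$, by the gauge checks $U_{L_i}^{(i)}$---to force any dependency involving $V_{j_i}^{(i)}$ to have zero projection onto the bridge columns after modification. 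As a consistency check, one can directly verify that $\bar{X}_1 \bar{X}_2$ becomes a stabilizer of $\mc{G}_{XX}$: the product of all modified $X$-checks over the odd ancilla layers of both systems equals $\bar{X}_1 \cdot X(B) \cdot \bar{X}_2 \cdot X(B) = \bar{X}_1 \bar{X}_2$, consistent with the single logical qubit $\bar{X}_1 \equiv \bar{X}_2$ being measured out.
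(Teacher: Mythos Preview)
Your outline matches the paper's two-sentence proof: start from the $(k-2)$-logical-qubit code obtained by attaching both gauge-fixed ancilla systems (Theorem~\ref{thm:validXancilla} applied twice), then count $|B|$ new bridge qubits against $|B|-1$ new independent checks from Lemma~\ref{lem:XX_system_gauges}. You go further by making the rank-preservation in stage (ii) explicit, which the paper leaves implicit.

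One confusion in your stage-(ii) argument: you invoke the gauge checks $U_{L_i}^{(i)}$ to constrain $X$-check dependencies, but those are $Z$-type checks and play no role in the rank of the $X$-check matrix. The cleaner justification is already contained in the proof of Theorem~\ref{thm:validXancilla}: every ancilla-layer $X$ check in $V_j^{(i)}$ is linearly independent of all other $X$ checks (both those in $\mathcal{C}^X$ and those in other ancilla layers). Hence the left null space of the $X$-check matrix is supported entirely on the rows of $\mathcal{C}^X$, none of which are modified in stage (ii), so $cB'=0$ automatically for every dependency $c$ and the rank is unchanged. Also, the phrase ``disjoint union $\mc{G}_X^{(1)} \cup \mc{G}_X^{(2)}$'' is a slight misnomer in the same-block setting of the lemma---the two ancilla systems share the original code $\mathcal{G}$---but this does not affect your count, since the shared checks in $\mathcal{C}^X$ are precisely the unmodified ones.
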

\begin{proof}
First, notice that the original code $\mc{G}$ with ancilla systems for measuring $\bar{X}_1$ and $\bar{X}_2$ would encode $k-2$ logical qubits by applying Theorem~\ref{thm:validXancilla} twice. Since $\mathcal{G}_{XX}$ adds $|B|$ qubits and $|B|-1$ independent checks (see Lemma~\ref{lem:XX_system_gauges}) to that code, it encodes $k-1$ logical qubits.
\end{proof}

In practice, the joint measurement can be carried out with a schedule similar to the $X$ measurement, Section~\ref{sec:schedule} -- starting from the original code, measure the stabilizers of the joint merged code for $R$ rounds, and finally measure the stabilizers of the original code again. The qubits in $V_i$ for even $i\ge2$, $C_j$ for $j\ge1$ odd, and in $B$ can be initialized in $\ket{0}$ at the zeroth round and measured out in the $Z$ basis at the final round. Note this dispenses with the idea of preparing the module in a stabilizer state before the protocol and preserving it in that state after the protocol in favor of simpler single-qubit preparations and final measurements. However, we can reach the same conclusions below if the modules for measuring $\bar{X}_1$ and $\bar{X}_2$ were kept intact as well. 

\begin{theorem}~\label{thm:XX_faultdistance_sameblock}
In the merged code $\mathcal{G}_{XX}$, denote the number of layers in the $\bar{X}_1$ ($\bar{X}_2$) subsystem as $L_1$ ($L_2$), and the boundary Cheeger constant of $F^{(1)^{\top}}$ ($F^{(2)^{\top}}$) as $\beta_1$ ($\beta_2$). 
If $\lceil L_1/2\rceil \ge 1/\beta_1$, $\lceil L_2/2\rceil \ge 1/\beta_2$, and 
we measure the merged code for $R\ge d$ rounds, then the joint measurement system connecting those ancilla systems via a bridge (see Figure~\ref{fig:XX_separate_codes}) has measurement and logical fault distances at least $d$.
\end{theorem}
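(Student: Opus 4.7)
The plan is to reduce the theorem to the machinery already developed for single-operator measurement in Lemma~\ref{lem:subsystem_codes_and_fault_distance} and Lemma~\ref{lem:CSS_subsystem_case}, then bound the resulting distances by applying the expansion argument of Theorem~\ref{thm:Xdistance} separately to each of the two subsystems, using the bridge only to supply extra weight. The paragraph following Lemma~\ref{lem:subsystem_codes_and_fault_distance} explicitly records that the subsystem-code analysis generalizes to any ancilla scheme with an original group $\mathcal{S}_O$, a merged group $\mathcal{S}_M$, $\mathcal{U}=\langle\mathcal{S}_O,\mathcal{S}_M\rangle$, and a choice of $\bar{P}_M,\bar{Q}_M$; the CSS refinement in Lemma~\ref{lem:CSS_subsystem_case} carries over identically. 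So I would first instantiate this framework.

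Concretely, let $\mathcal{S}_O$ be the stabilizer group of the original code together with ancilla, module, and bridge qubits prepared in the protocol's initial state, and let $\mathcal{S}_M$ be the full stabilizer group of $\mathcal{G}_{XX}$ (including the bridge gauge checks $U^B$ from Lemma~\ref{lem:XX_system_gauges}). Take $\bar{P}_M=\bar{X}_1\bar{X}_2$ and $\bar{Q}_M=\bar{Z}_1$, and choose $\mathcal{L}=\langle\bar{X}_2,\bar{Z}_1\bar{Z}_2,\bar{X}_i,\bar{Z}_i:i\ge 3\rangle$ as the logical operators of the remaining $k-1$ qubits, all supported on the original code's data qubits. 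The generalization of Lemmas~\ref{lem:subsystem_codes_and_fault_distance} and \ref{lem:CSS_subsystem_case} then reduces the theorem to establishing the three bounds $d_Z(\bar{Q}_M\mathcal{L}\mathcal{S}_O)\ge d$, $d_Z(\mathcal{L}^*\mathcal{S}_O)\ge d$, and $d_X(\mathcal{L}^*\mathcal{S}_M)\ge d$, together with the hypothesis $R\ge d$. The two $Z$-type bounds are immediate: $\mathcal{S}_O$ contains all $Z$-stabilizers of the original code, and freshly-initialized ancilla and bridge qubits cannot decrease the $Z$-weight of a logical operator supported on $\mathcal{V}$; Theorem~\ref{thm:Zdistance} applied to each subsystem handles any $Z$ support that has wandered into an ancilla module.

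The real work is the $X$-distance bound $d_X(\mathcal{L}^*\mathcal{S}_M)\ge d$. I would proceed as in the proof of Theorem~\ref{thm:Xdistance}: any $X$-type logical can be taken of the form $P=X(t\in\mathcal{V})$, and the only $X$-stabilizers available for reduction are $\mathcal{C}^X$, $V_j^{(1)}$ for odd $j\le L_1$, and $V_j^{(2)}$ for odd $j\le L_2$ (the bridge gauge checks $U^B$ are $Z$-type and play no role here). Write the reduced operator as $P'=P\prod_{j\text{ odd}}\mathcal{H}^X(v_j^{(1)})\prod_{j\text{ odd}}\mathcal{H}^X(v_j^{(2)})$ and decompose its weight as a sum over $\mathcal{V}$, the first ancilla, and the second ancilla (plus any bridge contribution, which is nonnegative). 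Applying Lemma~\ref{lem:expansion} with parameters $(L_1,\beta_1)$ and $(L_2,\beta_2)$ to each subsystem separately bounds the weight contributed by each ancilla below by $\min(|s^{(\ell)}|,|V_0^{(\ell)}|-|s^{(\ell)}|)$ where $s^{(\ell)}=t\cap V_0^{(\ell)}$. Combined with Lemma~\ref{lem:Xopsprt} applied to both $\bar{X}_1$ and $\bar{X}_2$ (noting $V_0^{(1)}$ and $V_0^{(2)}$ are disjoint), a short case analysis on whether each $|s^{(\ell)}|$ exceeds $|V_0^{(\ell)}|/2$ yields $|P'|\ge d$, unless $P$ is equivalent modulo the original code's stabilizers and $\bar{X}_1,\bar{X}_2$ to something that is actually $\bar{X}_1\bar{X}_2$, which is in $\mathcal{S}_M$ and hence excluded.

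The main obstacle I expect is the last case analysis: handling a putative $X$-logical that agrees with $\bar{X}_1$ on $V_0^{(1)}$ and with $\bar{X}_2$ on $V_0^{(2)}$. Here one must verify that trying to ``cancel'' both sides simultaneously using ancilla stabilizers forces the reduced operator to run through the bridge in a way that either leaves weight $\ge d$ on the original code qubits (via Lemma~\ref{lem:Xopsprt}), leaves weight $\ge d$ on an ancilla module (via Lemma~\ref{lem:expansion}), or collapses $P$ to $\bar{X}_1\bar{X}_2\bmod\mathcal{S}_M$, making $P$ itself a stabilizer. This is where the disjointness of $V_0^{(1)}$ and $V_0^{(2)}$, together with the fact that the bridge $B$ has size $\min(|\bar X_1|,|\bar X_2|)$ and couples odd layers on each side, is used essentially. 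Once this case is dispatched, plugging into the generalized Lemma~\ref{lem:subsystem_codes_and_fault_distance} and using $R\ge d$ yields both fault distance lower bounds.
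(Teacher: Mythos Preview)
Your reduction via the generalized Lemma~\ref{lem:subsystem_codes_and_fault_distance} and Lemma~\ref{lem:CSS_subsystem_case}, and the plan to bound $d_X(\mathcal{L}^*\mathcal{S}_{XX})$ by applying the Expansion Lemma to each subsystem separately, exactly matches the paper's approach. The $Z$-type bounds and the ``generic'' case analysis you describe in your third paragraph are also essentially what the paper does.

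The gap is in the exceptional case. You write that the case analysis succeeds ``unless $P$ is equivalent modulo the original code's stabilizers and $\bar{X}_1,\bar{X}_2$ to something that is actually $\bar{X}_1\bar{X}_2$, which is in $\mathcal{S}_M$ and hence excluded.'' But this conflates two situations. In $\mathcal{G}_{XX}$, the product $\bar{X}_1\bar{X}_2$ is a stabilizer, so $\bar{X}_1\equiv_{\mathcal{S}_{XX}}\bar{X}_2$; in particular, $\bar{X}_2\in\mathcal{L}$ means $\bar{X}_1\in\mathcal{L}^*\mathcal{S}_{XX}$ as well. The case $P\equiv\bar{X}_1$ (equivalently $\bar{X}_2$) is therefore \emph{not} excluded, and Lemma~\ref{lem:Xopsprt} gives nothing here because $P\bar{X}_1=I$. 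Your fourth paragraph misidentifies this as ``$P$ agrees with $\bar{X}_1$ on $V_0^{(1)}$ and with $\bar{X}_2$ on $V_0^{(2)}$'' and suggests it may ``collapse $P$ to $\bar{X}_1\bar{X}_2\bmod\mathcal{S}_M$'' --- but $\bar{X}_1$ by itself is a genuine nontrivial logical, not a stabilizer.

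What the paper actually does for this case is observe that $\bar{X}_1\equiv_{\mathcal{S}_{XX}} X(B)$ and then bound the minimum weight in the coset $X(B)\mathcal{S}_{XX}$ directly. This requires invoking the refined version of the Expansion Lemma (the first inequality in Lemma~\ref{lem:expansion}) to keep track of the bridge term $|B+S^{(1)}v_{L_1}+S^{(2)}u_{L_2}|$ together with the boundary-layer terms $|v_1|,|u_1|,|v_1+v_m|,|u_1+u_m|$, followed by a four-way case split on $|v_m|$ versus $|\bar{X}_1|-|v_m|$ and $|u_m|$ versus $|\bar{X}_2|-|u_m|$. In the ``small/small'' case the triangle inequality pushes everything back onto $|B|\ge d$; in the mixed and ``large/large'' cases the residual support $v_1,u_1$ on the original code is compared against $\bar{X}_1,\bar{X}_2,\bar{X}_1\bar{X}_2$ via $\mathcal{R}_{\mathcal{S}}(\cdot)\ge d$. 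Your sketch gestures toward the bridge size but does not contain this argument; filling it in is the nontrivial part of the proof.
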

\begin{proof}
Let $\mathcal{S}$ be the stabilizer group of the original code and each ancilla qubit, including bridge qubits, in $\ket{0}$, and $\mathcal{S}_{XX}$ be the stabilizer group of the joint merged code, Figure~\ref{fig:joint_ancilla}. 
Also let $\mathcal{L}$ be the group of logical operators of the unmeasured $k-1$ qubit logical space and $\mathcal{L}^*=\mathcal{L}\setminus\{I\}$. 
The versions of Lemmas~\ref{lem:subsystem_codes_and_fault_distance} and \ref{lem:CSS_subsystem_case} that apply to the joint code imply that the measurement fault distance is $\min(R,d_Z(\bar{Z}_1\mathcal{L}\mathcal{S}))$ and the logical fault distance is $\min(d_Z(\mathcal{L}^*\mathcal{S}),d_X(\mathcal{L}^*\mathcal{S}_{XX}))$. 
The original code $\mathcal{S}$ has code distance $d$, therefore $d_Z(\bar{Z}_1\mathcal{L}\mathcal{S})\ge d$ and $d_Z(\mathcal{L}^*\mathcal{S})\ge d$. To complete the proof, it remains only to show that the $X$-distance of the joint merged code is good, i.e.~$d_X(\mathcal{L}^*\mathcal{S}_{XX})\ge d$.

Let $P=X(t\in\mathcal{V})$ be an $X$-type logical operator of $\mathcal{G}$ that is inequivalent to $\bar{X}_1$ and $\bar{X}_2$ (note that $\bar{X}_1\equiv \bar{X}_2$). Consider the product of $P$ with a collection of $X$ checks, $\{v_i\in V_i^{(1)}\}_{\text{odd } i\in [L_1]}$ and $\{u_j\in V_j^{(2)}\}_{\text{odd } j\in [L_2]}$. Suppose $t\cap V_0^{(1)} = s_1$ and $t\cap V_0^{(2)} = s_2$. By the Expansion Lemma (Lemma~\ref{lem:expansion}), the weight of the new operator $P'$ is 
\begin{align}
    |P'| &= |X(t)\cdot \prod_{\substack{i=1\\i\text{ odd}}}^{L_1} \MH_X(v_i) \cdot \prod_{\substack{j=1\\j\text{ odd}}}^{L_2} \MH_X(u_j)| \\
    &\ge |t| - |s_1| - |s_2| + \min(|s_1|, |V_0^{(1)}| - |s_1|) + \min(|s_2|, |V_0^{(2)}| - |s_2|) + |v_{L_1} + u_{L_2}|.
\end{align}
If $|s_1|\ge |V_0^{(1)}| - |s_1|$ or $|s_2|\ge |V_0^{(2)}| - |s_2|$, then by Lemma~\ref{lem:Xopsprt} we have $|P'|\ge d$. Otherwise, we have
\begin{align}
    |P'| &\ge |t| - |s_1| - |s_2| + |V_0^{(1)}| - |s_1| + |V_0^{(2)}| - |s_2| \\
    &= |P\cdot \bar{X}_1\cdot \bar{X}_2| \ge d,
\end{align}
since the code distance of the original code is at least $d$. Now suppose $P\equiv \bar{X}_1 \equiv \bar{X}_2$, then $P\equiv X(B)$. We will show that the weight of $X(B)$ cannot be reduced below $d$ by stabilizers in $\mathcal{S}_{XX}$. 

Consider the product of $X(B)$ with a collection of $X$ checks in the ancilla system, $\{v_i\in V_i^{(1)}\}_{\text{odd } i\in [L_1]}$ and $\{u_j\in V_j^{(2)}\}_{\text{odd } j\in [L_2]}$. We again apply the expansion lemma, while also accounting for the support on $V_0^{(1)}$ and $V_0^{(2)}$. 
\begin{align}
    |P'| &= |X(B)\cdot \prod_{\substack{i=1\\i\text{ odd}}}^{L_1} \MH_X(v_i) \cdot \prod_{\substack{j=1\\j\text{ odd}}}^{L_2} \MH_X(u_j)|  \\
    &\ge |B + S^{(1)}v_{L_1} + S^{(2)}v_{L_2}| + |v_1 + v_m| + |u_1 + u_m| + |v_m + v_{L_1}| + |u_m + u_{L_2}| \notag \\
    & + \min(|v_m|, |\bar{X}_1| - |v_m|) + \min(|u_m|, |\bar{X}_2| - |u_m|) + |v_1| +|u_1|.
\end{align}
$P'$ now has support $v_1$ on $V_0^{(1)}$ and $u_1$ on $V_0^{(2)}$. If we multiply $P'$ by $X$-stabilizers in $\mathcal{S}$, then this support could be further reduced.
For any operator $O$, let $\mathcal{R}_{\mathcal{S}}(O)$ denote the minimum weight of equivalent representitives of $O$ under multiplication by stabilizers in $\mathcal{S}$. Then we have
\begin{align}
    \mathcal{R}_{\mathcal{S}}(P')
    &\ge |B + S^{(1)}v_{L_1} + S^{(2)}u_{L_2}| + |v_1 + v_m| + |u_1 + u_m| + |v_m + v_{L_1}| + |u_m + u_{L_2}| \notag \\
    & + \min(|v_m|, |\bar{X}_1| - |v_m|) + \min(|u_m|, |\bar{X}_2| - |u_m|) + \mathcal{R}_S(v_1u_1).
\end{align}
We do a case analysis based on $|v_m|, |u_m|$.
\begin{enumerate}
    \item If $|v_m| \le |\bar{X}_1| - |v_m|$ and $|u_m| \le |\bar{X}_2| - |u_m|$, we have
    \begin{align}
        \mathcal{R}_{\mathcal{S}}(P')
        &\ge |B + S^{(1)}v_{L_1} + S^{(2)}u_{L_2}| + |v_m + v_{L_1}| + |u_m + u_{L_2}| + |v_m| + |u_m|\\
        &\ge |B + S^{(1)}v_{L_1} + S^{(2)}u_{L_2}| + |v_{L_1}| + |u_{L_2}| \\
        &\ge |B| \ge d.
    \end{align}
    \item If $|v_m| \ge |\bar{X}_1| - |v_m|$ and $|u_m| \le |\bar{X}_2| - |u_m|$, we have
    \begin{align}
        \mathcal{R}_{\mathcal{S}}(P')
        &\ge |u_1 + u_m| + |u_m| + |\bar{X}_1| - |v_m| + \mathcal{R}_{\mathcal{S}}(v_1u_1)\\
        &\ge |u_1| + |\bar{X}_1| - |v_1| + \mathcal{R}_{\mathcal{S}}(v_1u_1).
    \end{align}
    Note that $\mathcal{R}_{\mathcal{S}}(v_1u_1) + |u_1| \ge \mathcal{R}_{\mathcal{S}}(v_1)$. Therefore $\mathcal{R}_{\mathcal{S}}(P')\ge |\bar{X}_1| - |v_1| + \mathcal{R}_{\mathcal{S}}(v_1)\ge \mathcal{R}_{\mathcal{S}}(\bar{X}_1) \ge d$.
    \item The case $|v_m| \le |\bar{X}_1| - |v_m|$ and $|u_m| \ge |\bar{X}_2| - |u_m|$ follows similarly.
    \item If $|v_m| \ge |\bar{X}_1| - |v_m|$ and $|u_m| \ge |\bar{X}_2| - |u_m|$, we have
    \begin{align}
        \mathcal{R}_{\mathcal{S}}(P')
        &\ge |\bar{X}_1| - |v_m| + |\bar{X}_2| - |u_m| + \mathcal{R}_{\mathcal{S}}(v_1u_1)\\
        &\ge |\bar{X}_1| - |v_1| + |\bar{X}_2| - |u_1| + \mathcal{R}_{\mathcal{S}}(v_1u_1) \\
        &\ge \mathcal{R}_{\mathcal{S}}(\bar{X}_1\cdot \bar{X}_2) \ge d. 
    \end{align}
\end{enumerate}
This proves our claim.
\end{proof}

\subsection{Gauge-fixed \texorpdfstring{$Y$}{Y} Ancilla System}\label{sec:ysystem}

Given the $X$ and $Z$ logical measurement schemes, in this section we present a $Y$ measurement scheme that combines the two. Our proposal differs from the one presented in~\cite{cohen2022low} because that proposal relies on using $2d-1$ layers to achieve fault-distance $d$. 

Again we make use of a bridge system like in the joint measurement case of the previous section. Here, however, we use the bridge to connect the $X$ and $Z$ systems together on the first layer, where they are also joined by a mixed-type check $q_1$. See Figure~\ref{fig:Y_ancilla} for the full construction.

For convenience in this section, we make the assumption that $\bar{X}_M$ and $\bar{Z}_M$ overlap on exactly one qubit. This makes the structure of gauge checks in the $Y$-system easier to reason about in proofs. However, we do not believe this is necessary and that all arguments in this section should generalize to cases of larger overlap (see also Remark~\ref{rmk:Yoverlap} below).

\begin{figure}
\centering
\includegraphics[width=0.75\textwidth]{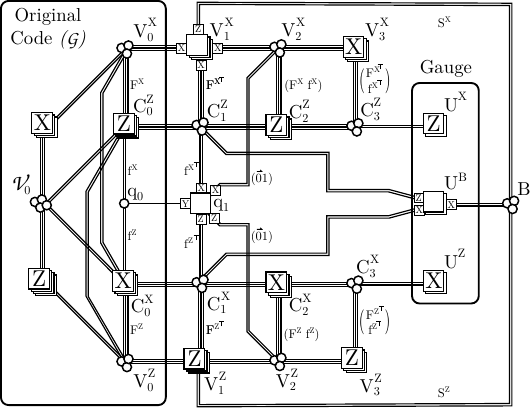}
\caption{\label{fig:Y_ancilla} The ancilla system for measuring $\bar{Y}$ where $\bar{X}$ and $\bar{Z}$ overlap on one qubit $q_0$. It largely consists of the ancilla systems for measuring $\bar{X}$ and $\bar{Z}$ separately, but with changes to the first layer, and the addition of the bridge qubits $B$ and more gauge checks $U^B$. The first layer introduces just one check $q_1$ connected to $q_0$ and has sets of checks $V_1^X$ and $V_1^Z$ which are each one smaller than sets in higher layers, e.g.~$V_3^X$ and $V_3^Z$.
}
\end{figure}

\begin{lemma}\label{lem:Y_system_gauges}
A complete, linearly independent basis for the gauge checks in $U^B$ in the $Y$ system consists of $|B|$ checks. Moreover, there is a choice of basis in which there is exactly one check acting on each bridge qubit in $B$.
\end{lemma}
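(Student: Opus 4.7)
The plan is to follow the template of Lemma~\ref{lem:XX_system_gauges}, adapted to the mixed-type structure of the $Y$ system. A candidate bridge gauge operator, supported on $B\cup C_1^X\cup C_1^Z$, can be parameterized as $O=Z(v^X)\,X(v^Z)\,X(u^X_B)\,Z(u^Z_B)$, where $v^X\in\FF_2^{|C_1^X|}$, $v^Z\in\FF_2^{|C_1^Z|}$, and $u^X_B,u^Z_B\in\FF_2^{|B|}$ encode the $X$- and $Z$-components of the Pauli on $B$ (so a $Y$ on bridge qubit $b$ corresponds to $u^X_B(b)=u^Z_B(b)=1$). Requiring $O$ to commute with all local stabilizers yields three families of linear constraints analogous to the matrix $G$ of Lemma~\ref{lem:XX_system_gauges}: the $V_1^X$ checks (coupling $v^X$ and $u^Z_B$ through $F^X$ and $S^X$), the $V_1^Z$ checks (coupling $v^Z$ and $u^X_B$ through $F^Z$ and $S^Z$), and the mixed check $q_1$, which links the $q_0$-components of $v^X F^X$ and $v^Z F^Z$.

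The next step is to count the dimension of the solution space modulo $U^X\oplus U^Z$ (those solutions with $u^X_B=u^Z_B=0$) and show it equals $|B|$. Applying Lemma~\ref{lem:supportlemma} to $F^X$ and $F^Z$ separately says that $v^X F^X$ and $v^Z F^Z$ range over all even-weight vectors on $V_0^X$ and $V_0^Z$ respectively. Because $V_1^X$ and $V_1^Z$ omit the check on $q_0$, the $q_0$-components of these products are free before imposing $q_1$, and this extra freedom is exactly what distinguishes the $Y$ setting from the joint $X$ setting of Lemma~\ref{lem:XX_system_gauges}. The main obstacle I anticipate is precisely this bookkeeping: the naive count of compatible $(u^X_B,u^Z_B)$-pairs needs to be reduced by the equivalences induced by multiplying $O$ by existing $V_1^X$ and $V_1^Z$ stabilizers (whose bridge-support shifts span $\FF_2^{|B|}$ on each side because $S^X$ and $S^Z$ have a single $1$ per column) together with the single coupling imposed by $q_1$. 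The single-qubit overlap assumption on $\bar{X}_M$ and $\bar{Z}_M$ is what guarantees that these contributions collapse cleanly to exactly $|B|$ rather than $|B|-1$ or $|B|+1$.

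Finally, I would exhibit the claimed basis. For each $b\in B$, let $v_X^*\in V_0^X\setminus\{q_0\}$ be the unique check with $S^X[v_X^*,b]=1$, and define $v_Z^*$ analogously. By Lemma~\ref{lem:supportlemma} applied to $F^X$, there exists $v^X_b$ with $v^X_b F^X=\hat e_{v_X^*}+\hat e_{q_0}$, and similarly $v^Z_b$ with $v^Z_b F^Z=\hat e_{v_Z^*}+\hat e_{q_0}$; both right-hand sides have weight two and so the lemma applies. I claim $g_b:=Y(b)\cdot Z(v^X_b)\cdot X(v^Z_b)$ is a valid gauge check. The $V_1^X$ check at $v_X^*$ picks up an anticommutation from $Y(b)$ that is exactly cancelled by $(v^X_b F^X)_{v_X^*}=1$, while all other $V_1^X$ checks commute because $(v^X_b F^X)_v=0$ for $v\in V_0^X\setminus\{q_0,v_X^*\}$. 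The $V_1^Z$ case is symmetric. Commutation with $q_1$ holds automatically since $(v^X_b F^X)_{q_0}+(v^Z_b F^Z)_{q_0}=1+1\equiv 0\pmod 2$. The $|B|$ operators $\{g_b\}_{b\in B}$ act nontrivially as $Y$ on distinct bridge qubits, so they are linearly independent and, by the preceding dimension count, form the promised basis.
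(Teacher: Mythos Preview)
Your proposal rests on a mistaken model of how the bridge is wired into the $Y$ system. You assume that $V_1^X$ acts on the bridge with its ``native'' type $X$ and $V_1^Z$ with type $Z$; this is the natural extrapolation from the $XX$ system, but it cannot be what the paper constructs. With those types, the unique $V_1^X$ check and the unique $V_1^Z$ check touching a common bridge qubit $b$ would anticommute on $b$ and nowhere else (their remaining supports lie in disjoint qubit sets $C_1^X,V_0^X\setminus\{q_0\}$ versus $C_1^Z,V_0^Z\setminus\{q_0\}$), so the stabilizer group would fail to be abelian. The paper's matrix $G$, together with the stated gauge-check form $X(v_X\in C_1^X)Z(v_Z\in C_1^Z)X(v_B\in B)$, forces both $V_1^X$ and $V_1^Z$ to act as $Z$ on $B$; in particular $V_1^X$ is a genuinely mixed-type check ($X$ on $C_1^Z$, $Z$ on $B$). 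This is also consistent with the identity $\bar{X}_M\equiv_{\mathcal{S}_Y}Z(B)q_1^Z$ used later in the paper.

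This error propagates into your argument. Because both $V_1$ families are $Z$ on $B$, the bridge Pauli of a gauge check is governed by a \emph{single} vector $v_B$, not two independent components $u^X_B,u^Z_B$. Your two-parameter setup therefore over-counts, and the ``reduction by equivalences from $V_1^X,V_1^Z$ shifting bridge support on each side'' does not describe what actually happens: in the real construction those stabilizers shift only the $Z$-component on $B$. The paper's proof avoids all of this by writing the commutation constraints as a single linear system $G(v_X,v_Z,v_B)^\top=0$, arguing that solutions with $v_B=0$ decompose into $U^X\oplus U^Z$, and then using Lemma~\ref{lem:supportlemma} to exhibit, for each unit vector $v_B=\hat e_i$, vectors $v_X,v_Z$ that flip exactly $q_1$ and the one $V_1$-check adjacent to the $i$th bridge qubit on each side. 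Ironically, your explicit basis $g_b=Y(b)\cdot Z(\cdot)\cdot X(\cdot)$ would in fact commute with the true stabilizers (since $Y$ anticommutes with $Z$ just as $X$ does), but the reasoning you give to justify it, and the dimension count preceding it, are built on the wrong picture.
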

\begin{proof}
We describe connectivity between checks $q_1,V_1^Z,V_1^X$ and qubits $C_1^X,C_1^Z,B$ by defining a matrix with rows corresponding to those checks and column corresponding to those qubits. Looking at Figure~\ref{fig:Y_ancilla}, this is
\begin{equation}
G=\left(\begin{array}{ccc}
    f^{Z^\top} & f^{X^\top} & 0\\
    F^{Z^\top} & 0  & S^Z \\
    0 & F^{X^\top}  & S^X
\end{array}\right).
\end{equation}
Let $v_X$ indicate a subset of $C_1^X$ qubits, $v_Z$ a subset of $C_1^Z$, and $v_B$ a subset of $B$. Gauge check $X(v_X\in C_1^X)Z(v_Z\in C_1^Z)X(v_B\in B)$ is defined as a solution to \begin{equation}\label{eq:gauge_kernel}
G\left(\begin{array}{ccc}v_X&v_Z&v_B\end{array}\right)^\top=0.
\end{equation}
In particular, gauge checks in $U^X$ are solutions for which $v_Z=v_B=0$ and gauge checks in $U^Z$ are solutions for which $v_X=v_B=0$.

We claim that solutions $(v_X,v_Z,0)$ with $v_B=0$ are the sum of solutions in $U^X$ and $U^Z$. Note that, for any $v_X$, applying $X(v_X\in C_1^X)$ flips an even number of checks from $V_1^Z\cup q_1$ (see Lemma~\ref{lem:supportlemma}). Therefore, because $F^{Z^\top} v_X^\top=0$, meaning $v_X$ flips no checks in $V_1^Z$, $v_X$ also does not flip $q_1$ or $ f^{Z^\top} v_X^\top =0$. This shows $(v_X,0,0)$ is a solution to Eq.~\eqref{eq:gauge_kernel}. Likewise, it can be argued $(0,v_Z,0)$ is a solution.

Therefore, beyond those corresponding to gauge checks in $U^X$ and $U^Z$, the only other independent solutions are those with $v_B\neq0$. In fact, for all $v_B\neq0$, there exist $v_X$ and $v_Z$ such that $(v_X,v_Z,v_B)$ solves Eq.~\eqref{eq:gauge_kernel}. We show this for a basis $v_B=\hat e_i$, the unit vector with a 1 in only the $i^{\text{th}}$ position. 

The $i^{\text{th}}$ bridge qubit is connected to two checks, one $i_X$ in $V_1^X$ and one $i_Z$ in $V_1^Z$. This is equivalent to saying $S^X\hat e_i^\top =\hat e_{i_X}^\top$ and $S^Z\hat e_i^\top=\hat e_{i_Z}^\top$. 

Using Lemma~\ref{lem:supportlemma}, we can find a vector $v_X$ such that $f^{Z^\top} v_X^\top=1$ and $F^{Z^\top} v_X^\top =\hat e_{i_X}^\top$ because $X(v_X\in C_1^X)$ flips exactly two checks, $i_X$ and $q_1$. Likewise, we find $v_Z$ such that $f^{X^\top} v_Z^\top=1$ and $F^{X^\top} v_Z^\top=\hat e_{i_Z}^\top$. It is then easy to verify that $(v_X,v_Z,v_B)$ solves Eq.~\eqref{eq:gauge_kernel}.
\end{proof}
\begin{remark}~\label{rmk:Yoverlap}
    It is important to note that if $\bar{X}_M$ and $\bar{Z}_M$ overlap on more than one qubit, the above proof may no longer hold, and the $Y$ merged code may have extra gauge qubits. These gauge checks take the form of $X(u\in C_1^X)Z(v\in C_1^Z)$, where $u$ and $v$ are chosen to ensure commutation with checks in $V_1^X$, $V_1^Z$, and the new collection of checks $Q_1$, $|Q_1|>1$, which replaces $q_1$. In particular, $Z(v\in C_1^Z)$ flips an even number of checks in $V_1^X\cup Q_1$ and $X(u\in C_1^X)$ flips an even number of checks in $V_1^Z\cup Q_1$. If the only checks they flip are both the same set of checks from $Q_1$, then $X(u\in C_1^X)Z(v\in C_1^Z)$ is a gauge operator. This also explains why there are no gauge checks of this form if $Q_1=\{q_1\}$, because it is impossible to flip an even number of checks from a set of size one. In general, these gauge checks should be fixed as well to ensure the system encodes just $k-1$ qubits.
\end{remark}

Knowing the structure of the gauge checks allows us to prove that the $Y$ system encodes exactly $k-1$ qubits, though we defer the proof to Appendix~\ref{apx:Yproofs}.

\begin{restatable}{theorem}{Ylogicalqubit}~\label{thm:Ylogicalqubit}
The $Y$-system encodes $k-1$ qubits.
\end{restatable}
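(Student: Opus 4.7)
The plan is to prove $k_Y = k-1$ by a dimension count, using Theorem~\ref{thm:validXancilla} as a baseline and passing from $\mathcal{G}_X$ to $\mathcal{G}_Y$ step by step. Denote by $\mathcal{S}_Y$ the stabilizer group of $\mathcal{G}_Y$ and $n_Y$ its total number of physical qubits, so $k_Y = n_Y - \rank(\mathcal{S}_Y)$.

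First, I would establish $k_Y \le k-1$ by exhibiting $\bar{Y}_M$ as a stabilizer of $\mathcal{G}_Y$. The natural candidate is the product of all first-layer checks $V_1^X \cup V_1^Z \cup \{q_1\}$. The product of $V_1^X$ checks acts as $X$ on all qubits of $\supp(\bar{X}_M) \setminus \{q_0\}$ plus a contribution on the ancilla qubits $C_1^Z$ and $B$; the product of $V_1^Z$ checks acts analogously with $Z$ and contributes on $C_1^X$ and $B$; and $q_1$ supplies the missing $Y$ on $q_0$ together with balancing contributions on $C_1^X$ and $C_1^Z$. Verifying the ancilla cancellation reduces to checking that the all-ones row combination of the connectivity matrix $G$ in Lemma~\ref{lem:Y_system_gauges} vanishes on the columns indexed by $C_1^X$, $C_1^Z$, and $B$, which follows from the block structure shown there and the single-overlap assumption at $q_0$. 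This places $\bar{Y}_M$ in $\mathcal{S}_Y$ and gives $k_Y \le k-1$.

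For the reverse inequality $k_Y \ge k-1$, I would enumerate the additions in passing from $\mathcal{G}_X$ (which encodes $k-1$ qubits by Theorem~\ref{thm:validXancilla}) to $\mathcal{G}_Y$ and track the qubit/stabilizer balance. The additions are: (i) the $Z$-ancilla qubits and stabilizers, sharing $q_0$ with $\mathcal{G}$ and with $V_1^Z$ one check smaller than in a standalone $Z$-system; (ii) the new mixed check $q_1$ replacing the $V_1^X$ check that would have acted on $q_0$; (iii) the $|B|$ bridge qubits; and (iv) the $|B|$ independent gauge checks in $U^B$ provided by Lemma~\ref{lem:Y_system_gauges}. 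By a $Z$-analog of Theorem~\ref{thm:validXancilla}, the contribution (i) is dimension-neutral when $\bar{Z}_M$ is promoted to a stabilizer; the one-for-one exchange in (ii) is rank-preserving; and the $|B|$ new qubits in (iii) cancel against the $|B|$ new independent stabilizers in (iv). Summing these contributions, $k_Y = k_X = k-1$.

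The main obstacle will be checking that all these additions are genuinely independent and mutually commuting modulo $\mathcal{S}_X$. The trickiest case concerns the interaction of $V_1^X$, $V_1^Z$, and $q_1$, which live on the overlapping qubit sets $C_1^X$, $C_1^Z$, and $B$: both pairwise commutation and linear independence reduce to rank properties of the matrix $G$ from Lemma~\ref{lem:Y_system_gauges}. The analysis must invoke the irreducibility of $\bar{X}_M$ and $\bar{Z}_M$ together with Lemma~\ref{lem:supportlemma} to rule out unexpected stabilizer coincidences in the first layer, and must also ensure that the gauge basis of $U^B$ produced in Lemma~\ref{lem:Y_system_gauges} is genuinely independent of the first-layer checks, not merely independent of the higher-layer gauge groups $U^X$ and $U^Z$.
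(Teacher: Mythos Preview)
Your incremental route from $\mathcal{G}_X$ to $\mathcal{G}_Y$ has a real gap at step (i). You cannot invoke a $Z$-analog of Theorem~\ref{thm:validXancilla} after starting from $\mathcal{G}_X$: in that code $\bar{X}_M$ is already a stabilizer, so $\bar{Z}_M$ anticommutes with the stabilizer group and is not a logical operator available to be ``promoted.'' Adding a full $Z$-system to $\mathcal{G}_X$ does not even yield a commuting set of checks. What you actually need is that the \emph{modified} $Z$-system (with $V_1^Z$ one smaller) contributes exactly as many independent checks as qubits when adjoined to $\mathcal{G}_X$; this is true, but it does not follow from Theorem~\ref{thm:validXancilla} and requires its own rank argument, which is essentially the work the paper does directly.

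The second gap is at step (ii). You correctly flag the interaction of $q_1$ with $V_1^X,V_1^Z$ as the trickiest case, but you point to the wrong tool: the matrix $G$ of Lemma~\ref{lem:Y_system_gauges} records only the action of first-layer checks on the \emph{ancilla} qubits $C_1^X,C_1^Z,B$ and says nothing about $q_0$. The independence of $q_1$ hinges entirely on the one fact that $G$ does not see: $q_1$ is the unique check acting as $Y$ on $q_0$, while every other check acts as $X$, $Z$, or $I$ there. Any trivial product containing $q_1$ would therefore require a product of $X$-type checks acting as $X$ on $q_0$ and trivially elsewhere on $\mathcal{G}$, and simultaneously a product of $Z$-type checks acting as $Z$ on $q_0$ and trivially elsewhere---impossible, since those two products would anticommute. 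This is the heart of the paper's proof, and your outline does not supply it.

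The paper avoids the bookkeeping of your incremental passage by computing the rank of the symplectic check matrix $\mathcal{H}$ directly: first strip $B$ and $U^B$ (they contribute equal rank by Lemma~\ref{lem:Y_system_gauges}); then use the $Y$-on-$q_0$ argument to show the $q_1$ row lies in no null combination; and finally observe that the remaining rows split into pure $X$- and $Z$-type blocks whose redundancies are exactly those of $H^X$ and $H^Z$, as in the proof of Theorem~\ref{thm:validXancilla}. An induction on $L$ then finishes. As a minor point, your product for $\bar{Y}_M$ should include all odd-layer checks, not just the first layer; and exhibiting $\bar{Y}_M$ as a stabilizer does not by itself give $k_Y\le k-1$---the rank count is needed in any case.
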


To show that the $Y$-system has good fault-distance, we
apply Lemma~\ref{lem:subsystem_codes_and_fault_distance} with $\mathcal{U}=\langle\mathcal{S}_O,\mathcal{S}_Y\rangle$ and $\mathcal{L}=\langle \bar{X}_i,\bar{Z}_i,i=1,2,\dots,k-1\rangle$, where $\mathcal{S}_O$ is the stabilizer group of the original code (with the additional qubits starting in $\ket{+}$ for the bottom half of Figure~\ref{fig:Y_ancilla} and the bridge and in $\ket{0}$ for the top half), $\mathcal{S}_Y$ the stabilizer group of the merged code shown in Figure~\ref{fig:Y_ancilla}, and $\mathcal{L}$ a simultaneous $k-1$ qubit logical basis for each of the codes $\mathcal{S}_O'=\langle\mathcal{S}_O,\bar{Y}_M\rangle$, $\mathcal{S}_Y$ and $\mathcal{U}$. We have that the logical fault distance is at least $d(\mathcal{L}^*\mathcal{U})$ and the measurement fault distance is $\min(R, d(\bar{Z}_M\ML\MU), d(\bar{X}_M\ML\MU))$ for $R$ rounds of measurement.

\begin{restatable}{theorem}{Ydistance}~\label{thm:Y_distance}
If the $X$-system and $Z$-system individually have code distances at least $d$, then the $Y$-system has logical fault distance at least $d$.
\end{restatable}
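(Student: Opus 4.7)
The plan is to apply the generalized version of Lemma~\ref{lem:subsystem_codes_and_fault_distance} as described following Theorem~\ref{thm:XX_faultdistance_sameblock}, reducing the logical fault distance to a lower bound on $d(\mathcal{L}^*\mathcal{U})$ where $\mathcal{U}=\langle\mathcal{S}_O,\mathcal{S}_Y\rangle$. Denote by $\mathcal{U}_X=\langle\mathcal{S}_O^X,\mathcal{S}_X\rangle$ and $\mathcal{U}_Z=\langle\mathcal{S}_O^Z,\mathcal{S}_Z\rangle$ the gauge groups corresponding to the $X$- and $Z$-systems in isolation. By hypothesis combined with Lemma~\ref{lem:CSS_subsystem_case} applied to those CSS systems, we already know $d_X(\mathcal{L}^*\mathcal{U}_X)\ge d$ and $d_Z(\mathcal{L}^*\mathcal{U}_Z)\ge d$, and all purely $Z$-type (resp.~$X$-type) members of $\mathcal{L}^*\mathcal{U}$ already lie in $\mathcal{L}^*\mathcal{U}_Z$ (resp.~$\mathcal{L}^*\mathcal{U}_X$) since the original code $\mathcal{G}$ is CSS.

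Given any $P\in\mathcal{L}^*\mathcal{U}$, write $P=P^X P^Z$ with $P^X$ purely $X$-type and $P^Z$ purely $Z$-type. The first step is to clear the bridge qubits $B$ of any support in $P$. Using the basis of $U^B$ furnished by Lemma~\ref{lem:Y_system_gauges}, each bridge qubit carries exactly one gauge generator, each of mixed $XZ$-type. Multiplying $P$ by an appropriate subset of these generators allows us to force $P$ to be free of any Pauli content on $B$, at the cost of modifying the $X$-support of $P^X$ on $C_1^X$ and the $Z$-support of $P^Z$ on $C_1^Z$. In the resulting operator, call it $\tilde P=\tilde P^X\tilde P^Z$, the $X$-part and $Z$-part are now supported on disjoint sides of the system.

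The second step is to dispose of the mixed check $q_1$. The fact that $\bar X_M$ and $\bar Z_M$ overlap on the single qubit $q_0$ (Remark~\ref{rmk:Yoverlap}) means $q_1$ is the only such mixed check, and it enforces a joint parity condition linking the $Z$-content on one neighborhood of $C_1^X$ with the $X$-content on the corresponding neighborhood of $C_1^Z$. Using Lemma~\ref{lem:supportlemma} within each subsystem, one can then multiply by stabilizers of $\mathcal{U}_X$ and $\mathcal{U}_Z$ to independently clear any anti-commutation that $\tilde P^X$ or $\tilde P^Z$ has with $q_1$, again without increasing the overall weight beyond the sum $|\tilde P^X|+|\tilde P^Z|\le|P|$. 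After this, $\tilde P^X$ commutes with every generator of $\mathcal{U}_X$ and $\tilde P^Z$ commutes with every generator of $\mathcal{U}_Z$, so $\tilde P^X\in\mathcal{L}\mathcal{U}_X$ and $\tilde P^Z\in\mathcal{L}\mathcal{U}_Z$. Since $\tilde P$ is equivalent to $P\in\mathcal{L}^*\mathcal{U}$ modulo $\mathcal{U}$, at least one of $\tilde P^X,\tilde P^Z$ must be a nontrivial logical of its respective subsystem, so $|P|\ge|\tilde P^X|+|\tilde P^Z|\ge d$ by the assumed subsystem code distances.

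The main obstacle is the second step: the bookkeeping around $q_1$ and the mixed gauges in $U^B$ is where the non-CSS character of the $Y$-system really bites. We must verify that pushing $P$'s mixed-type content onto the pure $X$- and $Z$-sides can be done \emph{without} accidentally turning a nontrivial logical into a trivial one, i.e., without pushing the $\bar{X}_M$/$\bar{Z}_M$ generators out of $\tilde P$ in a way that loses track of the original $\mathcal{L}^*$ component. The single-qubit overlap hypothesis, which ensures $q_1$ is a lone mixed check and $U^B$ has the clean one-gauge-per-bridge-qubit basis of Lemma~\ref{lem:Y_system_gauges}, is what makes the decoupling canonical; dropping this hypothesis (as Remark~\ref{rmk:Yoverlap} warns) would re-introduce mixed $XZ$ gauges between $C_1^X$ and $C_1^Z$ that obstruct the decomposition and would require a generalization of Lemma~\ref{lem:Y_system_gauges} before the same argument could go through.
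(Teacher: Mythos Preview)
Your high-level strategy---split $P$ into its $X$- and $Z$-parts and show at least one is a nontrivial logical of the corresponding subsystem---matches the paper's. However, the execution has a genuine gap: the inequality $|\tilde P^X|+|\tilde P^Z|\le|P|$ on which your conclusion rests is not justified. Multiplying $P$ by $U^B$ gauge checks (which carry nonzero weight on $C_1^X$ and $C_1^Z$) or by stabilizers of $\mathcal{U}_X,\mathcal{U}_Z$ can \emph{increase} the total weight; there is no reason the operator you obtain after ``clearing the bridge'' and ``disposing of $q_1$'' should be lighter than the $P$ you started with. A secondary issue: the $U^B$ basis from Lemma~\ref{lem:Y_system_gauges} acts on $B$ only as $X$, so it cannot by itself remove $Z$-content on the bridge (which can be present since $V_1^Z$ checks in $\mathcal{S}_Y$ touch $B$).

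The paper avoids both difficulties by \emph{restricting} rather than multiplying. Write $\bar P\in\mathcal{L}^*\mathcal{S}_Y$ as $P_XP_Z$ and set $P'_X=P_X|_{\mathcal{G}_X}$, $P'_Z=P_Z|_{\mathcal{G}_Z}$, the restrictions to the qubits of the individual $X$- and $Z$-systems. Restriction never increases weight, and $P'_X$ automatically commutes with all $Z$-checks of $\mathcal{G}_X$ (since $\bar P$ commutes with all checks of $\mathcal{G}_Y$), so $P'_X$ is already a logical operator of $\mathcal{G}_X$ with no need to ``handle'' $q_1$. The link to $d(\mathcal{L}^*\mathcal{U})$ is that the generators of $\init$ act with the \emph{opposite} Pauli type on the relevant ancilla qubits and not at all on $\mathcal{V}$, so multiplying by $\init$ cannot change $P'_X$ or $P'_Z$; hence $\mathcal{R}_{\init}(\bar P)\ge|P'_X|$ (and likewise $\ge|P'_Z|$). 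A contradiction argument---if both $P'_X$ and $P'_Z$ were trivial then $\bar P$ could be pushed entirely onto $B$ and then shown to be the identity---establishes that at least one is nontrivial, giving the bound. Your anticipated ``main obstacle'' around $q_1$ is a red herring; the real work is the weight bookkeeping, which the restriction trick handles for free.
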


\begin{proof}
Given how the qubits in the ancilla system are initialized, let 
\begin{align}
    \init = \langle Z(v)\text{ for singleton }v\in C^Z_{j\text{ odd}}, X(v)\text{ for singleton }v\in C^X_{j\text{ odd}}, X(b)\text{ for singleton }b\in B\rangle,
\end{align}
where we used $C^Z_{j\text{ odd}}$ to mean $\bigcup_{j\text{ odd}}C^Z_j$ and similarly for the other variants of this notation. 
Then we have
\begin{align}
    \MU = \langle \MS_Y, \init \rangle.
\end{align}
Suppose $\bar{P}$ is a nontrivial logical operator of $\mathcal{G}_Y$, the merged code of the $Y$ system. Let $\bar{P}=P_XP_Z$ be its decomposition into an $X$-type Pauli $P_X$ and $Z$-type Pauli $P_Z$.
Throughout we let $\mathcal{G}_X$ and $\mathcal{G}_Z$ denote the merged codes of the $X$- and $Z$-systems, respectively. When we write $P'_X=P_X|_{\mathcal{G}_X}$ or $P'_Z=P_Z|_{\mathcal{G}_Z}$, we mean to restrict those operators originally supported on $\mathcal{G}_Y$ to the qubits that exist in $\mathcal{G}_X$ or $\mathcal{G}_Z$, respectively. To show $d(\ML^*\MU)\ge d$, we will show that either $|P'_X|\ge d$ or $|P'_Z|\ge d$. This is sufficient since $P'_X, P'_Z$ are unaffected by operators in $\init$, which means if $|P'_X|\ge d$, then
\begin{align}
    d(\ML^*\MU) = \min_{\bar{P}\in \ML^*\MS_Y} \MR_{\init}(\bar{P})\ge \min_{\bar{P}\in \ML^*\MS_Y} \MR_{\init}(P'_X)\ge |P'_X|\ge d.
\end{align}
The first inequality is true since $\init$ is generated by singleton checks. 

We note that $P'_X$ is a logical operator on $\mathcal{G}_X$. It commutes with all $Z$ checks of $\mathcal{G}_X$ because $\bar{P}$ commutes with all the checks of $\mathcal{G}_Y$. Similarly, $P'_Z$ is a logical operator of $\mathcal{G}_Z$.
We next claim that either $P'_X$ is a nontrivial logical operator of $\mathcal{G}_X$ or $P'_Z$ is a nontrivial logical operator of $\mathcal{G}_Z$. This can be seen by contradiction. Assuming both $P'_X$ and $P'_Z$ are products of stabilizers in their respective codes, then $\bar{P}$ can also be multiplied by the respective stabilizers from $\mathcal{G}_Y$ to find an equivalent operator with support that is much restricted:
\begin{equation}
\bar{P}\equiv_{\mathcal{S}_Y}Z(v_Z\in C^Z_{j\text{ odd}}\cup V^X_{i\text{ even}})X(v_X\in C^X_{j\text{ odd}}\cup V^Z_{i\text{ even}})Z(b_Z\in B)X(b_X\in B),
\end{equation}
Furthermore, by multiplying with gauge checks $U^X,U^Z,U^B$ and checks from $C^Z_{j\text{ even}}$ and $C^X_{i\text{ even}}$ we can move the support of $\bar{P}\equiv_{\mathcal{S}_Y}Z(b'_Z\in B)X(b'_X\in B)$ entirely onto the bridge qubits $B$ (this goes like the proof of Theorem~\ref{thm:Zdistance} so we omit the details). However, in order to commute with all checks in $U^B$ (see Lemma~\ref{lem:Y_system_gauges}), $V_1^Z$, and $V_1^X$, it must be that $b'_Z=b'_X=0$ or $\bar{P}\equiv_{\mathcal{S}_Y}I$. This is impossible because then $\bar{P}$ would not be a nontrivial logical operator of $\mathcal{G}_Y$.

Now to conclude the proof, we simply note that $P'_X$ being nontrivial in $\mathcal{G}_X$ implies $|P'_X|\ge d$ by assumption. 
Likewise, if $P'_Z$ is nontrivial in $\mathcal{G}_Z$, we find $|P'_Z|\ge d$. This completes our proof.
\end{proof}

We defer the proof of the following theorem to Appendix~\ref{apx:Yproofs}, as it is similar to the proof of Theorem~\ref{thm:XX_faultdistance_sameblock}.
\begin{restatable}{theorem}{Yfaultdistance}~\label{thm:Y_faultdistance}
In the merged code $\mathcal{G}_{Y}$, denote the number of layers in the $\bar{X}$ ($\bar{Z}$) subsystem as $L_X$ ($L_Z$), and the boundary Cheeger constant of $(F^{X} f^{X})^\top$ ($(F^{Z} f^{Z})^\top$) as $\beta_X$ ($\beta_Z$). 
If $\lceil L_X/2\rceil \ge 1/\beta_X$, $\lceil L_Z/2\rceil \ge 1/\beta_Z$, and 
we measure the merged code for $R\ge d$ rounds, then the joint measurement system connecting those ancilla systems via a bridge (see Figure~\ref{fig:Y_ancilla}) has measurement fault distance at least $d$.
\end{restatable}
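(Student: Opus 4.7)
The plan is to mirror the proof of Theorem~\ref{thm:XX_faultdistance_sameblock}, replacing the two-block joint measurement structure with the $X$ and $Z$ subsystems of the $Y$ ancilla.

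First I would apply the generalized form of Lemma~\ref{lem:subsystem_codes_and_fault_distance}, as set up in the paragraph just before Theorem~\ref{thm:Y_distance}, with $\mathcal{S}_O$, $\mathcal{S}_Y$, $\mathcal{U}=\langle \mathcal{S}_O,\mathcal{S}_Y\rangle$, and $\mathcal{L}$ as defined there. The measurement fault distance is at least $\min(R, d(\bar{Z}_M \mathcal{L} \mathcal{U}), d(\bar{X}_M \mathcal{L} \mathcal{U}))$. Since $\bar{Y}_M \in \mathcal{S}_Y \le \mathcal{U}$ and $\bar{X}_M \bar{Z}_M = -i\bar{Y}_M$, the two cosets $\bar{X}_M \mathcal{L} \mathcal{U}$ and $\bar{Z}_M \mathcal{L} \mathcal{U}$ contain the same Pauli supports and hence have the same minimum weight. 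Combined with the hypothesis $R \ge d$, it therefore suffices to prove $d(\bar{Z}_M \mathcal{L} \mathcal{U}) \ge d$.

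Next, I would take any $\bar{P} \in \bar{Z}_M \mathcal{L} \mathcal{U}$ and split $\bar{P} = P_X P_Z$ into $X$-type and $Z$-type parts. Following the strategy of Theorem~\ref{thm:XX_faultdistance_sameblock}, I would multiply $\bar{P}$ by stabilizer/gauge elements of $\mathcal{U}$: $X$-checks $v_i \in V_i^X$ for odd $i$, $Z$-checks $u_j \in V_j^Z$ for odd $j$, the mixed-type check $q_1$, elements of $U^B$, and stabilizers of the original code. Then the Expansion Lemma (Lemma~\ref{lem:expansion}) can be applied to the augmented induced Tanner graphs with adjacency matrices $(F^X\ f^X)$ and $(F^Z\ f^Z)$; the hypotheses $\lceil L_X/2 \rceil \ge 1/\beta_X$ and $\lceil L_Z/2 \rceil \ge 1/\beta_Z$ give the lemma's conclusion on each subsystem.

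A case analysis very close to the four-case argument in Theorem~\ref{thm:XX_faultdistance_sameblock} (based on whether the supports of the reduced $\bar{P}$ on $V_0^X$ and $V_0^Z$ exceed half their containing logical support) then finishes the bound $|\bar{P}| \ge d$. In the small-support cases, the expansion inequality directly yields the bound $d$. In the large-support cases, multiplying $\bar{P}$ by $\bar{X}_M$, $\bar{Z}_M$, or $\bar{X}_M\bar{Z}_M$ produces a nontrivial logical operator of the original code, whose weight is at least $d$ by Lemma~\ref{lem:Xopsprt}.

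The main obstacle, unique to the $Y$ system, is that $q_1$ and some elements of $U^B$ are mixed Pauli type and therefore couple the $X$- and $Z$-supports on the overlap qubit $q_0$, the bridge $B$, and the first ancilla layers $C_1^X, C_1^Z$. Unlike the CSS merged code of Theorem~\ref{thm:XX_faultdistance_sameblock}, this prevents a fully type-separated treatment of $P_X$ and $P_Z$. The hard part will be establishing a normal form for the $q_0$- and bridge-support of $\bar{P}$ (using the basis of $U^B$ from Lemma~\ref{lem:Y_system_gauges}) so that the subsequent expansion-based case analysis can be carried out essentially independently on each side.
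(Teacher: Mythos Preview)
Your proposal is essentially correct and follows the same architecture as the paper's proof: reduce via Lemma~\ref{lem:subsystem_codes_and_fault_distance} to bounding $d(\bar{X}_M\mathcal{L}\mathcal{U})$ (equivalently $d(\bar{Z}_M\mathcal{L}\mathcal{U})$), then apply the Expansion Lemma on each subsystem and run the same four-case analysis as in Theorem~\ref{thm:XX_faultdistance_sameblock}.

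The one place where your plan diverges is in how the mixed-type obstacle is resolved. You propose to normalize the bridge support using the basis of $U^B$ from Lemma~\ref{lem:Y_system_gauges}. The paper instead observes that \emph{all} gauge checks (including $U^B$) and even-layer checks leave the restricted weights $|P'_X|,|P'_Z|$ unchanged, so they can simply be dropped from consideration; only $\tilde{\mathcal S}=\langle \mathcal{S}_O,V^X_{i\text{ odd}},V^Z_{i\text{ odd}},q_1\rangle$ matters. It then starts from the concrete representative $\bar{X}_M\equiv_{\mathcal{S}_Y} Z(B)q_1^Z$ and handles $q_1$ by a one-line trick: if the product $q_1^Z(q_1)^s$ has $s=0$, absorb the $q_1$ vertex into $u_1$; if $s=1$, absorb it into $v_1$. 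After that, the computation is literally the $XX$ calculation with $|B+S^Xv_1+S^Zu_1|$ playing the role of the bridge term and the cases terminating in $\mathcal{R}_{\mathcal{L}\mathcal{S}_O}(\bar{X}_M)$, $\mathcal{R}_{\mathcal{L}\mathcal{S}_O}(\bar{Z}_M)$, or $\mathcal{R}_{\mathcal{L}\mathcal{S}_O}(\bar{X}_M\bar{Z}_M)$, each at least $d$. Your $U^B$-based normalization would likely also succeed, but the paper's observation that gauge checks are irrelevant to the restricted weights makes the bookkeeping shorter and avoids the ``normal form'' step entirely.
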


\begin{remark}
Without the bridge qubits, $\bar{Z}_M\mathcal{U}$ would clearly contain a low-weight element, namely $P=\bar{Z}_Mq_1\mathcal{H}_Z(V_{\text{j odd}}^Z)$, where $\mathcal{H}_Z(V_{\text{j odd}}^Z)$ is the product of all $Z$ checks in the sets $V_1, V_3,\dots$. Notice that $P$ would then have weight equal to the weight of the $X$-type part of $q_1$, which is a constant-weight check. The presence of the bridge qubits is therefore essential to eliminating this and similar cases that lead to low measurement fault distance.   
\end{remark}

\subsection{Bridging Codes}\label{sec:bridge}
In this section, we consider joint measurements of logical operators on different code blocks. 
A version of our fault-tolerance guarantee is as follows.
We defer the proof of this theorem to Appendix~\ref{apx:XXproofs}, as it is similar to our earlier joint system proofs. 

\begin{restatable}{theorem}{XXtwoblocks}~\label{thm:XX_faultdistance_twoblocks}
Suppose we begin with two separate codeblocks $\mathcal{G}_1$ and $\mathcal{G}_2$ and choose logical operators $\bar{X}_1$ and $\bar{X}_2$, respectively, from those codes such that the $X$-ancilla systems for measuring them have measurement and logical fault distances at least $d$, and both $\bar{X}_1$ and $\bar{X}_2$ are minimal weight representatives of their logical equivalence classes. If we measure the merged code for $R\ge d$ rounds, then the joint measurement system connecting those ancilla systems via a bridge (see Figure~\ref{fig:XX_separate_codes}) has measurement and logical fault distances at least $d$.
\end{restatable}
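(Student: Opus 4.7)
The plan is to reduce the claim to three distance estimates via Lemma~\ref{lem:subsystem_codes_and_fault_distance} and its CSS specialization Lemma~\ref{lem:CSS_subsystem_case}, exactly as in the proof of Theorem~\ref{thm:XX_faultdistance_sameblock}. Let $\mathcal{S}$ denote the stabilizer group of the two original codeblocks $\mathcal{G}_1,\mathcal{G}_2$ together with every ancilla and bridge qubit initialized in $\ket{0}$, let $\mathcal{S}_{XX}$ be the stabilizer group of the merged code obtained by gluing the two $X$-ancilla systems with the bridge, and set $\mathcal{U}=\langle\mathcal{S},\mathcal{S}_{XX}\rangle$. Fix a common $(k_1+k_2-1)$-qubit logical basis $\mathcal{L}$ as in Section~\ref{sec:dist}. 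Then the measurement fault distance is $\min(R,d_Z(\bar{Z}_1\mathcal{L}\mathcal{S}))$ and the logical fault distance is $\min(d_Z(\mathcal{L}^*\mathcal{S}),d_X(\mathcal{L}^*\mathcal{S}_{XX}))$, so it suffices to bound each of these three quantities by $d$.

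The two $Z$-type quantities are immediate because the original codeblocks share no qubits or $Z$-checks and no $Z$-type stabilizer crosses the bridge. Any nontrivial $Z$-type element of $\bar{Z}_1\mathcal{L}\mathcal{S}$ or $\mathcal{L}^*\mathcal{S}$ restricts to a nontrivial $Z$-type logical of $\mathcal{G}_1$ or $\mathcal{G}_2$, and the hypothesis that each individual $X$-ancilla system has logical fault distance at least $d$ yields, via Lemma~\ref{lem:CSS_subsystem_case} applied to each block separately, that each block has $Z$-distance at least $d$.

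The substantive task is to bound $d_X(\mathcal{L}^*\mathcal{S}_{XX})\geq d$, and I would do this by replaying the expansion-based case analysis from the end of the proof of Theorem~\ref{thm:XX_faultdistance_sameblock}. Write an arbitrary $X$-type logical of $\mathcal{S}_{XX}$ as the product $P=X(t_1\in\mathcal{V}_1)X(t_2\in\mathcal{V}_2)$ of an $X$-logical of the disjoint original code together with ancilla $X$-checks $\{v_j\in V_j^{(1)}\}$, $\{u_j\in V_j^{(2)}\}$ on each side and $X$-type bridge corrections. Apply the Expansion Lemma (Lemma~\ref{lem:expansion}) independently to each ancilla, noting that each ancilla is by hypothesis distance-preserving, so the required boundary Cheeger condition $\lceil L_i/2\rceil\geq 1/\beta_i$ is automatic. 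If the resulting operator is not equivalent modulo $\mathcal{S}_{XX}$ to $\bar{X}_1$ (which in the merged code is equivalent to $\bar{X}_2$ via the stabilizer $\bar{X}_1\bar{X}_2$), then $P$ is a nontrivial logical of the disjoint original code; combining the per-block Expansion Lemma bounds with Lemma~\ref{lem:Xopsprt} on each block forces the total weight to be at least $d$.

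The main obstacle, as in the same-block version, is the case where the logical representative is equivalent to $\bar{X}_1\equiv_{\mathcal{S}_{XX}}\bar{X}_2$, because the equivalence is now mediated entirely through the bridge and mixes the two blocks. Here the hypothesis that $\bar{X}_1$ and $\bar{X}_2$ are minimal-weight representatives plays the role that the identification $\bar{X}_1\equiv\bar{X}_2$ played in the same-block proof: it ensures $|\bar{X}_i|\geq d$ and, more importantly, permits the estimate $\mathcal{R}_{\mathcal{S}}(v_1)\geq|\bar{X}_1|-|v_1|$ and its analogue for the second block, so that the four-way case split on whether $|v_m|$ and $|u_m|$ exceed half of the corresponding ancilla layer size carries over verbatim. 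In each case the bound either reduces to the bridge-contact term $|B+S^{(1)}v_{L_1}+S^{(2)}u_{L_2}|\geq|B|\geq d$ or to the weight of an equivalent representative of $\bar{X}_1$ or $\bar{X}_1\bar{X}_2$, each of which is at least $d$ by the per-block distance. With this case cleared, $d_X(\mathcal{L}^*\mathcal{S}_{XX})\geq d$ and the theorem follows.
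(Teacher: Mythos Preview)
Your overall reduction via Lemmas~\ref{lem:subsystem_codes_and_fault_distance} and~\ref{lem:CSS_subsystem_case} and the handling of the $Z$-type quantities are fine and match the paper. The gap is in the $X$-distance step.

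You write that ``each ancilla is by hypothesis distance-preserving, so the required boundary Cheeger condition $\lceil L_i/2\rceil\geq 1/\beta_i$ is automatic.'' This inference is backwards. Theorem~\ref{thm:Xdistance} says the Cheeger condition \emph{implies} distance preservation, not the converse; the hypotheses here only give that each individual merged code has fault distance (hence $X$-distance) at least $d$. Nothing in the statement controls $\beta_i$ or $L_i$, so you cannot invoke Lemma~\ref{lem:expansion} and the four-way case split built on it. Indeed, the Remark following the theorem explicitly distinguishes these two sets of hypotheses: under expansion one could replay the proof of Theorem~\ref{thm:XX_faultdistance_sameblock}, but the theorem as stated needs a different argument.

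The paper's route for $d_X(\mathcal{L}^*\mathcal{S}_{XX})$ avoids expansion entirely. It restricts an arbitrary $X$-logical $P$ to the qubits $Q_1$ and $Q_2$ of each separate merged code; each restriction is a logical there, and if either is nontrivial the hypothesis on the individual systems gives weight $\geq d$ directly. If both restrictions are trivial, $P$ is stabilizer-equivalent to $X(B)$, and then the minimal-weight assumption on $\bar{X}_1,\bar{X}_2$ is used in the form $|rM|\leq|r\mathcal{M}|$ (restricting an $X$-stabilizer product of $\mathcal{G}_1$ to $V_0^{(1)}$ versus its complement) to show $|X(B)S_1|\geq|B|$ for any $S_1$ from one half, and likewise for the other. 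This last estimate is where the minimal-weight hypothesis genuinely enters; your proposal to use it inside the Expansion-Lemma case split does not get off the ground because that lemma's hypotheses are unavailable.
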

\begin{remark} 
We say that this is a version of our fault-tolerance guarantee because the assumption on $\bar{X}_1, \bar{X}_2$ needed for this theorem to hold is flexible. In the current statement, we ask that $\bar{X}_1$ and $\bar{X}_2$ are minimal weight representatives to account for edge cases. If we assume that their induced Tanner graphs are expanding and their ancilla systems have enough layers, then fault distance of the joint system can be proved as in Theorem~\ref{thm:XX_faultdistance_sameblock}. 
\end{remark}
\begin{figure}[ht]
    \centering
    \includegraphics[width=0.75\textwidth]{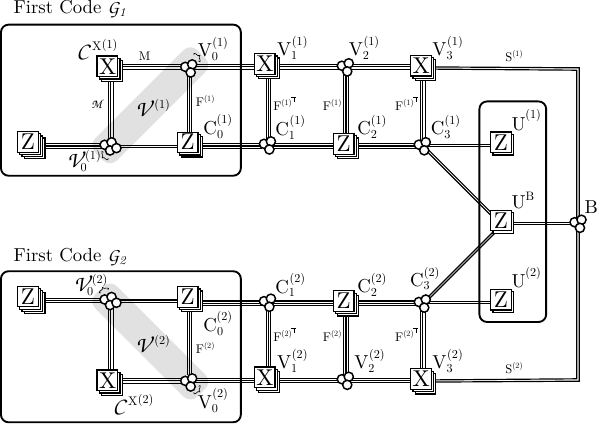}
    \caption{Redrawing the $XX$-system from Figure~\ref{fig:joint_ancilla} in the special case that $\bar{X}_1$ and $\bar{X}_2$ come from separate code blocks.}
    \label{fig:XX_separate_codes}
\end{figure}

This result enables us to connect different families of QLDPC codes together for a fault-tolerance architecture. 
Similar ideas based on the CKBB scheme have been considered and studied before~\cite{cohen2022low,bravyi2024highthreshold,xu2024constant}. However, these constructions mostly focused on connecting a QLDPC code with a surface code, and
lose fault-tolerant when we use fewer than $O(d)$ layers in the ancilla system. 

Our bridge construction overcomes this barrier and can be used as a code adapter: two arbitrary code blocks can be fault-tolerantly connected, enabling teleportation of logical information. 
This capability opens many possibilities for a QLDPC-based computing architecture. 
In particular, we can connect a QLDPC code with efficient logical Clifford operators to a magic state factory based on a vastly different code, achieving universality by combining individually optimized methods. 

On the other hand, our bridge construction comes with a noteworthy caveat.
As discussed in Section~\ref{sec:jointmeasurement}, the weight of the gauge checks $U^B$ depends on the logical operators being measured, which means in the worst case these checks may have high weight.
Let us discuss the choice of $U^B$ in more detail. 
Let $R$ be the repetition code on $B$ bits, from Lemma~\ref{lem:XX_system_gauges}, we see that the checks $U^B$ restricted to the bridge qubits generate the checks of a repetition code, namely $R^{\perp}$. 
Given the induced Tanner graph of a logical operator, we can choose the connecting matrices $S^{(1)}, S^{(2)}$ and a basis for $R^{\perp}$ to minimize the weight of $U^B$. 
This is captured in the following formulation.
\begin{problem}~\label{prob:bridge_weight}
Find a low weight $(|B|-1)\times |B|$ matrix $C_R$ with $\text{rowspace}(C_R)=R^{\perp}$, such that there exist low weight matrices $T^{(1)}, T^{(2)}$ and connecting matrices $S^{(1)}, S^{(2)}$ satisfying 
$T^{(1)}F^{(1)} = C_RS^{(1)^\top}$ and $C_RS^{(2)^\top} = T^{(2)}F^{(2)}$.
\end{problem}
In the case where $F^{(1)} = F^{(2)}$, we can simply take $C_R$ to be a submatrix of $F^{(1)}$.
In average cases, we expect reasonable choices of $C_R, S^{(1)}, S^{(2)}$ to exist. For arbitrary $F^{(1)}$ and $F^{(2)}$, however, such choices may not exist, and our bridge cannot connect these logical operators without breaking the LDPC property. 

This limitation is fully resolved in~\cite{swaroop2024universal}, where it was shown that when $F^{(1)}$, $F^{(2)}$ are simple expander graphs (such as those constructed in~\cite{williamson2024low, ide2024fault}), one can always efficiently solve Problem~\ref{prob:bridge_weight} to guarantee a LDPC bridge construction. 
This means the adapters constructed in~\cite{swaroop2024universal} are truly universal, in the sense that they can connect arbitrary QLDPC code families while maintaining fault-tolerance and LDPC.
We refer readers to~\cite{swaroop2024universal} for further studies and applications of the idea of code adapting.

\section{Case Study: Gross Code}\label{sec:gross_code}

In this section we apply the techniques from the prior section to construct an ancilla system tailored for the $[[144,12,12]]$ code presented by \cite{bravyi2024highthreshold}, known as the gross code. We present an ancilla system with an additional 103 physical qubits, 54 data and 49 check, that are capable of implementing various logical measurements. Together with the automorphism gates that are readily implementable in the gross code, these measurements enable synthesis of all logical Clifford gates and all logical Pauli measurements of eleven of the twelve logical qubits in the code. Prior work in \cite{bravyi2024highthreshold} gave an ancilla system that requires 1380 additional qubits and is merely capable of logical data transfer into and out of the code.

We begin with a review of the construction of the gross code, and Bivariate Bicycle codes more generally. The Tanner graph of the code features vertices of four types: $X$ and $Z$ check vertices, and $L$ and $R$ data vertices. For some $l,m \in \mathbb{Z}^+$, there are $lm$ vertices in each of these categories. The vertices are labeled by elements of an abelian group $\mathcal{M} := \langle x,y | x^l, x^m\rangle \cong \mathbb{Z}_{l} \times \mathbb{Z}_m$ which can be viewed as gridpoints on a $l \times m$ torus. Elements of the $X$ set can be written as $X(x^iy^j)$ for some $0 \leq i < l$ and $0 \leq j < m$, and similarly for the $Z,L,$ and $R$ sets. This makes for $2lm$ check qubits and $2nm$ data qubits. For the gross code, $l=12$ and $m=6$.

The edges of the Tanner graph are specified by two polynomials $A,B \in \mathbb{F}_2[\mathcal{M}]$ with three terms each. For example, for the gross code we have:
\begin{align}
A := x^3 + y + y^2, \hspace{2cm} B = y^3 + x + x^2.
\end{align}
Since the polynomials are over $\mathbb{F}_2$, they are also naturally interpreted as subsets of $\mathcal{M}$.
Then the Tanner graph is given as follows. For some $m \in \mathcal{M}$, the vertex $X(m)$ is connected to the vertices $L(Am) := \{L(m')$ for $m'$ in $Am\}$, and to $R(Bm) := \{R(m'')$ for $m''$ in $Bm\}$. Similarly, if $A^\top$ is the polynomial consisting of the inverses of the monomials of $A$, that is, $A^{T} = x^{-3} + y^{-1} + y^{-2}$, then the vertex $Z(m)$ is connected to $L(B^\top m) := \{L(m')$ for $m'$ in $B^\top m\}$, and to $R(A^\top m) := \{R(m'')$ for $m''$ in $A^\top m\}$.

We verify that the checks commute. Lemma~2 of \cite{bravyi2024highthreshold} shows that an $X$ operator on $L(p)\cup R(q)$ anticommutes with a $Z$ operator on $L(\bar p)\cup R(\bar q)$ if and only if $p\bar p^\top  +q\bar q^\top $ contains the monomial $1\in \mathcal{M}$. The $X(m)$ check is supported on $L(Am)\cup R(Bm)$, and the $Z(m')$ check is supported on $L(B^\top m')\cup R(A^\top m')$.  We observe that $(Am)(B^\top m')^\top  +(Bm)(A^\top m')^\top  = m(m')^\top  ( AB + AB) = 0$ which never contains 1, so the stabilizers must always commute.

Logical Pauli operators on the gross code feature two symmetries of the type studied by \cite{breuckmann2024foldtransversal}: automorphisms and a $ZX$-duality. Automorphisms correspond to shifts of a logical operator by a monomial $w \in \mathcal{M}$. Suppose a logical $X$ operator supported on $L(p)\cup R(q)$ commutes with all $Z$ stabilizers, and hence obeys $pB + qA = 0$. Then we have $(wp)B + (wq)A = 0$ for all $w \in \mathcal{M}$, so  $L(wp)\cup R(wq)$ is also the support of a logical $X$ operator. The same argument holds for $Z$ operators. The permutation of the physical qubits corresponding to multiplying by a monomial has a fault tolerant quantum circuit.

The $ZX$-duality is a permutation that converts the support of an $X$ check to that of a $Z$ check, and vice versa. Again, if a logical $X$ operator is supported on  $L(p)\cup R(q)$, then $pB + qA = 0$. This implies $B^\top  p^\top  + A^\top  q^\top  = 0$, so a $Z$ operator supported on  $L(q^\top )\cup R(p^\top )$ commutes with the $X$ stabilizer. \cite{bravyi2024highthreshold} constructed a high-depth fault-tolerant circuit implementing this permutation, but this scheme does not seem practical. Nonetheless we find the $ZX$-duality useful in the construction of our ancilla system. 

For every qubit defined by an anticommuting $\bar X,\bar Z$ pair, there exists a `primed' qubit defined by applying the $ZX$-duality to these operators, obtaining $\bar Z',\bar X'$. Hence, the twelve logical qubits of the gross code decompose into a primed and unprimed block, each with six logical qubits, on which the automorphism gates act simultaneously and identically.

Having reviewed the gross code, we proceed with the description of the ancilla system. 

\newpage 
\subsection{Mono-layer ancilla system}~\label{sec:monolayer}

\begin{figure}[t]
    \centering
    \includegraphics*[width=0.7\textwidth]{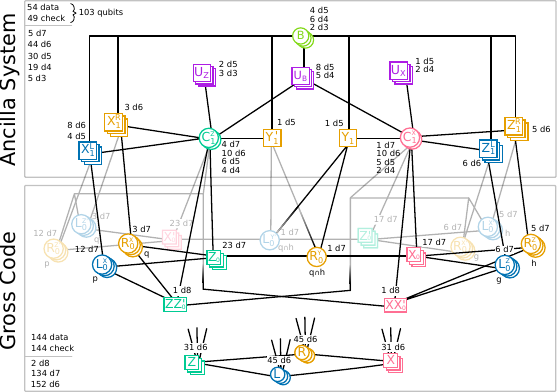} 
    \caption{\label{fig:gross_ancilla_counts} Qubit and degree counts of a gross code equipped with an ancilla system. This ancilla system implements eight different logical measurements by realising different subgraphs as Tanner graphs. Circular qubits are data, and squares are checks. Gauge checks are in purple, the bridge is green, and the remaining ancilla system qubits are colored according to their partners in the gross code: X is red, Z is teal, R is oRange, and L is bLue. The ancilla system is connected to two pairs of logical operators that are ZX-duals of one another: the $\bar X, \bar Z$ qubits are opaque and the $\bar X', \bar Z'$ qubits are transparent. Their Tanner-subgraphs overlap on the $ZZ'_0$ and $XX'_0$ qubits. Note that checks are labeled $X$ or $Z$ type according to how they are used in $X$ or $Z$ systems individually, but in the joint measurement and $Y$ systems that employ the bridge, they are no longer mono-type checks (see Figs.~\ref{fig:joint_ancilla} and \ref{fig:Y_ancilla}). An annotation `3 d7' means that there are three degree-seven qubits.}
\end{figure}

A sketch of the system is shown in Figure~\ref{fig:gross_ancilla_counts}, displaying an overview of the layout of the ancilla system as well as its resource requirements. The construction is based on a pair of logical $X$ and $Z$ operators where a mono-layer measurement scheme preserves code distance. In particular, we have an $\bar X$ operator supported on $L(p)\cup R(q)$, and a $\bar Z$ operator supported on $L(r)\cup R(s)$ where:
\begin{align}
    p &= y + xy^5 + x^2y^2 + x^2y^4 + x^3y + x^3y^2 + x^4y + x^4y^2 + x^9y^3 + x^{10} + x^{11} + x^{11}y^3,\\
    q &= y^2 + xy + xy^5 + x^6,\\
    r &= x^2y^3 + x^2y^5 + x^3 + x^3y^3 + x^3y^4 + x^3y^5,\\
    s &= xy^3 + xy^5 + x^2 + x^2y^4 + x^3y^2 + x^3y^4.
\end{align}

We verify numerically using CPLEX that the distance of the merged codes of the $X$ and $Z$ systems individually are both 12, preserving the distance of the original code.  These operators overlap only on $R(xy^5)$, so they anticommute.

Several additional ancilla systems can be constructed from these $X$ and $Z$ systems. We present a single graph of connected qubits, such that a supercomputing architecture based on this graph can realize all of these systems either with a direct correspondence of qubits to Tanner graph vertices, or with a few flag qubits for $R(xy^5)$ only. 
\begin{itemize} 
\item Following the construction in section~\ref{sec:ysystem}, we can build a system that measures $\bar Y := i\bar X \bar Z$. This introduces eleven bridge qubits that connect the two ancilla systems - one fewer than the distance since they already share $R(xy^5)$. These are shown in Figure~\ref{fig:gross_bridge}. In the resulting Tanner graph there is one qubit in the ancilla system connected to $R(xy^5)$ that connects the $X$ and $Z$ systems together. For reasons we explain shortly, we will split this vertex into three logical qubits connected in a line. These three qubits can act as a single check by initializing all but the middle qubit in a Bell state, and finishing with a Bell measurement. This yields a few flag measurements \cite{DiVincenzo_2007}.

\item Due to the $ZX$-duality, the operator $\bar Z$ has a dual $X$ logical operator supported on $L(s^\top )\cup R(r^\top )$. To arrive at disjoint operators and also to minimize overlap of check qubits, we subsequently apply an automorphism $w = x^{10}y^{5}$, and arrive at an operator $\bar X'$ supported on $L(ws^\top )\cup R(wr^\top )$. Similarly, we construct an operator $\bar Z'$ supported on $L(wq^\top )\cup R(wp^\top )$. Since $\bar X'$ was constructed from $\bar Z$ using geometric symmetries of the code, the subgraph of the Tanner graph used to construct the ancilla system is identical for these operators up to a switch of $X$ and $Z$. By connecting the $\bar Z$ ancilla system to both of the $\bar Z$ and $\bar X'$ subgraphs, a single system can measure both $\bar Z$ and $\bar X'$ by considering different subsets of the connections as the Tanner graph. Similarly, the $\bar X$ system can also measure $\bar Z'$. We can also measure $\bar Y' := i\bar X' \bar Z'$.
\item Following section~\ref{sec:jointmeasurement}, we construct measurements of the joint operators $\bar X\bar X'$ and $\bar Z\bar Z'$. This explains the need to split the $R(xy^5)$ qubit: since $\bar X$ and $\bar X'$ are disjoint, the ancilla qubits connected to $R(xy^5)$ and its dual $L(w(xy^5)^\top )$ need to be separate. Now, twelve bridge qubits are required to ensure preservation of code distance. As anticipated above, we place this bridge qubit between the two qubits connected to $R(xy^5)$, so that these qubits can act as a single check together when we are measuring $\bar Y$ or $\bar Y'$.
\end{itemize}
Hence the same ancilla system can measure any of the operators $\bar X, \bar Y, \bar Z, \bar X', \bar Y', \bar Z', \bar X\bar X'$, and $\bar Z \bar Z'$. In section~\ref{sec:logicalcliffords} we show that these measurements and the automorphisms together can synthesize all Clifford gates on eleven of the twelve qubits.

The theorems proved in Section~3 generally do not provide enough evidence to conclude that the $\bar Y, \bar Y', \bar X\bar X',$ and $\bar Z\bar Z'$ merged codes preserve code distance. Nevertheless we find numerically that this is the case using CPLEX.

In the original gross code all vertices in the Tanner graph have degree six. Looking at Figure~\ref{fig:gross_ancilla_counts}, the degrees of the qubits in the ancilla system are mostly equal or less than this since the subgraphs of the gross code corresponding to the logical operators do not include connections to one type of check. However, the additional gauge checks introduced by the bridge system result in some degree seven data qubits in the ancilla. Many qubits in the gross code must be connected to the ancilla system. Through careful choice of the automorphism shift $w$, it was achieved that all but two of these qubits connect to the ancilla system only once, so they have degree seven. The qubits $X(x^{11}x^5)$ and $Z(x^{11})$ are the only degree eight qubits in the construction. If it had not been split via the bridge, the shared $R(xy^5)$ qubit in the ancilla system would have had degree eight as well.
 
Several gauge checks were introduced, either supported in the $X$ or $Z$ systems alone, or to deal with loops introduced by the bridge. The gauge checks are specified in Figures~\ref{fig:gross_bridge} and~\ref{fig:gross_ancilla_spec}. For the full $\bar X \bar X'$ and $\bar Z \bar Z'$ measurement operations all of these gauge checks are necessary. But for the other measurements, it turns out that several of the gauge checks are redundant. In the $\bar X,$ and $\bar Z'$ systems one \emph{only} needs gauge checks 0, 1, and 2. In the $\bar Z $ and $ \bar X'$ systems one only needs gauge check $6$. In the $\bar Y$ and $\bar Y'$ systems, gauge checks 10 and 17 must have their support from bridge qubit 0 removed in order to be well-defined. Then, gauge checks 6, 12, 16, 17, 18, and 20 are redundant.

\begin{figure}[t]
    \centering
    \includegraphics*[width=0.8\textwidth]{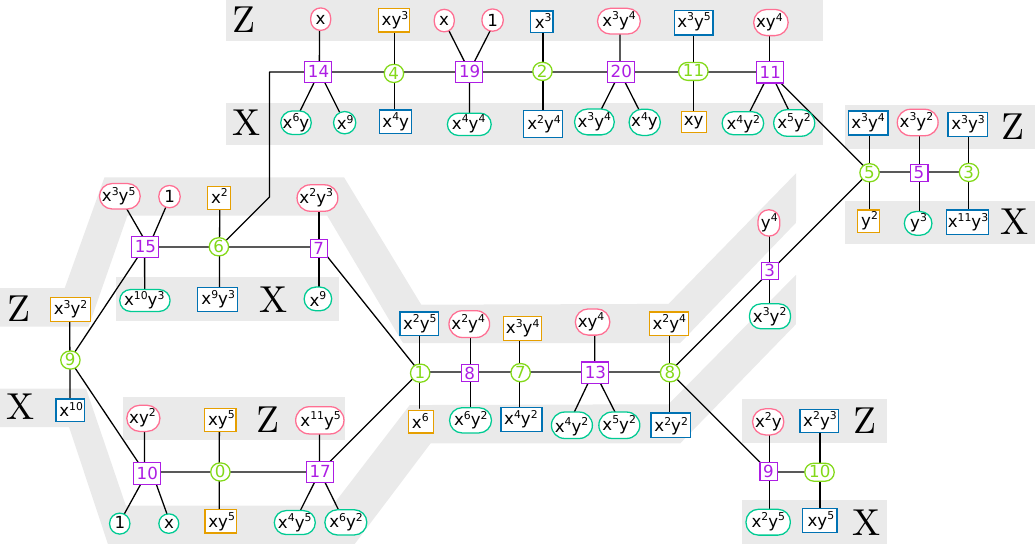}
    \caption{\label{fig:gross_bridge} Subgraph of the Tanner graph corresponding to the bridge and its corresponding gauge checks. The bridge system is necessary to ensure that $\bar X\bar X'$, $\bar Z \bar Z'$, $\bar Y$, and $\bar Y'$ measurements are fault tolerant. Each green bridge qubit connects a pair of checks in the $X$ and $Z$ systems. This introduces several loops, which give rise to gauge degrees of freedom, which are removed through additional checks in purple.}
\end{figure}

\begin{figure}[p]
    \centering
    \includegraphics*[height=0.8\textheight]{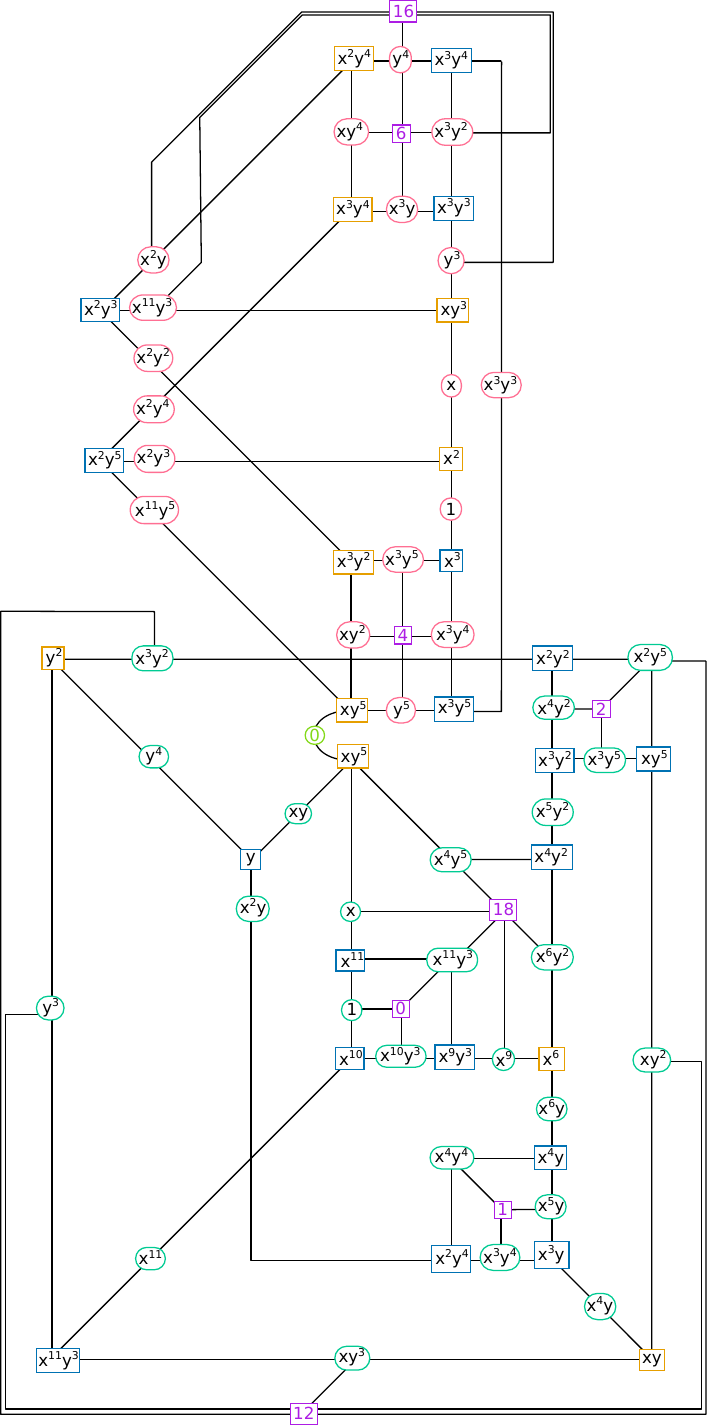}
    \caption{\label{fig:gross_ancilla_spec} Subgraph of the Tanner graph of the ancilla system ignoring the bridge, except for bridge qubit 0 which connects the qubits corresponding to the $R(xy^5)$ where $\bar X$ and $\bar Z$ overlap. Qubits are labeled by the qubit they are connected to in the gross code, following the coloring scheme described in Figure~\ref{fig:gross_ancilla_counts}. Observe how the role of check and data qubits are reversed relative to the gross code qubits.}
\end{figure}
 
\subsection{Numerical simulations}

We perform circuit-level noise simulations of the measurement protocols for the $X$ and $Z$ ancilla systems. We find that the syndrome circuits for the corresponding merged codes preserve the circuit-level distance of the gross code's circuit, which is ten. Due to the larger circuit size, the resulting logical error rates per syndrome cycle are about a factor of ten larger than a syndrome cycle with no logical gate. These results suggest that measurements implemented via ancilla systems are a practical avenue for implementing fault-tolerant logical gates in an architecture based on the gross code.

The syndrome circuit is derived from an edge-coloring of the Tanner subgraph shown in Figure~\ref{fig:gross_ancilla_spec}. The full circuit is shown in Figure~\ref{fig:syndromecircuits}. The coloring yields a quantum circuit with three rounds marked $R,G,$ and $B$ in the figure, and one additional `early' round, that together cover all edges in the subgraph supported on the ancilla system. The syndrome circuit of the gross code has eight rounds of CNOTs or measurements and features a natural gap where the data qubits are idling. This gap is leveraged for implementing the edges between the ancilla system and the gross code, so the merged syndrome circuit has the same depth per cycle.

Recall that for the $X$ and $Z$ systems, we can ignore the gauge checks 4, 12, and 18 since they are redundant, as well as the bridge qubit 0. Hence, the $X$ and $Z$ Tanner graphs in Figure~\ref{fig:gross_ancilla_spec} have vertices with degree at most three, except for the gauge check 6 which has degree four. We find that, if we leave out one of check 6's edges, the edges of the graph are three-colorable, yielding a depth-three quantum circuit that includes all required CNOT gates. If all the checks in the circuit were the same type of check (all $X$ or all $Z$), then the order of these CNOTs would not matter. However, since gauge checks are of a different type, the original coloring circuit results in some incorrect computation of the stabilizers near the gauge checks. We find that this is easily fixed by introducing an additional `early' round of CNOTs, and moving some of the problematic CNOTs into this round. This is also a natural spot for the fourth edge of the gauge check 6. See Figure~\ref{fig:coloring_circuits} for a derivation and specification.

We perform three experiments on the syndrome circuits. First, we perform a memory experiment to emperically determine the circuit-level distance and to compare to the code's idle cycle. Second, we determine how many syndrome cycles are suitable for a logical measurement by comparing the logical error rate to the measurement error rate. Both of these experiments rely on the modular decoding approach described in Section~\ref{sec:modular_decoder}. In the third experiment we assess the benefits of the modular decoding approach over a simpler implementation relying on BPOSD alone.  Our software is implemented in python and leverages the \textsf{stim} \cite{gidney2021stim}, \textsf{pymatching} \cite{higgott2023sparse}, and \textsf{bp\_osd} \cite{roffe_decoding_2020, Roffe_LDPC_Python_tools_2022} libraries. 

The memory experiment is shown in Figure~\ref{fig:numericalexperiments}a). We compare the $X$ and $Z$ systems as well as the operation of the gross code syndrome cycle on its own. We find that our simulations of the gross code match the fits from \cite{bravyi2024highthreshold} within statistical error, confirming the accuracy of our implementation and that our error model is the same as the one described in their Section~2. The logical error rates of the $X$ and $Z$ systems are within a factor of 10x and 5x larger than that of the code on its own. We also conclude that the circuit-level distance of all the syndrome circuits is ten, both from the similarity of the slopes of the curves, and also BPOSD's inability to identify undetectable logical errors of weight less than ten.

In Figure~\ref{fig:numericalexperiments}b) we compare the error rate of the logical measurement to the logical error rate - the probability of a logical error on any of the qubits not being measured. As the number of rounds increases there are more opportunities for high-weight errors and the logical error rate increases. But more rounds also provide more measurements of the stabilizers composing $\bar X$, yielding a more reliable logical measurement outcome. We find that these probabilities are about the same at seven rounds, although this crossover point will depend on the physical error rate. In this experiment the physical error rate was 0.001.

All experiments above rely on the modular decoding method from Section~\ref{sec:modular_decoder}. 
In Figure~\ref{fig:matching} we compare this modular decoder to a purely BPOSD decoding approach in the critical region of about seven syndrome cycles at a physical error rate of 0.001. Here we find that the performance of the two decoders is very similar up to statistical error, with a slight advantage towards a purely BPOSD approach. However we also see that the matching decoder is over 100x faster than the BPOSD decoders, and the BPOSD decoder for the merged code is about 10x slower than that of the gross code alone. This shows that the modular decoding approach can provide a tangible improvement in decoding time.

\begin{figure}[t]
    \centering
    \subfloat[a) Comparison of logical error rates of the $X$ and $Z$ ancilla systems and the idle gross code in the circuit-level noise model for 12 syndrome cycles. Logical error rates are normalized by the number of syndrome rounds. A logical error is any error on the logical qubits not being measured.]{\includegraphics*[width=0.45\textwidth]{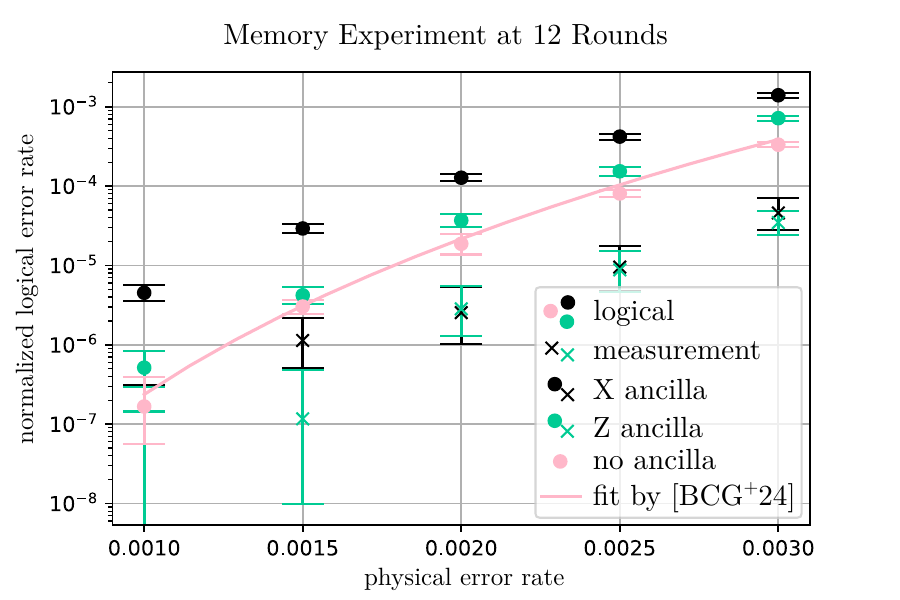}}\qquad
    \subfloat[b) Comparison of the logical measurement error to the probability of a logical error on the remaining qubits. Here we consider the $X$ system at a physical error rate of $p = 0.001$. At this physical error rate the quantities are balanced at about seven rounds.]{\includegraphics*[width=0.45\textwidth]{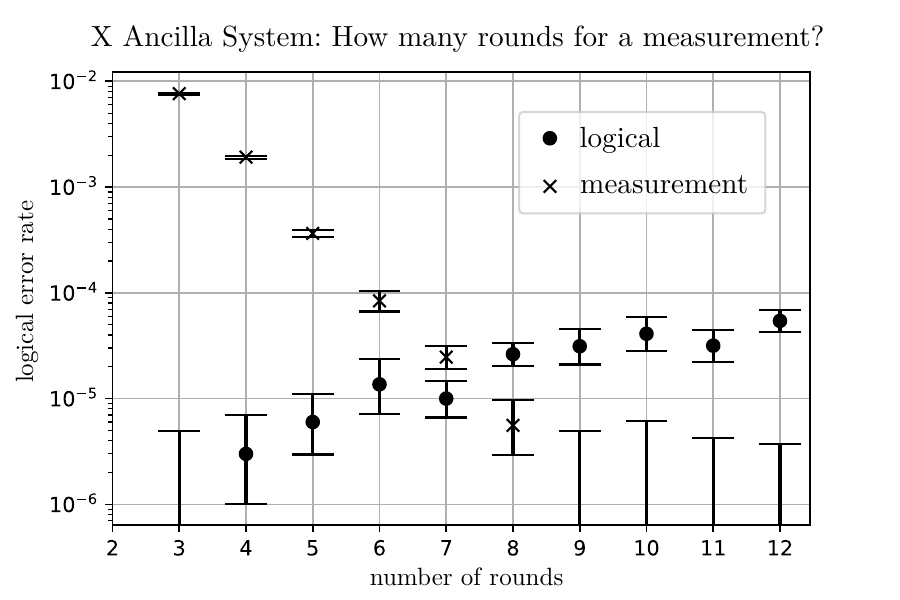}}\\
    \caption{\label{fig:numericalexperiments} Circuit-level noise model benchmarks of the merged syndrome circuits shown in Figure~\ref{fig:syndromecircuits}. The noise model is the same as that of \cite{bravyi2024highthreshold}. The error bars denote confidence intervals of $1\sigma.$}
\end{figure}

\begin{figure}[t]
    \centering\hspace{-5mm}
    \subfloat[a) Measurement and logical error rates of two different decoding strategies. We see that these are very similar up to statistical error, with a purely BPOSD strategy performing slightly better.]{\includegraphics*[width=0.33\textwidth]{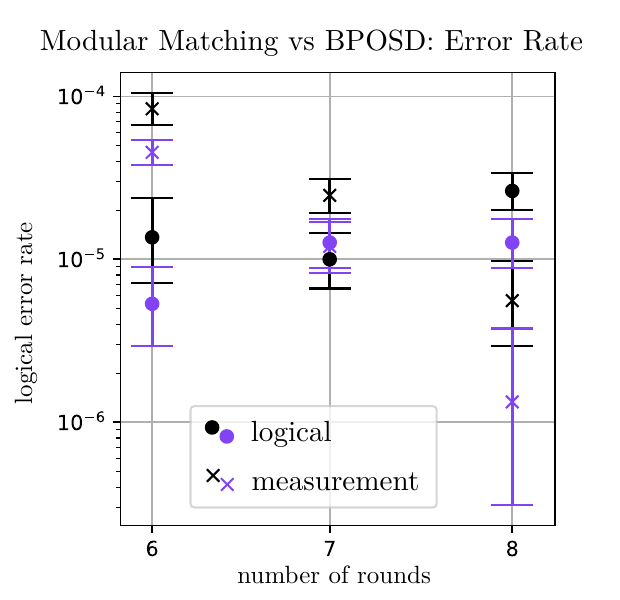}}
    \subfloat[\parbox{0.4\textwidth}{b) Decoding compute times for the decoding strategies, leveraging the \textsf{bp\_osd} and \textsf{pymatching} python packages. We see a significant improvement for the modular approach.}]{\includegraphics*[width=0.55\textwidth]{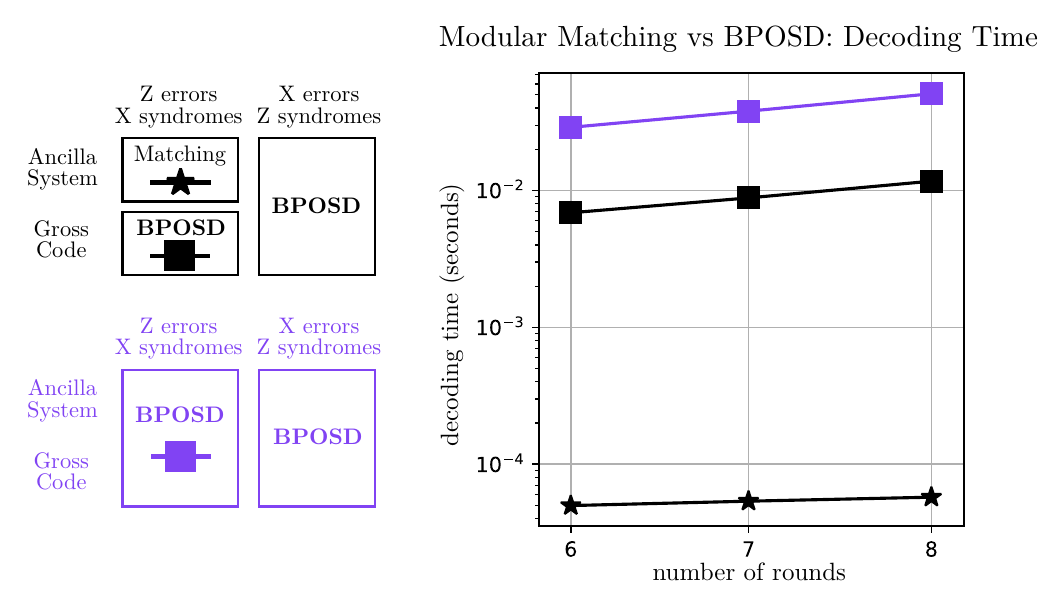}}\\
    \caption{\label{fig:matching} Comparison of the modular decoding strategy to a decoder based purely on BPOSD. Here we consider the $X$ ancilla system, a number of cycles where measurement and logical errors are similar, and a physical error rate $p=0.001$. The error bars denote confidence intervals of $1\sigma.$}
\end{figure}

\subsection{Implementing logical Clifford gates}\label{sec:logicalcliffords}

Together with the automorphisms the logical measurements implemented by the ancilla system suffice to implement all logical Clifford gates on eleven of the twelve logical qubits. 

Automorphism gates are permutations of the physical qubits, and hence take $X$ operators to $X$ operators and $Z$ operators to $Z$ operators. Since $\bar X$ is supported on $L(p)\cup R(q)$, we may define $\bar X_\alpha$ to be supported on $L(\alpha p)\cup R(\alpha q)$ for $\alpha \in \mathcal{M}$. A measurement of $\bar X_\alpha$ can be implemented by conjugating an $\bar X$ measurement with the automorphism gate $\alpha$. We further let $\bar Z_\alpha$ be supported on $L(\alpha s^\top )\cup R(\alpha r^\top )$, and similarly for the dual operators $\bar X'_\alpha$ and $\bar Z'_\alpha$. For the joint measurements $\bar Y, \bar Y', \bar X \bar X'$ and $\bar Z\bar Z'$ we cannot apply automorphisms to each operator in the product separately. We only have access to $\bar Y_\alpha := i\bar X_\alpha\bar Z_\alpha,$ and $ \bar Y'_\alpha := i\bar X'_\alpha\bar Z'_\alpha,$ as well as $ \bar X_\alpha \bar X'_\alpha,$ and $ \bar Z_\alpha \bar Z'_\alpha$. Note that there is a set of $\alpha$ such that $\bar X_\alpha, \bar Z_\alpha, \bar X'_\alpha, \bar Z'_\alpha$ form a symplectic basis for the logical Hilbert space. For such a basis can individually initialize and read all logical qubits.

To transform our measurement-based gateset into a unitary gate set, we make use of the following well-known identity (e.g., \cite{Litinski_2019} Figure~11b). 
\begin{align}
\includegraphics[]{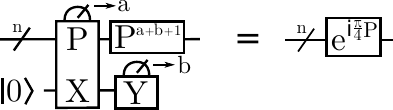}\label{eqn:rotationsynth}
\end{align}
Here $P$ is any $n$-qubit Pauli matrix, and for our purposes we are interested in $n=11$. A Clifford gate of the form $e^{i\frac{\pi}{4}P}$ is called a Pauli rotation.

Any measurement-based unitary synthesis strategy must sacrifice one qubit to act as a `pivot' - here the bottom qubit in the circuit. The pivot ensures that the measurement outcome is never deterministic, and hence cannot decohere the quantum information on the data qubits. Measurement-based operations repeatedly entangle and disentangle the data qubits with the pivot qubit. The measurement results generate Pauli corrections which are cheap to apply, either physically or by tracking the Pauli frame.

There are some variations of (\ref{eqn:rotationsynth}). The $\ket{0}$ initialization can be performed through a $Z$ measurement, resulting in a $Z \to X \to Y$ sequence of basis changes. By inserting single-qubit Clifford gates and their inverses, this sequence can be changed to any ordering of the $X,Y,Z$ bases. 

We select the pivot qubit to be the logical qubit with operators $\bar X_1$ and $\bar Z_1$. This way we can implement measurements of $\bar X_1, \bar Z_1$, and $\bar Y_1$ natively using the ancilla system. Hence, \emph{any} implementable measurement can be transformed into a Pauli rotation so long as it has support on both the pivot qubit and at least one other logical qubit. No matter if the support on the pivot is $X,Y,$ or $Z$, we can select the initialization and final measurement in (\ref{eqn:rotationsynth}) to be the other two. 

The automorphism group is given by $\mathcal{M}$, with $|\mathcal{M}| = 72.$ However, we find that logical operators are invariant under the automorphism $x^6$: $\bar X_1 \bar X_{x^6}$ and $\bar Z_1 \bar Z_{x^6}$ are in the stabilizer. Hence if we consider the group of automorphism unitaries modulo the stabilizer, this group has order 36.

With 36 (logical) automorphisms and eight logical Pauli measurements from the ancilla system, we can implement 288 logical measurements. But any conjugated measurement of $\bar X', \bar Y', \bar Z'$ would have support on the primed qubits only, and hence cannot have support on the pivot since it is in the unprimed block. Therefore, only $(8-3)\times 36 = 180$ measurements are useful for rotation synthesis. Of these, 95 have support on both the pivot and the remaining qubits. 

We find that these 95 rotations suffice to generate the Clifford group. This is verified numerically via brute-force search. We leverage the following identity:
\begin{align}
e^{i\frac{\pi}{4}P}e^{i\frac{\pi}{4}Q}e^{-i\frac{\pi}{4}P} = \bigg\{ \begin{array}{l}e^{\frac{\pi}{4}PQ} \text{ if } PQ = -QP\\ e^{i \frac{\pi}{4} P} \text{ otherwise}\end{array},
\end{align}
noting that $e^{\frac{\pi}{4}PQ}$ is unitary since $PQ$ is anti-Hermitian. We maintain a list of implementable Pauli rotations, and grow this list by conjugating them by rotations from the 95 natively implementable gates. When a new implementable rotation is found it is added to the list. When a shorter implementation of a prior rotation is found the recorded implementation is updated with the shorter one. We terminate the search when all implementations are found. Since the Clifford group is generated by Pauli rotations and Pauli matrices \cite{Callan1976TheGO} this suffices to show that all Clifford gates can be implemented.

The search algorithm above finds partially optimized but not necessarily optimal implementations of all $4^{11}$ Pauli rotations. The overhead of this synthesis method is presented in Figure~\ref{fig:implementations}a). For Pauli-based computation \cite{Bravyi_2016, Litinski_2019, Chamberland_2022} where universal quantum computation is implemented via magic state preparation and multi-qubit Pauli measurements, implementation of arbitrary Pauli measurements may be more interesting than Pauli rotations. Figure~\ref{fig:implementations}b) shows the results of a similar search, where the 288 native logical measurements are conjugated by Pauli rotations until all measurements are found.

If we are implementing a sequence of gates using (\ref{eqn:rotationsynth}) then we require three measurements per gate. However, depending on the support of the multi-qubit measurement the final measurement of one rotation circuit can be used as the initialization of the next rotation circuit. This would allow us to implement rotations using two measurements per rotation. Depending on the particular sequence of rotations, somewhere between two or three measurements per rotation are required.

\begin{figure}[t]
    \centering
    \subfloat[a) $e^{i\frac{\pi}{4}P} = e^{-i\frac{\pi}{4}Q_{1}} ... e^{-i\frac{\pi}{4}Q_{n-1}} e^{i\frac{\pi}{4}Q_n} e^{i\frac{\pi}{4}Q_{n-1}} ... e^{i\frac{\pi}{4}Q_1}$]{\includegraphics*[width=0.45\textwidth]{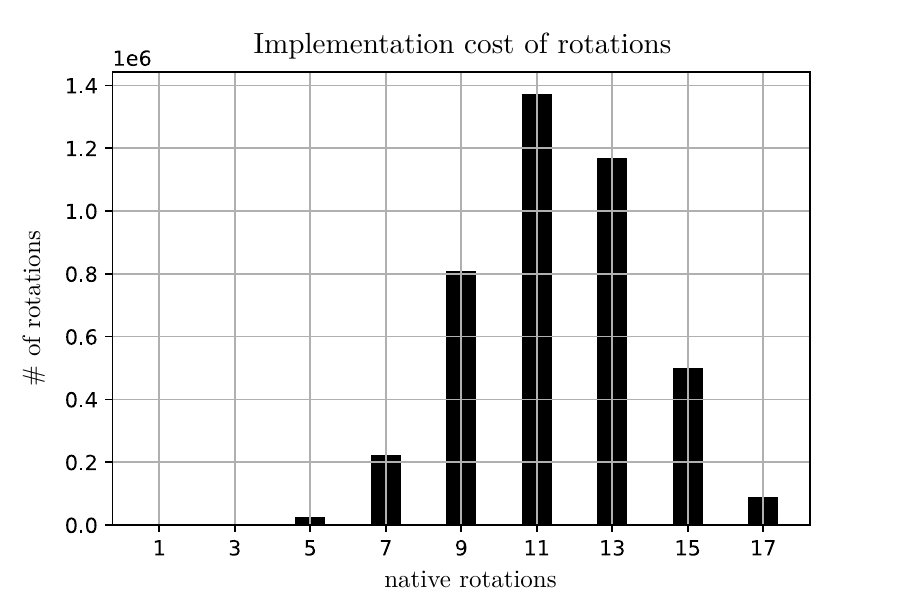}}\qquad
    \subfloat[b) $\mathcal{M}_P = e^{-i\frac{\pi}{4}Q_{1}} ... e^{-i\frac{\pi}{4}Q_{n-1}} \mathcal{M}_{Q_n} e^{i\frac{\pi}{4}Q_{n-1}} ... e^{i\frac{\pi}{4}Q_1}$]{\includegraphics*[width=0.45\textwidth]{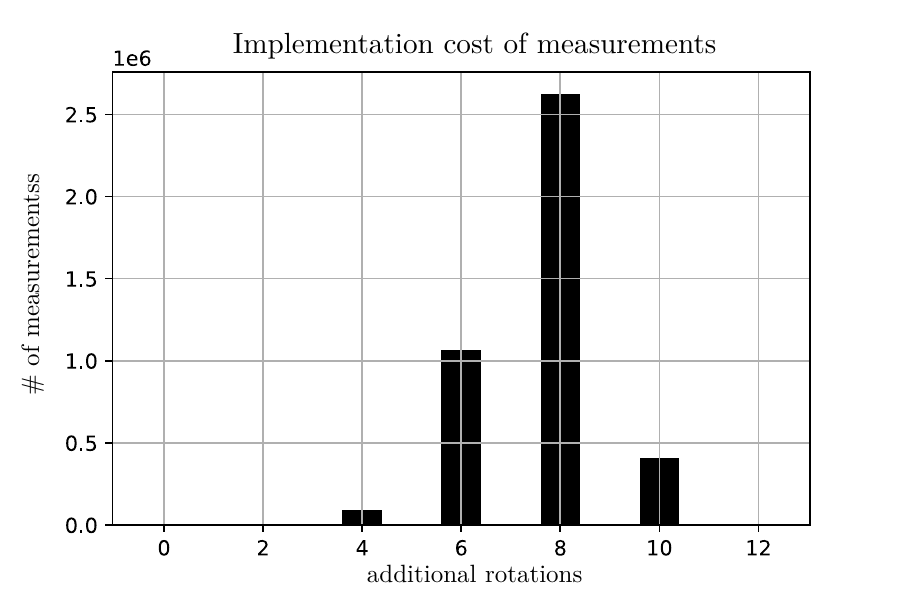}}\\
    
    \caption{\label{fig:implementations} Partially optimized implementation costs of a) all $4^{11}$ Pauli rotations $e^{i\frac{\pi}{4}P}$ and b) all $4^{11}$ Pauli measurements $\mathcal{M}_P$. Cost is measured in the total number of rotation circuits (\ref{eqn:rotationsynth}) required to implement the gate using the form displayed beneath the figure.}
\end{figure}

\section{Future Work}

In this work, we integrated three new ideas into the study of QLDPC surgery: using graph expansion to lower space overhead, gauge-fixing extraneous logical qubits to enhance fault-tolerance, and building bridge systems to connect multiple ancilla systems and codes.
As demonstrated on the gross code, the gauge-fixed surgery scheme is a promising method of performing low-overhead logical computation on QLDPC codes. 
We sample several future directions arising from this work.

Evidently, ancilla systems are particularly affordable when only one layer of additional qubits is necessary. We were able to find logical operators on the gross code where this is the case. Are there any other families of codes where this is possible? Our analytical techniques did not suffice to prove that a monolayer system suffices for the gross code. Is it possible to prove that a single layer is enough? Quantum expander codes seem promising candidates.

For our simulations, we derived explicit syndrome circuits for the $\bar X$ and $\bar Z$ measurement schemes for the gross code. Are these syndrome circuits optimal in terms of the number of fault locations? Can one design optimal or otherwise high-performance syndrome circuits for other measurement schemes, perhaps assuming that the original code $\mc{G}$ also has an efficient syndrome circuit? 

Finally, from Section~\ref{sec:logicalcliffords} it is clear that implementing logical computations will require very many measurements, which in turn required several rounds of error correction. For certain error correction codes, such as perhaps \cite{scruby2024highthreshold}, is there a single-shot error correction scheme for merged codes? 
Lowering the time overhead of surgery operations would have significant practical and theoretical implications.

\paragraph{Acknowledgements}

We are grateful to Ben Brown, John Blue, Michael Beverland, Christopher Pattison and Dominic Williamson for enlightening discussions and valuable feedback.
Z.~He is supported by National Science Foundation Graduate Research Fellowship under Grant No. 2141064. 
This work was done in part while Z.~He was visiting the Simons Institute for the Theory of Computing, supported by NSF QLCI Grant No.~2016245.

\appendix

\section{Appendices}

\subsection{Proofs for the \texorpdfstring{$X$}{X} Ancilla System}\label{apx:Xproofs}

In this appendix, we prove Theorem~\ref{thm:validXancilla} for the $X$ measurement scheme. We begin with an explicit description of the stabilizer check matrices $H_L^X$ and $H_L^Z$, corresponding to the code $\mathcal{G}_X$. Recall that the matrices $J_{V_0} : \mathcal{V} \to V_0$ and $J_{C_0} : \mathcal{C}^Z \to C_0$ are projection isometries onto the subsets $V_0 \subset \mathcal{V}$ and $C_0 \subset \mathcal{C}^Z$ respectively.  These define $F := J_{C_0}^\top H^Z J_{V_0}$ and the full-rank $G$ satisfying $GF = 0$.  We may also consider isometries $J_{V_i} : \mathcal{V} \cup V_1 \cup \cdots V_{i-1} \to V_i$ for all odd $i$ that take elements in $V_0 \subset \mathcal{V}$ to their corresponding element in $V_i$, and similarly for  $J_{C_i} : \mathcal{C}^Z \cup C_1 \cup V_2 \cup \cdots C_{i-1} \to C_i$. These give a convenient way of writing $H_{L}^X$ and $H_{L}^Z$.

Having outlined the construction of $\mathcal{G}_X$, we show that it gives a valid code that features $\bar{X}_M$ as a stabilizer, and has exactly one fewer qubit than $\mathcal{G}$.

\logicalqubit*

\begin{proof}[Proof of Theorem~\ref{thm:validXancilla}.]

First we show that the checks defined by the matrices $H^X_{L}$ and $H_{L}^Z$ commute, which is equivalent to $H_{L}^X (H_{L}^Z)^\top = 0$. We follow their recursive definition. For the base case, since $H^X$ and $H^Z$ commute, we may restrict our attention to the qubits $V_0, C_1$, and must show that the $X$ checks on $V_1 :\begin{bmatrix} I & F^\top\end{bmatrix}$ commutes with the $Z$ checks on $C_0 : \begin{bmatrix} F & I\end{bmatrix}$ and, if $L=1$, $U_1: \begin{bmatrix} 0 & G\end{bmatrix}$.
\begin{align}
X(V_1) \sim Z(C_0)&: \hspace{2mm}  \begin{bmatrix}I & F^\top\end{bmatrix} \cdot \begin{bmatrix}F & I\end{bmatrix}^\top = F^\top + F^\top = 0, \\
X(V_1) \sim Z(U_1)&: \hspace{2mm}  \begin{bmatrix}I & F^\top\end{bmatrix} \cdot \begin{bmatrix}0 & G\end{bmatrix}^\top = F^\top G^\top = 0,
\end{align}
since $GF = 0$ by definition of $G$. For the recursive case, we restrict our attention to qubits $C_{L-2}, V_{L-1}, V_{L}$.
The $X$ checks $V_{L-2} : \begin{bmatrix} F^\top & I & 0\end{bmatrix}$ and $V_{L} : \begin{bmatrix} 0 & I & F^\top \end{bmatrix}$ should commute with the $Z$ checks $C_{L-1} : \begin{bmatrix} I & F & I \end{bmatrix}$. If this is the final layer, then we also need to include $Z$ checks $U_L$. Note that $X$ checks $V_{L-2})$ and $Z$ checks $U_L$ commute since they act on disjoint sets of qubits.
\begin{align}
X(V_{L-2}) \sim Z(C_{L-1}) &: \hspace{2mm}  \begin{bmatrix}F^\top & I & 0\end{bmatrix} \cdot \begin{bmatrix}I & F & I\end{bmatrix}^\top = F^\top + F^\top = 0, \\
X(V_{L}) \sim Z(C_{L-1}) &: \hspace{2mm}  \begin{bmatrix}0 & I & F^\top\end{bmatrix} \cdot \begin{bmatrix}I & F & I\end{bmatrix}^\top = F^\top + F^\top = 0,\\
X(V_{L}) \sim Z(U_{L}) &: \hspace{2mm}  \begin{bmatrix}0 & I & F^\top\end{bmatrix} \cdot \begin{bmatrix}0 & 0 & G\end{bmatrix}^\top = F^\top G^\top = 0.
\end{align}

Second, we demonstrate that $\bar{X}_M$ is a stabilizer in $\mathcal{G}_{X}$: it is simply the product of all the $X$ stabilizers on all the $V_i$ with odd $i$. To see this, we note that for odd $j<L$, 
\begin{align}
    H_X(V_j) = X(V_{i-1})X(V_{i+1}).
\end{align}
The product is not supported on any of the $C_j$ since all the individual contributions from each of the checks in $V_j$ cancel. This is equivalent to the claim that each $Z$ check on $C_0$ is supported on an even number of qubits in $V_0$, which is guaranteed since $\bar{X}_{M}$, whose support is exactly $V_0$, is a logical operator. Similarly, $L$ is odd and $H_X(V_L) = X(V_{L-1})$, i.e.~the product of checks in the final layer acts on all qubits in $V_{L-1}$. Therefore, we see that
\begin{align}
    \prod_{\substack{j=1\\j \text{ odd}}}^LX(V_j) = X(V_0) = \bar{X}_M.
\end{align}

Finally, we show that if $\mathcal{G}$ has $k$ logical qubits, then $\mathcal{G}_X$ has $k-1$ logical qubits. We proceed by induction on the number of layers, and we use $\mathcal{G}^l_X$ to indicate the merged code for measuring $\bar{X}_M$ with $l$ layers, and $H^X_l$, $H^Z_l$ its parity check matrices.  

Make the following observation: the checks $U_L$ can be connected to any $C_i$ for odd $i$, and hence we may as well assume they are always connected to $C_1$. To see this, let $(I,F,I)$ describe the $Z$ checks in $C_{2i}$ with the three components of the vector indicating the support on $C_{2i-1}$, $V_{2i}$, and $C_{2i+1}$, respectively. Then, suppose the checks $U_L$ can be supported on $C_{2i+1}$ (this is the case for $2i+1=L$), so that $(0,0,vG)$ represents some combination of these checks. We can move the checks to be entirely supported on $C_{2i-1}$ by adding a combination of checks from $C_{2i}$, namely
\begin{equation}
(0,0,vG)+vG(I,F,I)=(vG,0,0).
\end{equation}
In other words, the gauge operators $U_L$ have equivalent representatives over all $C_i$ for odd $i$.

We begin the inductive proof starting with $L = 1$. If $\mathcal{G}$ has $n$ physical qubits, then we have $k = n - \text{rank}(H^X) - \text{rank}(H^Z)$. Then $\mathcal{G}_X^1$ has $n + |C_0|$ physical qubits. The rows of $\text{rank}(H^X_1)$ corresponding to $V_1$ form a linearly independent set (owing to their support on $V_0$), and no product of them can be in $H^X$. Therefore $\text{rank}(H^X_1) = \text{rank}(H^X) + |V_1|$.

Now we compute the rank of $H_{1}^Z$. We consider the null space $\mathrm{null}(H_1^Z) = \{c \text{ s.t. } cH_1^Z = 0\}$, and show that it has the same dimension as $\rn(H^Z)$. Suppose $u\subset U_1$ and $v\subset \mc{C}_Z$ are $Z$ checks such that their product is trivial. Then:
\begin{align}
 0 =\begin{bmatrix}u & v\end{bmatrix} H_{1}^Z  = \begin{bmatrix}u & v\end{bmatrix} \begin{bmatrix} 0 & G \\ H^Z & J_{C_1}  \end{bmatrix}  =  u \begin{bmatrix} 0 & G \end{bmatrix}  + v\begin{bmatrix}H^{Z}  & J_{C_1} \end{bmatrix}
\end{align}
implying $v H^Z = 0$, so $|\rn(H_{1}^Z)| \leq |\rn(H^Z)|.$ Conversely, suppose that $v \in \rn(H^Z)$. We claim there exists a vector $u$ such that together $u,v$ are in $\rn(H_{1}^Z)$. By the calculation above, this demands that 
and $u G = v J_{C_1}$, where we can view $v J_{C_1}$ as a $Z$ Pauli product on $C_1$. This Pauli operator is in the stabilizer by construction, and must hence commute with the $X$ stabilizers supported on $V_1$. The corresponding submatrix of $H_{1}^X$ is $\begin{bmatrix}J_{V_1} & F^\top\end{bmatrix}$, so commutativity is equivalent to $F^\top (v J_{C_1})^\top = 0$, so $(vJ_{c_1})F = 0$. Because the rows of $G$ span $\rn(F)$, there must exist a $u$ such that $u G = v J_{C_1}$. We conclude that $|\rn(H_1^Z)| = |\rn(H^Z)|.$ Hence:
\begin{align}
\text{rank}(H_{1}^Z) = (m_z + |U_1|) - \text{dim}(\rn(H^Z)) = \text{rank}(H^Z) + |U_1|.
\end{align}

The size of $U_1$ is the rank of $G$. Observe that the rank of $F$ is $|V|-1$, since we assumed that the only logical operator supported on $V$ is $\bar{X}_M$.
Therefore $|U_1| = \rank(G) = |C_0| - \rank(F) = |C_0| - |V_0| + 1$,
and the number logical qubits in $\mathcal{G}_X^1$ is
\begin{align}
n + |C_0| - \text{rank}(H_{1}^X) - \text{rank}(H_{1}^Z) &= n - |V_0| + |C_0| - \text{rank}(H^X) - (\text{rank}(H^Z) + |U_1|)  \\
&=  k -  |V_0| + |C_0| - |U_1| \\
&=  k -  |V_0| + |C_0| - (|C_0| - |V_0| + 1)  = k-1.
\end{align}

We proceed with the inductive step: assuming that $\mathcal{G}^{L-2}_X$ has $k-1$ logical qubits, we show the same for $\mathcal{G}^{L}_X$. The newly added set of $X$ checks $V_L$ is linearly independent. A product of any subset of $V_L$ will have non-trivial support on $C_L$, and while $H_X(V_L) = X(V_{L-1})$ has no support on $C_L$, by assumption it cannot be a stabilizer in $H^X_{L-2}$. Similarly, the newly added $Z$ checks $C_{L-1}$ is linearly independent, and their products always have support on $C_{L}$. We conclude that $\mathcal{G}_X^L$ has $k-1$ logical qubits. 

In fact, these $k-1$ qubits are the remaining logical qubits from $\mc{G}$. For any $\bar{Z}$ that commutes with $\bar{X}_M$, its support on $V_0$ is even in size, and 
as a consequence of Lemma~\ref{lem:supportlemma}, we can clean its support from $V_0$ by multiplying by $Z$ checks of $\mathcal{G}$. This implies some $\bar{Z}'\equiv_{\mathcal{G}}\bar{Z}$ supported only on $R$, which is then also a logical operator of $\mathcal{G}_X$. For a logical operator $\bar{X}\not\equiv_{\mathcal{G}}\bar{X}_M$ of $\mathcal{G}$ supported only on $V_0\cup R$, notice that $\bar{X}$ is also a logical operator of $\mathcal{G}_X$. It is nontrivial in $\mathcal{G}_X$ because there was some $\bar{Z}$ that anticommuted with $\bar{X}$ in $\mathcal{G}$ and commuted with $\bar{X}_M$, and therefore the construction of $\bar{Z}'$ still anticommutes with $\bar{X}$ in $\mathcal{G}_X$.
\end{proof}

\subsection{Proofs for the \texorpdfstring{$Y$}{Y} Ancilla System}\label{apx:Yproofs}
\Ylogicalqubit*
\begin{proof}
We omit the proof of commutativity, since all checks commute as they do in the $X$ and $Z$ measurement ancilla patches, respectively. Similarly, it is straightforward to see that the operator $\bar{X}_M\bar{Z}_M$ is a product of  stabilizers in the final code, as we can take the product of all checks in the interface and two modules. We prove the number of logical qubits by an induction argument on the rank of the symplectic parity check matrix $\mc{H}$. This matrix describes parity checks as row vectors of length $2n$, where the first $n$ bits specify the locations of Pauli $X$ and the last $n$ specify the locations of Pauli $Z$, which is standard for non-CSS codes. One can write $\mc{H}$ down explicitly using Figure~\ref{fig:Y_ancilla}.

We first note that Lemma~\ref{lem:Y_system_gauges} implies that there are as many linearly independent checks in $U^B$ as there are qubits in $B$. Therefore, the addition of $U^B$ and $B$ does not change the logical qubit count, so next we analyze the $Y$-system without them. This eliminates $|B|$ rows and $|B|$ columns from $\mc{H}$.

To begin the induction, consider the case $L = 1$, where we have $n + |C^X| + |C^Z|$ many physical qubits. We compute the dimension of the row nullspace of $\mc{H}$, $\rn(\mc{H})$.
Suppose there is a set of checks whose product acts trivially on all qubits. 
First, observe that the check corresponding to $q_1$ cannot be in this set. This is because the check on $q_1$ acts as $Y$ on the qubit $q_0$, which means for the product of checks to act as identity on $q_0$, we need a collection of $X$-checks on $\mc{G}$ that multiply to identity on all of $\mc{G}$ except $W_0^Y$, and similarly a collection of $Z$-checks from $\mc{G}$ that multiply to identity on all of $\mc{G}$ except $q_0$. Such collections cannot exist as their product will anti-commute. Let $\mc{H}'$ be the matrix $\mc{H}$ with the row $q_1$ removed, we see that $\rn(\mc{H}) = \rn(\mc{H}')$.

With all the non-CSS check removed, we can now bound $\dim(\rn(\mc{H}'))$ by computing the redundancies in $X$-checks and $Z$-checks separately. Using the same argument as in the proof of Theorem~\ref{thm:validXancilla}, we have
\[
\dim(\rn(\mc{H})) = \dim(\rn(\mc{H}')) = \dim(\rn(H^X)) + \dim(\rn(H^Z)). 
\]
Therefore, 
\begin{align*}
    \rank(\mc{H}) 
    &= m_X+m_Z+|V^X|+|V^Z|-1+|U^X|+|U^Z| - \dim(\rn(H^X)) - \dim(\rn(H^Z)) \\
    &= \rank(H^X) + \rank(H^Z) + |C^X| + |C^Z| + 1,
\end{align*}
which implies the number of logical qubits is $k-1$. 

Inductively, we make the same observation as in the proof of Theorem~\ref{thm:validXancilla}: the newly added set of $X$ checks $V^X_L$ is linearly independent. Any $H_X(v\in V^X_L)$ will have non-trivial support on $C_L$, and while $H_X(V^X_L) = X(V_{L-1})$ has no support on $C_L$, by assumption it cannot be a stabilizer in $\mc{G}^{L-2}_Y$. Similarly, the newly added $Z$ checks in $C^X_{L-1}$ are linearly independent, and their products always have support on $C_{L}$. We conclude that $\mc{G}^L_Y$ has $k-1$ logical qubits, which are the original logical qubits from $\mc{G}$.
\end{proof}

\Yfaultdistance*
\begin{proof}
We will show that $d(\bar{X}_M\ML\MU)\ge d$. The proof for $d(\bar{Z}_M\ML\MU)\ge d$ follows similarly. 
As before, let $\mathcal{G}_X$ and $\mathcal{G}_Z$ denote the merged codes of the $X$- and $Z$-systems, respectively. 
For an operator $\bar{P}$, let $P_X, P_Z$ denote its $X$ and $Z$ components, and let $P_X' = P_X\vert_{\MG_X}, P_Z' = P_Z\vert_{\MG_Z\cup B}$. 
Due to the generators of $\init$, we have
\begin{align}
    d(\bar{X}_M\ML\MU) = \min_{\bar{P}\in \bar{X}_M\ML\MS_Y} \MR_{\init}(\bar{P}) = \min_{\bar{P}\in \bar{X}_M\ML\MS_Y} |P_X'| + |P_Z'|.
\end{align}
Denote by $q_1^X$ and $q_1^Z$ the $X$ and $Z$ parts of the check $q_1=q_1^Xq_1^Z$, we see that $\bar{X}_M\equiv_{\MS_Y} Z(B)q_1^Z$. 
Consider stabilizers in the ancilla system, we observe that all gauge checks and all checks on even layers does not affect $|P_X'|$ or $|P_Z'|$. Therefore, let $\tilde{S} = \langle \MS_O, V^X_{i\text{ odd}}, V^Z_{i\text{ odd}}, q_1\rangle$, we have
\begin{align}
    \min_{\bar{P}\in \bar{X}_M\ML\MS_Y} |P_X'| + |P_Z'| = \MR_{\ML\tilde{S}}(Z(B)q_1^Z).
\end{align}
It suffices for us to show that the right hand side is at least $d$.
Suppose we multiply $Z(B)q_1^Z$ by a collection of checks $v_i\in V_i^X$, $u_j\in V_j^Z$ for $i,j$ odd, and $(q_1)^s$ for $s = 0$ or $1$. Denote this new operator as $P$, namely, 
\begin{align}
    P = Z(B)q_1^Z \prod_{\substack{i=1\\i\text{ odd}}}^{L_X} \MH(v_i) \cdot \prod_{\substack{j=1\\j\text{ odd}}}^{L_Z} \MH(u_j)\cdot (q_1)^s.
\end{align}
To simplify notation, we collect the terms $q_1^Z$ and $(q_1)^s$ in the following way: 
if $s = 0$, we add vertex $q_1$ to $u_1$ (because effectively $P$ has a factor of $q_1^Z$); if $s = 1$, we add vertex $q_1$ to $v_1$ (because effectively $P$ has a factor of $q_1^X$). Then by the Expansion Lemma
\begin{align}
    |P|
    &\ge |B + S^Xv_{1} + S^Zu_{1}| + |v_m + v_{L_X}| + |u_m + u_{L_Z}| + |v_1 + v_m| + |u_1 + u_m| \notag \\
    &+ \min(|v_m|, |\bar{X}| - |v_m|) + \min(|u_m|, |\bar{Z}| - |u_m|) + |v_1| + |u_1|.
\end{align}
$P$ now has $X$-support $v_1\in V_0^X$ and $Z$-support $u_1\in V_0^Z$. 
If we further multiply $P$ by stabilizers in $\MS_O$ or operators in $\ML$, this support could be reduced to $\MR_{\ML\MS_O}(X(v_1)Z(u_1))$. Therefore, we have
\begin{align}
    \MR_{\ML\MS_O}(P) 
    &\ge |B + S^Xv_{1} + S^Zu_{1}| + |v_m + v_{L_X}| + |u_m + u_{L_Z}| + |v_1 + v_m| + |u_1 + u_m| \notag \\
    &+ \min(|v_m|, |\bar{X}| - |v_m|) + \min(|u_m|, |\bar{Z}| - |u_m|) + \MR_{\ML\MS_O}(X(v_1)Z(u_1)).
\end{align}
We do a case analysis based on $|v_m|, |u_m|$.
\begin{enumerate}
    \item If $|v_m| \le |\bar{X}| - |v_m|$ and $|u_m| \le |\bar{Z}| - |u_m|$, we have
    \begin{align}
        \MR_{\ML\MS_O}(P)
        &\ge |B + S^{X}v_{1} + S^{Z}u_{1}| + |v_1 + v_m| + |u_1 + u_m| + |v_m| + |u_m|\\
        &\ge |B + S^{X}v_{1} + S^{Z}u_{1}| + |v_1| + |u_1| \\
        &\ge |B| \ge d.
    \end{align}
    \item If $|v_m| \ge |\bar{X}| - |v_m|$ and $|u_m| \le |\bar{Z}| - |u_m|$, we have
    \begin{align}
        \MR_{\ML\MS_O}(P)
        &\ge |u_1 + u_m| + |u_m| + |\bar{X}| - |v_m| + \MR_{\ML\MS_O}(v_1u_1)\\
        &\ge |u_1| + |\bar{X}| - |v_1| + \MR_{\ML\MS_O}(v_1u_1).
    \end{align}
    Note that $\MR_{\ML\MS_O}(v_1u_1) + |u_1| \ge \MR_{\ML\MS_O}(v_1)$. Therefore $\MR_{\ML\MS_O}(P)\ge |\bar{X}| - |v_1| + \MR_{\ML\MS_O}(v_1)\ge \MR_{\ML\MS_O}(\bar{X}) \ge d$.
    \item The case $|v_m| \le |\bar{X}| - |v_m|$ and $|u_m| \ge |\bar{Z}| - |u_m|$ follows similarly.
    \item If $|v_m| \ge |\bar{X}| - |v_m|$ and $|u_m| \ge |\bar{Z}| - |u_m|$, we have
    \begin{align}
        \mathcal{R}_{\mathcal{S}}(P')
        &\ge |\bar{X}| - |v_m| + |\bar{Z}| - |u_m| + \MR_{\ML\MS_O}(v_1u_1)\\
        &\ge |\bar{X}| - |v_1| + |\bar{Z}| - |u_1| + \MR_{\ML\MS_O}(v_1u_1) \\
        &\ge \MR_{\ML\MS_O}(\bar{X}\cdot \bar{Z}) \ge d. 
    \end{align}
\end{enumerate}
Therefore we have $d(\bar{X}_M\ML\MU) = \MR_{\ML\tilde{S}}(Z(B)q_1^Z) \ge d$, as desired. 
\end{proof}

\subsection{Proofs for the Multi-block \texorpdfstring{$XX$}{XX} Ancilla System}\label{apx:XXproofs}
\XXtwoblocks*
\begin{proof}
Let $\mathcal{S}$ be the stabilizer group of the original codes and each ancilla qubit, including bridge qubits, in $\ket{0}$, and $\mathcal{S}_{XX}$ be the stabilizer group of the joint merged code, Figure~\ref{fig:XX_separate_codes}. 
Also let $\mathcal{L}$ be the group of logical operators of the unmeasured $2k-1$ qubit logical space and $\mathcal{L}^*=\mathcal{L}\setminus\{I\}$. 
The versions of Lemmas~\ref{lem:subsystem_codes_and_fault_distance} and \ref{lem:CSS_subsystem_case} that apply to the joint code imply that the measurement fault distance is $\min(R,d_Z(\bar{Z}_1\mathcal{L}\mathcal{S}))$ and the logical fault distance is $\min(d_Z(\mathcal{L}^*\mathcal{S}),d_X(\mathcal{L}^*\mathcal{S}_{XX}))$. The original code $\mathcal{S}$ has code distance at least $d$ -- if it did not, then the individual systems would not have fault distances at least $d$ -- and so $d_Z(\bar{Z}_1\mathcal{L}\mathcal{S})\ge d$ and $d_Z(\mathcal{L}^*\mathcal{S})\ge d$. To complete the proof, it remains only to show that the $X$-distance of the joint merged code is good, i.e.~$d_X(\mathcal{L}^*\mathcal{S}_{XX})\ge d$.

Suppose we have a logical $X$ operator $P$ from the merged code with stabilizer group $\mathcal{S}_{XX}$. We aim to show $P$ is either trivial or has weight at least $d$.

Let $Q_1$ (resp.~$Q_2$) denote all qubits of the merged code for measuring $\bar{X}_1$ out of $\mathcal{G}_1$ (resp.~measuring $\bar{X}_2$ out of $\mathcal{G}_2$). Then, it is not hard to see $P_1=P|_{Q_1}$ and $P_2=P|_{Q_2}$ are logical operators (possibly trivial) of their respective merged codes. This is because they commute with all the checks of those codes. 

If either $P_1$ or $P_2$ are nontrivial in their respective merged codes, then, by assumption, their weight and thus the weight of $P$ is at least $d$. So assume both $P_1$ and $P_2$ are trivial.

This implies we can find $P'\equiv_{\mathcal{S}_{XX}}P$ such that $P'$ is supported only on the bridge qubits $B$. Since $P'$ must also commute with all checks in $U^B$ and they have the structure of a repetition code on $B$ (Lemma~\ref{lem:XX_system_gauges}) this implies $P'=X(B)$ or $I$. Obviously, the latter case implies $P$ is a stabilizer of $\mathcal{S}_{XX}$, so we proceed with the case $P'=X(B)$. Notice that $|P'|=|B|=\min(|\bar{X}_1|,|\bar{X}_2|)\ge d$. If we can show that the weight of $X(B)$ cannot be reduced by multiplying by stabilizers, then also the weight of $P$ is at least $|B|\ge d$.

Let $S_1$ be a product of $X$ checks from the top half of Figure~\ref{fig:XX_separate_codes}. That is, from the sets of checks $\mathcal{C}^{X(1)}$ and $V_j^{(1)}$ for odd $j$. We may write $S_1=X(b_1\in B)X(e_1)$ where $e_1$ is supported only on qubits in $\mathcal{V}^{(1)}_0$, $V_i^{(1)}$ for even $i$, and $C_j^{(1)}$ for odd $j$. Likewise, define $S_2=X(b_2\in B)X(e_2)$ as a product of checks from the bottom half of Figure~\ref{fig:XX_separate_codes}. 

This means that if $P''=S_1S_2X(B)$ were to have less weight than $X(B)$, then $|e_1|+|e_2|=|e_1+e_2|<|b_1+b_2|\le|b_1|+|b_2|$, where we used the fact that $e_1$ and $e_2$ have disjoint support for the first equality. In particular, it would have to be that either $|b_1|>|e_1|$ or $|b_2|>|e_2|$. Therefore, to show $|P''|\ge|B|$, it suffices to show that neither $S_1X(B)$ nor $S_2X(B)$ has less weight than $|B|$ for all possible choices of $S_1$ and $S_2$.

These are similar arguments so we just provide a proof that $|S_1X(B)|\ge|B|\ge d$. Although $X(B)\equiv_{\mathcal{S}_{XX}}\bar{X}_1$, the fact that the $\bar{X}_1$ system has code distance at least $d$ is insufficient because $\bar{X}_1$ is actually a stabilizer in the merged code that measures it. Instead, let 
\begin{equation}
S_1 = H_X(r\in \mathcal{C}^{X(1)})\prod_{\substack{j=1\\j\text{ odd}}}^LH_X(v_j\in V_j^{(1)}),
\end{equation}
for arbitrary choices of $r$ and $v_j$. Splitting the $X$-check matrix of $\mathcal{G}_1$ into $M : \mathcal{C}^{X(1)}\to_X V_0^{(1)}$ and $\mc{M} : \mathcal{C}^{X(1)}\to_X \mc{V}_0^{(1)}$, we can calculate
\begin{align}\label{eq:1st_set_of_factors}
X(B)S_1 = &X(\vec1+v_LS^{(1)}\in B)\left(\prod_{\substack{i=2\\i\text{ even}}}^{L-1}X(v_{i+1}+v_{i-1}\in V_i^{(1)})\right)X(v_1+r M\in V_0^{(1)})X(r \mc{M}\in \mathcal{V}_0^{(1)})\\\label{eq:2nd_set_of_factors}
&\times \left(\prod_{\substack{j=1\\j\text{ odd}}}^LX(v_jF^{(1)\top}\in C_j^{(1)})\right).
\end{align}
Since $|X(B)S_1|$ is at least the weight of factors in line~\ref{eq:1st_set_of_factors}, we proceed by ignoring the factors in line~\ref{eq:2nd_set_of_factors}.

We note that $|X(\vec1+v_L S^{(1)}\in B)|=|\vec1+v_L S^{(1)}|\ge|B|-|v_L|$. Using this and the triangle inequality repeatedly, we find that $|X(B)S_1|\ge|B|-|v_L|+|v_L+r M|+|r \mc{M}|$. Applying the triangle inequality again, $|X(B)S_1|\ge |B|-|r M|+|r \mc{M}|$. Since $\bar{X}_1$ is minimal weight in its equivalence class, it must be that $|r M|\le|r \mc{M} |$ for all $r$, which means $|X(B)S_1|\ge|B|=|X(B)|$ for all $S_1$.
\end{proof} 
\begin{sidewaysfigure}[p]
    \centering
    \includegraphics[width=0.85\textwidth]{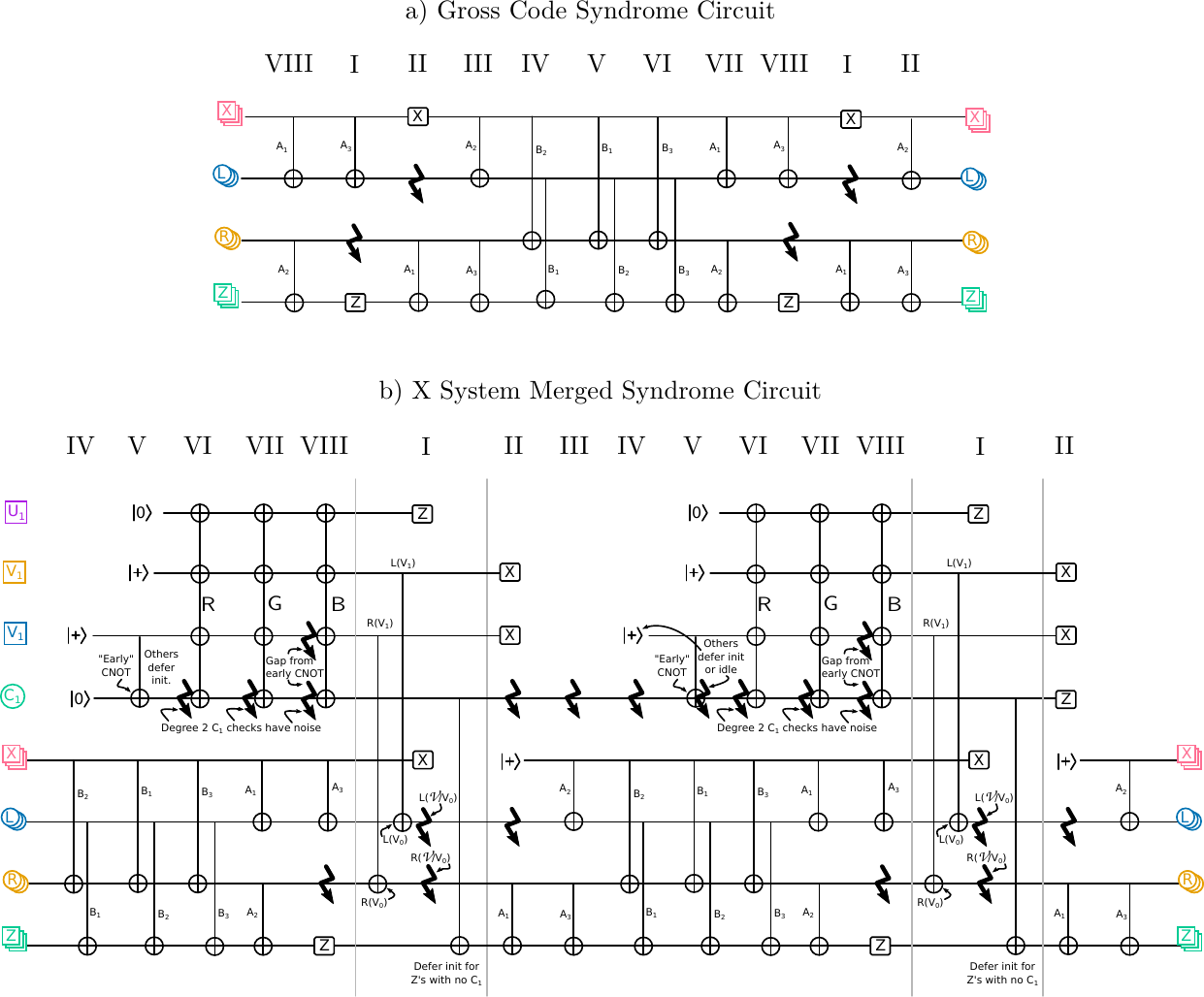}
    \caption{Syndrome circuits used for the numerical experiments. a) The original syndrome circuit for the gross code from \cite{bravyi2024highthreshold}. b) Two cycles of the merged syndrome circuit for an $X$ ancilla system. The gates marked $R$,$G$,$B$ denote single layers of CNOTs determined by an edge-coloring of Figure~\ref{fig:gross_ancilla_spec}, corrected by the procedure in Figure~\ref{fig:coloring_circuits}. Lightning bolts denote idle errors.}
    \label{fig:syndromecircuits}
\end{sidewaysfigure}

\begin{figure}[p]
    \centering
    \includegraphics[width=\textwidth]{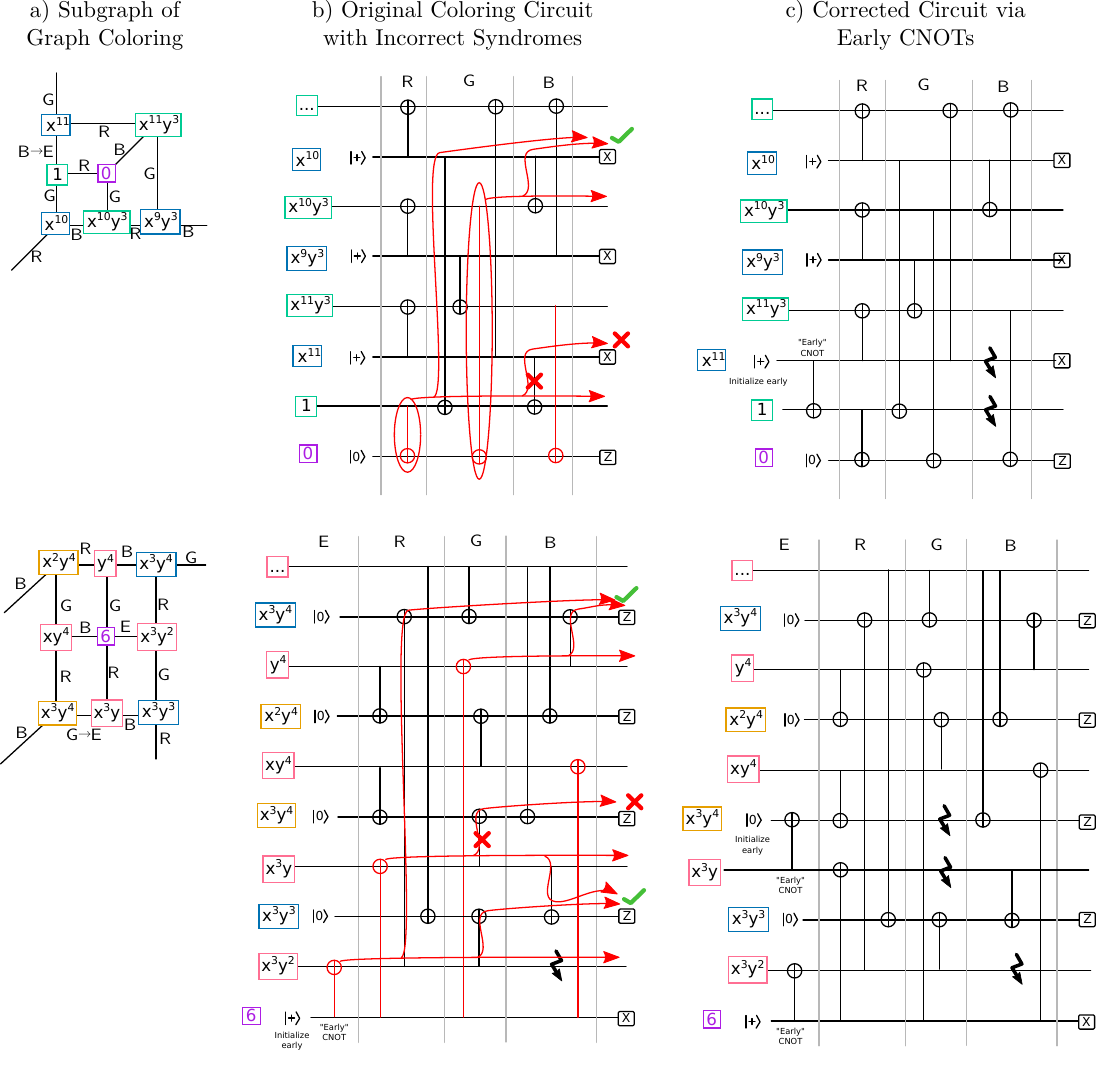}
    \caption{Procedure for correcting the syndrome circuit of the $X$ and $Z$ ancilla systems (top and bottom respectively) from the edge-coloring. a) Subregion of the Tanner graph in Figure~\ref{fig:gross_ancilla_spec} with a mixture of $X$ and $Z$ checks, hence resulting in incorrectly computed syndromes. b) Incorrect circuit with the support of the gauge stabilizer propagated to the end of the circuit. We see that this stabilizer is now supported on some of the \emph{check} qubits, which is wrong. We see that there is a particular CNOT gate that is responsible. c) Corrected circuit where the responsible CNOT is moved to the front. Lightning bolts denote idle errors.}
    \label{fig:coloring_circuits}
\end{figure}

\newpage
\sloppy
\bibliographystyle{alpha}
\bibliography{main}

\newcommand{\etalchar}[1]{$^{#1}$}
\begin{thebibliography}{XBAP{\etalchar{+}}24}

\bibitem[ABO97]{aharonov2008ft}
D.~Aharonov and M.~Ben-Or.
\newblock {Fault-tolerant quantum computation with constant error}.
\newblock page 176–188, 1997.

\bibitem[AGP06]{aliferis2005threshold}
Panos Aliferis, Daniel Gottesman, and John Preskill.
\newblock {Quantum accuracy threshold for concatenated distance-3 codes}.
\newblock {\em Quantum Info. Comput.}, 6(2):97–165, mar 2006.

\bibitem[AI23]{google2023suppressing}
Google~Quantum AI.
\newblock {Suppressing quantum errors by scaling a surface code logical qubit}.
\newblock {\em Nature}, 614(7949):676–681, feb 2023.

\bibitem[BB24]{breuckmann2024foldtransversal}
Nikolas~P. Breuckmann and Simon Burton.
\newblock {Fold-{T}ransversal {C}lifford {G}ates for {Q}uantum {C}odes}.
\newblock {\em {Quantum}}, 8:1372, jun 2024.

\bibitem[BCG{\etalchar{+}}24]{bravyi2024highthreshold}
S.~Bravyi, A.~W. Cross, J.~M. Gambetta, D.~Maslov, P.~Rall, and T.~J. Yoder.
\newblock {High-threshold and low-overhead fault-tolerant quantum memory}.
\newblock {\em Nature}, 627(8005):778–782, mar 2024.

\bibitem[BDS{\etalchar{+}}25]{berthusen2024local}
Noah Berthusen, Dhruv Devulapalli, Eddie Schoute, Andrew~M. Childs, Michael~J. Gullans, Alexey~V. Gorshkov, and Daniel Gottesman.
\newblock Toward a 2{D} local implementation of quantum low-density parity-check codes.
\newblock {\em PRX Quantum}, 6:010306, Jan 2025.

\bibitem[BE21a]{breuckmann2021balanced}
Nikolas~P. Breuckmann and Jens~N. Eberhardt.
\newblock {Balanced Product Quantum Codes}.
\newblock {\em IEEE Transactions on Information Theory}, 67(10):6653--6674, 2021.

\bibitem[BE21b]{breuckmann2021ldpc}
Nikolas~P. Breuckmann and Jens~Niklas Eberhardt.
\newblock {Quantum Low-Density Parity-Check Codes}.
\newblock {\em PRX Quantum}, 2:040101, Oct 2021.

\bibitem[BHK24]{beverland2024fault}
M.~E. Beverland, S.~Huang, and V.~Kliuchnikov.
\newblock {Fault tolerance of stabilizer channels}.
\newblock {\em arXiv preprint arXiv:2401.12017}, 2024.

\bibitem[BK98]{bravyi1998codes}
S.~B. Bravyi and A.~Yu. Kitaev.
\newblock {Quantum codes on a lattice with boundary}.
\newblock {\em arXiv:quant-ph/9811052}, 1998.

\bibitem[BK05]{bravyi2005magic}
Sergey Bravyi and Alexei Kitaev.
\newblock {Universal quantum computation with ideal Clifford gates and noisy ancillas}.
\newblock {\em Phys. Rev. A}, 71:022316, Feb 2005.

\bibitem[BMT{\etalchar{+}}22]{beverland2022assessing}
M.~E. Beverland, P.~Murali, M.~Troyer, K.~M. Svore, T.~Hoefler, V.~Kliuchnikov, G.~H. Low, M.~Soeken, A.~Sundaram, and A.~Vaschillo.
\newblock {Assessing requirements to scale to practical quantum advantage}.
\newblock {\em arXiv:2211.07629}, 2022.

\bibitem[BPT10]{bravyi2010tradeoffs}
Sergey Bravyi, David Poulin, and Barbara Terhal.
\newblock {Tradeoffs for Reliable Quantum Information Storage in 2D Systems}.
\newblock {\em Phys. Rev. Lett.}, 104:050503, Feb 2010.

\bibitem[BSS16]{Bravyi_2016}
S.~Bravyi, G.~Smith, and J.~A. Smolin.
\newblock {Trading Classical and Quantum Computational Resources}.
\newblock {\em Physical Review X}, 6(2), jun 2016.

\bibitem[BVC{\etalchar{+}}17]{breuckmann2017hyperbolic}
Nikolas~P Breuckmann, Christophe Vuillot, Earl Campbell, Anirudh Krishna, and Barbara~M Terhal.
\newblock {Hyperbolic and semi-hyperbolic surface codes for quantum storage}.
\newblock {\em Quantum Science and Technology}, 2(3):035007, aug 2017.

\bibitem[Cal76]{Callan1976TheGO}
David Callan.
\newblock {The generation of Sp(F2) by transvections}.
\newblock {\em Journal of Algebra}, 42(2):378–390, oct 1976.

\bibitem[CC22]{Chamberland_2022}
C.~Chamberland and E.~T. Campbell.
\newblock {Universal Quantum Computing with Twist-Free and Temporally Encoded Lattice Surgery}.
\newblock {\em PRX Quantum}, 3(1), feb 2022.

\bibitem[CHWY25]{cowtan2025parallel}
Alexander Cowtan, Zhiyang He, Dominic~J. Williamson, and Theodore~J. Yoder.
\newblock {Parallel Logical Measurements via Quantum Code Surgery}.
\newblock {\em arXiv preprint arXiv:2503.05003}, 2025.

\bibitem[CKBB22]{cohen2022low}
Lawrence~Z. Cohen, Isaac~H. Kim, Stephen~D. Bartlett, and Benjamin~J. Brown.
\newblock {Low-overhead fault-tolerant quantum computing using long-range connectivity}.
\newblock {\em Science Advances}, 8(20), may 2022.

\bibitem[Cow24]{cowtan2024ssip}
A.~Cowtan.
\newblock {SSIP: automated surgery with quantum LDPC codes}.
\newblock {\em arXiv:2407.09423}, 2024.

\bibitem[CTV17]{Campbell2017Roads}
Earl~T. Campbell, Barbara~M. Terhal, and Christophe Vuillot.
\newblock {Roads towards fault-tolerant universal quantum computation}.
\newblock {\em Nature}, 549(7671):172–179, sep 2017.

\bibitem[DA07]{DiVincenzo_2007}
D.~P. DiVincenzo and P.~Aliferis.
\newblock {Effective Fault-Tolerant Quantum Computation with Slow Measurements}.
\newblock {\em Physical Review Letters}, 98(2), jan 2007.

\bibitem[DHLV23]{dinur2023good}
Irit Dinur, Min-Hsiu Hsieh, Ting-Chun Lin, and Thomas Vidick.
\newblock {Good Quantum LDPC Codes with Linear Time Decoders}.
\newblock In {\em Proceedings of the 55th Annual ACM Symposium on Theory of Computing}, STOC 2023, page 905–918, New York, NY, USA, 2023. Association for Computing Machinery.

\bibitem[DKLP02]{dennis2002memory}
Eric Dennis, Alexei Kitaev, Andrew Landahl, and John Preskill.
\newblock {Topological quantum memory}.
\newblock {\em Journal of Mathematical Physics}, 43(9):4452–4505, sep 2002.

\bibitem[FGL18]{fawzi2020constant}
Omar Fawzi, Antoine Grospellier, and Anthony Leverrier.
\newblock {Constant Overhead Quantum Fault-Tolerance with Quantum Expander Codes}.
\newblock In {\em 2018 {IEEE} 59th Annu. Symp. Found. Comput. Scie. (FOCS)}, Oct 2018.

\bibitem[FH21]{freedman2021building}
Michael Freedman and Matthew Hastings.
\newblock {Building manifolds from quantum codes}.
\newblock {\em Geometric and Functional Analysis}, 31(4):855--894, 2021.

\bibitem[FMMC12]{fowler2012surface}
Austin~G. Fowler, Matteo Mariantoni, John~M. Martinis, and Andrew~N. Cleland.
\newblock {Surface codes: Towards practical large-scale quantum computation}.
\newblock {\em Phys. Rev. A}, 86:032324, Sep 2012.

\bibitem[FSG09]{fowler2009high}
Austin~G Fowler, Ashley~M Stephens, and Peter Groszkowski.
\newblock High-threshold universal quantum computation on the surface code.
\newblock {\em Physical Review A—Atomic, Molecular, and Optical Physics}, 80(5):052312, 2009.

\bibitem[GC99]{gottesman1999teleport}
Daniel Gottesman and Isaac~L. Chuang.
\newblock {Demonstrating the viability of universal quantum computation using teleportation and single-qubit operations}.
\newblock {\em Nature}, 402(6760):390–393, nov 1999.

\bibitem[GCR24]{gong2024decoding}
A.~Gong, S.~Cammerer, and J.~M. Renes.
\newblock {Toward Low-latency Iterative Decoding of QLDPC Codes Under Circuit-Level Noise}.
\newblock {\em arXiv:2403.18901}, 2024.

\bibitem[GE21]{gidney2021factoring}
Craig Gidney and Martin Eker{\aa{}}.
\newblock {How to factor 2048 bit {RSA} integers in 8 hours using 20 million noisy qubits}.
\newblock {\em {Quantum}}, 5:433, apr 2021.

\bibitem[Gid21]{gidney2021stim}
C.~Gidney.
\newblock {Stim: a fast stabilizer circuit simulator}.
\newblock {\em {Quantum}}, 5:497, jul 2021.

\bibitem[Got97]{gottesman1997thesis}
D.~Gottesman.
\newblock {Stabilizer Codes and Quantum Error Correction}.
\newblock {\em arXiv:quant-ph/9705052}, 1997.

\bibitem[Got09]{gottesman2009intro}
D.~Gottesman.
\newblock {An Introduction to Quantum Error Correction and Fault-Tolerant Quantum Computation}.
\newblock {\em arXiv:0904.2557v1}, 2009.

\bibitem[Got14]{gottesman2014constant}
Daniel Gottesman.
\newblock {Fault-Tolerant Quantum Computation with Constant Overhead}.
\newblock {\em Quantum Info. Comput.}, 14(15–16):1338–1372, nov 2014.

\bibitem[Has21]{hastings2021weight}
M.~B. Hastings.
\newblock {On Quantum Weight Reduction}.
\newblock {\em arXiv:2102.10030}, 2021.

\bibitem[HB24]{higgott2023hyperbolic}
Oscar Higgott and Nikolas~P Breuckmann.
\newblock Constructions and performance of hyperbolic and semi-hyperbolic floquet codes.
\newblock {\em PRX Quantum}, 5(4):040327, 2024.

\bibitem[HCWY25]{he2025extractors}
Zhiyang He, Alexander Cowtan, Dominic~J Williamson, and Theodore~J Yoder.
\newblock {Extractors: QLDPC Architectures for Efficient Pauli-Based Computation}.
\newblock {\em arXiv preprint arXiv:2503.10390}, 2025.

\bibitem[HFDM12]{Horsman2012LatticeSurgery}
Dominic Horsman, Austin~G Fowler, Simon Devitt, and Rodney~Van Meter.
\newblock {Surface code quantum computing by lattice surgery}.
\newblock {\em New Journal of Physics}, 14(12):123011, dec 2012.

\bibitem[HG25]{higgott2023sparse}
Oscar Higgott and Craig Gidney.
\newblock {Sparse {B}lossom: correcting a million errors per core second with minimum-weight matching}.
\newblock {\em {Quantum}}, 9:1600, Jan 2025.

\bibitem[HJOY23]{huang2023homomorphic}
S.~Huang, T.~Jochym-O'Connor, and T.~J. Yoder.
\newblock {Homomorphic Logical Measurements}.
\newblock {\em PRX Quantum}, 4(030301), 2023.

\bibitem[IGND25]{ide2024fault}
Benjamin Ide, Manoj~G. Gowda, Priya~J. Nadkarni, and Guillaume Dauphinais.
\newblock Fault-tolerant logical measurements via homological measurement.
\newblock {\em Phys. Rev. X}, 15:021088, Jun 2025.

\bibitem[JO19]{tjoc2019gates}
Tomas Jochym-O'Connor.
\newblock {Fault-tolerant gates via homological product codes}.
\newblock {\em {Quantum}}, 3:120, feb 2019.

\bibitem[Kit97]{kitaev1997imperfect}
A.~Yu. Kitaev.
\newblock {\em {Quantum Error Correction with Imperfect Gates}}, pages 181--188.
\newblock Springer US, Boston, MA, 1997.

\bibitem[KLR{\etalchar{+}}22]{krinner2022repeated}
Sebastian Krinner, Nathan Lacroix, Ants Remm, Agustin Di~Paolo, Elie Genois, Catherine Leroux, Christoph Hellings, Stefania Lazar, Francois Swiadek, Johannes Herrmann, Graham~J. Norris, Christian~Kraglund Andersen, Markus M\"{u}ller, Alexandre Blais, Christopher Eichler, and Andreas Wallraff.
\newblock {Realizing repeated quantum error correction in a distance-three surface code}.
\newblock {\em Nature}, 605(7911):669–674, may 2022.

\bibitem[KLZ98]{knill1998resilient}
Emanuel Knill, Raymond Laflamme, and Wojciech~H. Zurek.
\newblock {Resilient Quantum Computation}.
\newblock {\em Science}, 279(5349):342–345, jan 1998.

\bibitem[Kni05]{knill2005detected}
E.~Knill.
\newblock {Scalable quantum computing in the presence of large detected-error rates}.
\newblock {\em Phys. Rev. A}, 71:042322, Apr 2005.

\bibitem[KP21]{krishna2021gates}
Anirudh Krishna and David Poulin.
\newblock {Fault-Tolerant Gates on Hypergraph Product Codes}.
\newblock {\em Phys. Rev. X}, 11:011023, Feb 2021.

\bibitem[LB18]{lavasani2018hyperbolic}
Ali Lavasani and Maissam Barkeshli.
\newblock {Low overhead Clifford gates from joint measurements in surface, color, and hyperbolic codes}.
\newblock {\em Phys. Rev. A}, 98:052319, Nov 2018.

\bibitem[Li15]{li2015magic}
Ying Li.
\newblock {A magic state’s fidelity can be superior to the operations that created it}.
\newblock {\em New Journal of Physics}, 17(2):023037, feb 2015.

\bibitem[Lit19]{Litinski_2019}
D.~Litinski.
\newblock {A Game of Surface Codes: Large-Scale Quantum Computing with Lattice Surgery}.
\newblock {\em Quantum}, 3:128, mar 2019.

\bibitem[LZ22]{leverrier2022tanner}
A.~Leverrier and G.~Zemor.
\newblock {Quantum Tanner codes}.
\newblock In {\em 2022 IEEE 63rd Annual Symposium on Foundations of Computer Science (FOCS)}, pages 872--883, Los Alamitos, CA, USA, 11 2022. IEEE Computer Society.

\bibitem[PGP{\etalchar{+}}25]{poole2024fast}
C.~Poole, T.~M. Graham, M.~A. Perlin, M.~Otten, and M.~Saffman.
\newblock Architecture for fast implementation of quantum low-density parity-check codes with optimized rydberg gates.
\newblock {\em Phys. Rev. A}, 111:022433, Feb 2025.

\bibitem[PK21]{panteleev2021degenerate}
Pavel Panteleev and Gleb Kalachev.
\newblock {Degenerate {Q}uantum {LDPC} {C}odes {W}ith {G}ood {F}inite {L}ength {P}erformance}.
\newblock {\em {Quantum}}, 5:585, nov 2021.

\bibitem[PK22]{panteleev2022good}
Pavel Panteleev and Gleb Kalachev.
\newblock {Asymptotically good Quantum and locally testable classical LDPC codes}.
\newblock In {\em Proceedings of the 54th Annual ACM SIGACT Symposium on Theory of Computing}, STOC 2022, page 375–388, New York, NY, USA, 2022. Association for Computing Machinery.

\bibitem[PNFB17]{poulsen2017fault}
Hendrik Poulsen~Nautrup, Nicolai Friis, and Hans~J. Briegel.
\newblock {Fault-tolerant interface between quantum memories and quantum processors}.
\newblock {\em Nature Communications}, 8(1), nov 2017.

\bibitem[Rof22]{Roffe_LDPC_Python_tools_2022}
J.~Roffe.
\newblock {LDPC: Python tools for low density parity check codes}, 2022.

\bibitem[RWBC20]{roffe_decoding_2020}
J.~Roffe, D.~R. White, S.~Burton, and E.~Campbell.
\newblock {Decoding across the quantum low-density parity-check code landscape}.
\newblock {\em Physical Review Research}, 2(4), Dec 2020.

\bibitem[SGI{\etalchar{+}}24]{sabo2024weight}
Eric Sabo, Lane~G. Gunderman, Benjamin Ide, Michael Vasmer, and Guillaume Dauphinais.
\newblock {Weight-Reduced Stabilizer Codes with Lower Overhead}.
\newblock {\em PRX Quantum}, 5:040302, Oct 2024.

\bibitem[SHR24]{scruby2024highthreshold}
T.~R. Scruby, T.~Hillmann, and J.~Roffe.
\newblock {High-threshold, low-overhead and single-shot decodable fault-tolerant quantum memory}.
\newblock {\em arXiv:2406.14445}, 2024.

\bibitem[SI05]{steane2005cssgates}
Andrew~M. Steane and Ben Ibinson.
\newblock {Fault-tolerant logical gate networks for Calderbank-Shor-Steane codes}.
\newblock {\em Phys. Rev. A}, 72:052335, Nov 2005.

\bibitem[SJOY24]{swaroop2024universal}
Esha Swaroop, Tomas Jochym-O'Connor, and Theodore~J Yoder.
\newblock {Universal adapters between quantum LDPC codes}.
\newblock {\em arXiv preprint arXiv:2410.03628}, 2024.

\bibitem[SYK{\etalchar{+}}23]{sundaresan2023multiround}
Neereja Sundaresan, Theodore~J. Yoder, Youngseok Kim, Muyuan Li, Edward~H. Chen, Grace Harper, Ted Thorbeck, Andrew~W. Cross, Antonio~D. Córcoles, and Maika Takita.
\newblock {Demonstrating multi-round subsystem quantum error correction using matching and maximum likelihood decoders}.
\newblock {\em Nature Communications}, 14(1), may 2023.

\bibitem[TZ14]{tillich2014hgp}
Jean-Pierre Tillich and Gilles Zémor.
\newblock {Quantum LDPC Codes With Positive Rate and Minimum Distance Proportional to the Square Root of the Blocklength}.
\newblock {\em IEEE Transactions on Information Theory}, 60(2):1193--1202, 2014.

\bibitem[VLC{\etalchar{+}}19]{vuillot2019deformation}
Christophe Vuillot, Lingling Lao, Ben Criger, Carmen García~Almudéver, Koen Bertels, and Barbara~M Terhal.
\newblock {Code deformation and lattice surgery are gauge fixing}.
\newblock {\em New Journal of Physics}, 21(3):033028, mar 2019.

\bibitem[VXHB25]{voss2024trivariate}
Lukas Voss, Sim~Jian Xian, Tobias Haug, and Kishor Bharti.
\newblock Multivariate bicycle codes.
\newblock {\em Physical Review A}, 111(6):L060401, 2025.

\bibitem[VYL{\etalchar{+}}23]{viszlai2023bicycle}
J.~Viszlai, W.~Yang, S.~F. Lin, J.~Liu, N.~Nottingham, J.~M. Baker, and F.~T. Chong.
\newblock {Matching Generalized-Bicycle Codes to Neutral Atoms for Low-Overhead Fault-Tolerance}.
\newblock {\em arXiv:2311.16980}, 2023.

\bibitem[WB24]{wolanski2024ambiguity}
S.~Wolanski and B.~Barber.
\newblock {Ambiguity Clustering: an accurate and efficient decoder for qLDPC codes}.
\newblock {\em arXiv:2406.14527}, 2024.

\bibitem[WY24]{williamson2024low}
Dominic~J Williamson and Theodore~J Yoder.
\newblock {Low-overhead fault-tolerant quantum computation by gauging logical operators}.
\newblock {\em arXiv preprint arXiv:2410.02213}, 2024.

\bibitem[XBAP{\etalchar{+}}24]{xu2024constant}
Qian Xu, J.~Pablo Bonilla~Ataides, Christopher~A. Pattison, Nithin Raveendran, Dolev Bluvstein, Jonathan Wurtz, Bane Vasić, Mikhail~D. Lukin, Liang Jiang, and Hengyun Zhou.
\newblock {Constant-overhead fault-tolerant quantum computation with reconfigurable atom arrays}.
\newblock {\em Nature Physics}, 20(7):1084–1090, apr 2024.

\bibitem[XZZ{\etalchar{+}}25]{xu2024fast}
Qian Xu, Hengyun Zhou, Guo Zheng, Dolev Bluvstein, J.~Pablo~Bonilla Ataides, Mikhail~D. Lukin, and Liang Jiang.
\newblock Fast and parallelizable logical computation with homological product codes.
\newblock {\em Phys. Rev. X}, 15:021065, May 2025.

\bibitem[YSR{\etalchar{+}}25]{yoder2025tour}
Theodore~J Yoder, Eddie Schoute, Patrick Rall, Emily Pritchett, Jay~M Gambetta, Andrew~W Cross, Malcolm Carroll, and Michael~E Beverland.
\newblock Tour de gross: A modular quantum computer based on bivariate bicycle codes.
\newblock {\em arXiv preprint arXiv:2506.03094}, 2025.

\bibitem[ZL25]{zhang2024time}
Guo Zhang and Ying Li.
\newblock {Time-Efficient Logical Operations on Quantum Low-Density Parity Check Codes}.
\newblock {\em Physical Review Letters}, 134(7), February 2025.

\bibitem[ZSP{\etalchar{+}}23]{zhu2023gates}
G.~Zhu, S.~Sikander, E.~Portnoy, A.~W. Cross, and B.~J. Brown.
\newblock {Non-Clifford and parallelizable fault-tolerant logical gates on constant and almost-constant rate homological quantum LDPC codes via higher symmetries}.
\newblock {\em arXiv:2310.16982}, 2023.

\end{thebibliography}

\end{document}